%% file: main.tex
\PassOptionsToPackage{destlabel}{hyperref}
\PassOptionsToPackage{table}{xcolor}
\documentclass[acmsmall, nonacm, screen]{acmart}
\makeatletter
\AtBeginDocument{%
  \setlength\labelsep{4pt}%
  \setlength{\@ACM@labelwidth}{6.5pt}%
  \setlength\leftmargini{\z@}%
  \addtolength\leftmargini{\parindent}%
  \addtolength\leftmargini{2\labelsep}%
  \addtolength\leftmargini{\@ACM@labelwidth}%
  \setlength\leftmarginii{\z@}%
  \addtolength\leftmarginii{0.5\labelsep}%
  \addtolength\leftmarginii{\@ACM@labelwidth}%
  \setlength\leftmarginiii{\leftmarginii}%
  \setlength\leftmarginiv{\leftmarginiii}%
  \setlength\leftmarginv{\leftmarginiv}%
  \setlength\leftmarginvi{\leftmarginv}%
  \@listi}
\makeatother
\usepackage{pifont,multirow,stfloats,makecell}
\usepackage{multicol}
\usepackage{circledsteps}
\usepackage{amsmath}
\usepackage{amsthm}
\usepackage{mathpartir}
\usepackage{mathtools}
\usepackage[inline]{enumitem}
\usepackage{graphicx}
\usepackage[singlelinecheck=false]{caption}
\usepackage{subcaption}
\usepackage{wrapfig}
\usepackage{tikz}
\usetikzlibrary{
  shapes,
  arrows,
  positioning,
  fit,
  backgrounds,
  calc,
  decorations.pathreplacing,
  chains,
  scopes,
  arrows.meta
}
\usepackage{pdflscape}
\usepackage{afterpage}
\usepackage{fancyvrb}

\definecolor{ocaml}{HTML}{bfffbf}
\definecolor{rust}{HTML}{ffbfbf}
\definecolor{debian}{HTML}{bfffff}
\definecolor{alpine}{HTML}{bfbfff}
\definecolor{linux}{HTML}{cccccc}

\newcommand\bcircle[2][]{\ifmmode
    \Circled[fill color=white,inner color=black,#1]{\mathsf{#2}}
  \else
    \Circled[fill color=white,inner color=black,#1]{\sffamily#2}
  \fi
}

\newsavebox{\sfboxa}\newsavebox{\sfboxb}\newsavebox{\sfboxc}\newsavebox{\sfboxd}
\newlength{\sfrowht}
\newcommand{\sfmax}[1]{\ifdim\dimexpr\ht#1+\dp#1\relax>\sfrowht \setlength{\sfrowht}{\dimexpr\ht#1+\dp#1\relax}\fi}
\newcommand{\sfcell}[1]{\begin{minipage}[c][\sfrowht][c]{\linewidth}\centering#1\end{minipage}}
\newlength{\sfrowhtb}
\newcommand{\sfmaxb}[1]{\ifdim\dimexpr\ht#1+\dp#1\relax>\sfrowhtb \setlength{\sfrowhtb}{\dimexpr\ht#1+\dp#1\relax}\fi}
\newcommand{\sfcellb}[1]{\begin{minipage}[c][\sfrowhtb][c]{\linewidth}\centering#1\end{minipage}}

\newtheoremstyle{theorem}
  {4pt}         %
  {4pt}         %
  {\itshape}    %
  {\parindent}  %
  {\scshape}    %
  {.}           %
  { }           %
  {}            %

\newtheoremstyle{definition}
  {4pt}         %
  {4pt}         %
  {}            %
  {\parindent}  %
  {\scshape}    %
  {.}           %
  { }           %
  {}            %

\theoremstyle{definition}
\newtheorem{theorem}{Theorem}[section]
\newtheorem{definition}[theorem]{Definition}

\makeatletter
\@addtoreset{theorem}{subsection}
\renewcommand{\thetheorem}{%
  \ifnum\c@subsection>0
    \thesubsection.\arabic{theorem}%
  \else
    \thesection.\arabic{theorem}%
  \fi
}
\makeatother

\newlist{subdefinition}{enumerate}{4}
\setlist[subdefinition, 1]{
  label=(\alph*),
  ref=\thedefinition.\alph*,
}
\setlist[subdefinition, 2]{
  label=(\roman*),
  ref=\thedefinition.\alph{subdefinitioni}.\roman*,
}
\setlist[subdefinition, 3]{
  label=(\Alph*),
  ref=\thedefinition.\alph{subdefinitioni}.\roman{subdefinitionii}.\Alph*,
}
\setlist[subdefinition, 4]{
  label=(\arabic*),
  ref=\thedefinition.\alph{subdefinitioni}.\roman{subdefinitionii}.\Alph{subdefinitioniii}.\arabic*,
}

\setcopyright{cc}
\setcctype{by}
\begin{document}
\title{Package Managers \`a la Carte}
\subtitle{A Formal Model of Dependency Resolution}

\author{Ryan T. Gibb}
\orcid{0009-0009-5702-3143}
\affiliation{%
  \institution{University of Cambridge}
  \city{Cambridge}
  \country{United Kingdom}
}
\email{rtg24@cam.ac.uk}

\author{Patrick Ferris}
\orcid{0000-0002-0778-8828}
\affiliation{%
  \institution{University of Cambridge}
  \city{Cambridge}
  \country{United Kingdom}
}
\email{pf341@cam.ac.uk}

\author{David Allsopp}
\orcid{0009-0006-2998-8210}
\affiliation{%
  \institution{Jane Street}
  \city{London}
  \country{United Kingdom}
}
\email{dallsopp@janestreet.com}

\author{Thomas Gazagnaire}
\orcid{0009-0001-6893-4630}
\affiliation{%
  \institution{Tarides}
  \city{Paris}
  \country{France}
}
\email{thomas@tarides.com}

\author{Anil Madhavapeddy}
\orcid{0000-0001-8954-2428}
\affiliation{%
  \institution{University of Cambridge}
  \city{Cambridge}
  \country{United Kingdom}
}
\email{avsm2@cam.ac.uk}

\renewcommand{\shortauthors}{Ryan T. Gibb, Patrick Ferris, David Allsopp, Thomas Gazagnaire, and Anil Madhavapeddy}

\begin{abstract}
  Package managers are legion.
  Every programming language and operating system has its own solution, each with subtly different semantics for dependency resolution.
  This fragmentation prevents multilingual projects from expressing precise dependencies across language ecosystems; it leaves external system dependencies implicit and unversioned; and it obscures the full dependency graph that supply-chain analysis depends on.
  We present the \textit{Package Calculus}, a formalism for dependency resolution that unifies the core semantics of package managers.
  Through a series of formal reductions, we show how this core is expressive enough to model the diversity of real-world dependency expression languages.
  The calculus provides the theoretical foundation for future cross-ecosystem tooling, as a \textit{lingua franca} of dependency expression.
\end{abstract}

\maketitle

\input{body}

\end{document}

%% file: body.tex
\vfill

\section{Introduction}
\label{sec.introduction}

Package managers are numerous and diverse; despite all fundamentally solving the same problem, they remain ad hoc and non-interoperable.
Consider a machine-learning project on Debian Linux, using a combination of C, Rust, and OCaml code for a static binary, with dynamic Python bindings that use GPU drivers, which in turn depend on a particular kernel driver.
Four separate package managers are required to deploy this program: opam~(OCaml), Cargo~(Rust), pip~(Python), and APT~(Debian).
If the program is to be portable beyond Debian, even more are required: APK~(Alpine) and DNF~(Red Hat-based Linux).
Dependencies between these ecosystems are managed ad hoc: they are typically unversioned, duplicated across ecosystems, and consequently invisible to security tooling.
Yet all these package managers share a common core: from dependencies written in a domain-specific language (DSL)~\cite{bentley1986dsl}, they \textit{resolve} a set of package versions that transitively satisfies them -- a \textit{resolution} -- and deploy it as a filesystem subtree.

\pagebreak

In this paper, we define a surprisingly small core calculus as a \textit{lingua franca} of dependency expression, and extensions for ecosystem-specific functionality that reduce back to the core.
This makes cross-ecosystem comparison precise,
so that superficially different mechanisms
can be identified as the same construct, or distinguished where they differ.
Each reduction from an extension to the core is a mechanism for emulating the extension in an ecosystem that lacks it natively, formalising workarounds practitioners already apply by hand.
We provide the foundation that principled cross-ecosystem tooling -- such as polyglot resolvers and translators between ecosystems -- has lacked to date.
We mechanise in Lean~4 the theorems presented in the paper~\cite{artifact}.

We make these specific contributions:
\begin{itemize}
  \item A survey of over thirty package managers across operating system and programming language ecosystems, identifying a shared core of packages, dependency relations, dependency resolution, and deployment; and axes along which they diverge~(\S\ref{sec.background}).
  \item The \textit{Package Calculus}, a minimal formal system which captures dependency resolution in three conditions -- root inclusion, dependency closure, and version uniqueness.
        We show how the NP-completeness of dependency resolution can be avoided, but is ultimately inherent in capturing the complexity of real-world package managers~(\S\ref{sec.calculus}).
  \item A model for each axis of divergence in dependency expression as an extension to the core calculus, with a sound and complete reduction from each extension back to the core~(\S\ref{sec.mise-en-place}).
        We are primarily concerned with satisfiability: the reductions preserve which resolutions are valid, not which are optimal.
  \item A demonstration that we can compose extensions to model package managers combining several at once, with general considerations for doing so.
        This opens the path to a \textit{polyglot resolver}: each ecosystem compiles its dependencies into the core and is returned a resolution to deploy -- so a project spanning several ecosystems is resolved as one.
        We further show how the core could serve as an intermediate representation for cross-ecosystem translation~(\S\ref{sec.a-la-carte}).
\end{itemize}

Package management to date has not unified the semantics of different systems; existing interoperability efforts have focused on sharing solving mechanisms between ecosystems with the same semantics~(\S\ref{sec.related}).
Instead of seeing package managers as monolithic and unrelated, we tease apart the commonality to show how we can unify the semantics of dependency expression~(\S\ref{sec.conclusion}).

\section{The Package Management Problem}
\label{sec.background}

Package managers all face the same problem: given a project's dependencies, choose a consistent set of package versions to install -- a resolution -- and deploy it~(\S\ref{sec.background.common-core}).
They underwent a Cambrian explosion in the mid-2000s~(Appendix~\ref{appendix.history}), and now diverge in the semantics of dependency expression~(\S\ref{sec.background.survey}).
Fig.~\ref{fig.glossary} collects the terminology used throughout the paper.

\begin{figure}[ht]
  \captionsetup{justification=centering}
  \centering
  \begin{tikzpicture}
    \node[draw=black!70, rounded corners=6pt, line width=0.8pt, inner sep=10pt] (glossarybox) {%
      \begin{minipage}{\dimexpr\linewidth-24pt\relax}
        \footnotesize
        \raggedright
        \setlength{\parskip}{2.5pt}%
        \setlength{\columnsep}{10pt}%
        \begin{multicols}{2}
          \textbf{Package}: a unit of software -- code, binaries, or data -- identified by a \textit{name} and \textit{version}\dotfill\S\ref{sec.background.common-core.package}\par
          \textbf{Dependency}: a relation from a \textit{depender} package, requiring one of a set of \textit{dependee} packages\dotfill\S\ref{sec.background.common-core.dependency}\par
          \textbf{Resolution}: the set of packages that satisfies a package's dependencies transitively, or the act of selecting such a set, computed by a \textit{resolver}\dotfill\S\ref{sec.background.common-core.resolution}\par
          \textbf{Deployment}: the provision of a resolution, typically as a filesystem subtree of installed packages\dotfill\S\ref{sec.background.common-core.deployment}\par
          \textbf{Ecosystem}: the programming language or operating system a package manager serves\dotfill\S\ref{sec.background.survey.ecosystem}\par
          \textbf{Toolchain integration}: the coupling of a package manager with build systems and compilers\dotfill\S\ref{sec.background.survey.toolchain}\par
          \textbf{Packaging language}: the DSL in which package metadata and dependencies are expressed\dotfill\S\ref{sec.background.survey.language}\par
          \textbf{Version constraint}: a dependency on a set of versions of a package name, e.g.\ $\geq\!1 \land \leq\!3$\dotfill\S\ref{sec.background.survey.version-constraints}\par
          \textbf{Conflict}: an `anti-dependency', preventing co-installation, e.g.\ of rival forks\dotfill\S\ref{sec.background.survey.conflicts}\par
          \textbf{Concurrent versions}: multiple versions of the same package name coexisting in a resolution\dotfill\S\ref{sec.background.survey.concurrent}\par
          \textbf{Peer dependency}: a child's constraint on a sibling's version, both dependees of one parent\dotfill\S\ref{sec.background.survey.peer}\par
          \textbf{Feature}: optional functionality of a package that may add dependencies, unified across dependers\dotfill\S\ref{sec.background.survey.features}\par
          \textbf{Package formula}: a Boolean (\texttt{and}/\texttt{or}/\texttt{not}) expression over package dependencies\dotfill\S\ref{sec.background.survey.package-formula}\par
          \textbf{Variable formula}: a package formula parameterised by variables, e.g.\ the operating system\dotfill\S\ref{sec.background.survey.variable-formula}\par
          \textbf{Virtual package}: a package name that multiple packages can provide, e.g.\ an SSH server\dotfill\S\ref{sec.background.survey.virtual-packages}\par
          \textbf{Optional dependency}: a use-if-present dependency where the dependee is never required, but used for build ordering if it is in the resolution\dotfill\S\ref{sec.background.survey.optional-dependencies}\par
        \end{multicols}
      \end{minipage}};
    \node[fill=white, inner xsep=6pt, inner ysep=2pt] at (glossarybox.north) {\small\textsc{Glossary}};
  \end{tikzpicture}
  \Description{A boxed glossary of the package management terms used in the paper.}
  \caption{A glossary of package management terminology as used in this paper.}
  \label{fig.glossary}
\end{figure}

\subsection{A Common Core}
\label{sec.background.common-core}

We identify concerns common to package managers.
The Package Calculus formalises packages, dependencies, and resolutions~(\S\ref{sec.calculus}); deployment affects the semantics of dependency expression~(\S\ref{sec.background.survey}).

\subsubsection{Package}
\label{sec.background.common-core.package}
A \textit{package} is a unit of software -- source code, a compiled binary, or data -- distributed with metadata that declares the other packages it depends on.
Packages are identified by a \textit{name}, which refers to a package across its releases, and a \textit{version}, distinguishing one release from another.

\subsubsection{Dependency}
\label{sec.background.common-core.dependency}
Package managers express \textit{dependency} relations from a \textit{depender} package to a set of \textit{dependee} packages, meaning the depender requires one of the dependees.
Dependees are often software libraries, while packages that are not dependees are typically applications intended for use by an end user.
Dependees are typically expressed as a name and set of compatible versions.

\begin{figure}[ht]
  \captionsetup{justification=centering}
  \centering
  \includegraphics[scale=0.8]{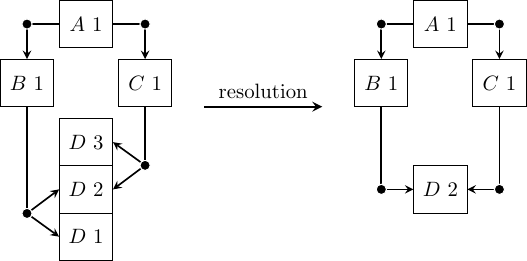}
  \Description{
    Two node-and-edge diagrams joined by a rightward arrow labelled `resolution'.
    On the left, the dependency structure: a root package A version 1 depends on B version 1 and C version 1; B depends on version 1 or 2 of package name D, and C on version 2 or 3, so all three versions of D appear as candidates.
    On the right, the resulting resolution: the same packages with only D version 2 retained -- the single version satisfying both B and C -- and versions 1 and 3 removed.
  }
  \caption{Resolving a set of dependencies (left) with root package name $A$ and version $1$ yields a \textit{resolution} -- the set of packages that satisfies them -- by selecting a single version of each package name (right).}
  \label{fig.resolution}
\end{figure}

\subsubsection{Resolution}
\label{sec.background.common-core.resolution}
Dependency \textit{resolution} is the problem of computing, given a package, the set of package names and versions needed to satisfy its dependencies transitively.
A \textit{resolver} performs dependency resolution.
A \textit{resolution} is the set of packages selected by \textit{resolving} dependencies.
Fig.~\ref{fig.resolution} illustrates a resolution that satisfies three conditions: the root package is included (root inclusion); every package's dependencies are met by a package in the resolution (dependency closure); and at most one version of each name is chosen (version uniqueness).
We formalise these conditions in~\S\ref{sec.calculus}.

\subsubsection{Deployment}
\label{sec.background.common-core.deployment}
Given a \textit{resolution}, package managers provide packages through \textit{deployment}.
Interpreted-language package managers can place source code in a directory, like npm~(JavaScript).
Others build packages from metadata instructions, as Cargo, opam, or Cabal~(Haskell) do.
System package managers might unpack binary archives into the filesystem hierarchy.
For embedded deployments, as with BusyBox or Yocto, the result may be a whole firmware image.
Deployment is orthogonal to resolution, but its constraints determine the semantics of dependency expression.
For example, requiring unique paths or linker symbols per package name forces version uniqueness.

\input{table}

\subsection{A Survey of Package Managers}
\label{sec.background.survey}

Table~\ref{tbl.comparison} compares a selection of package managers, chosen to cover the most widely used systems as well as those with distinctive semantics, and assembled from reading official documentation, inspecting source code, and consulting with contributors.
We define the table's columns below.

\subsubsection{Ecosystems}
\label{sec.background.survey.ecosystem}
Package managers fall into two broad categories: language package managers distribute libraries and tools to developers in a specific programming language; system package managers administer a full operating system.
The lines between these often blur; system package managers do distribute language libraries, and language package managers can facilitate setting up a development environment with system dependencies.
Almost all package managers group packages in a centralised repository for discoverability and curation, which ranges from validating version constraints~(\S\ref{sec.background.survey.version-constraints}) to maintaining a coherent package set, a single mutually compatible version per name replacing version constraints, like Homebrew~(macOS), Nix, \TeX\ Live, and Stackage~(Haskell).

\subsubsection{Toolchain Integration}
\label{sec.background.survey.toolchain}
Some package managers, especially language package managers, are integrated with other parts of the toolchain.
opam includes build scripts in package metadata which can invoke a build system or compiler, but is language-agnostic.
Other ecosystems have tighter integration; Cargo is Rust's package manager \textit{and} build system.
This enables functionality impossible without a well-defined API between these parts of the toolchain.
For example, Cargo mangles symbols with package versions to support concurrent versions~(\S\ref{sec.background.survey.concurrent}).

\subsubsection{Packaging Language}
\label{sec.background.survey.language}
Packaging DSLs are normally static data structures for expressing dependencies~(\S\ref{sec.background.common-core.dependency}) -- e.g.\ TOML, \texttt{.deb} control files, or \texttt{.opam} files.
Sometimes DSLs are expressions embedded in Turing-complete languages~\cite{hudak1996dsel} -- e.g.\ Homebrew's Ruby, Spack's~(high-performance computing) Python, or Portage's~(Gentoo Linux) ebuild scripts -- that compile to a static structure.
We next cover the varied functionality these DSLs support in expressing dependencies.

\subsubsection{Version Constraints}
\label{sec.background.survey.version-constraints}
A \textit{version formula} such as $\geq 1 \land \leq 3$ enumerates to a set $\{1, 2, 3\}$.
Some package managers, such as Nix~(Various), can only express a dependency on a single version of a package name.
Go's Minimal Version Selection (MVS) specifies only a minimum version bound for each dependency, and has a linear-time dependency resolution algorithm~(\S\ref{sec.calculus.resolution-complexity}).

\subsubsection{Conflicts}
\label{sec.background.survey.conflicts}
A conflict is an `anti-dependency' on the absence of a package name and version set, used, for example, for forks of a project that are not co-installable.
Portage distinguishes \textit{weak blocks} (\texttt{!pkg}) from \textit{strong blocks} (\texttt{!!pkg}): weak blocks are non-binding preferences for the resolver, analogous to Debian's \texttt{Recommends}~(\S\ref{sec.background.survey.optional-dependencies}), while strong blocks enforce hard mutual exclusion.

\subsubsection{Concurrent Versions}
\label{sec.background.survey.concurrent}
Many package managers allow only one version of a package name in a resolution due to deployment constraints such as requiring unique symbols when linking objects, or storing packages at a unique path in the filesystem.
Package managers can use isolated environments to work around this, such as opam switches or Python virtual environments.

In contrast, package managers with concurrent versions allow multiple versions of a name to exist within a resolution.
Cargo uses name mangling to link multiple versions of a library into a binary.
Cargo restricts this to packages with different major versions under the semantic versioning scheme: co-linking minor versions would create duplicate, incompatible types in the binary, while different major versions are expected to have distinct APIs~\cite{cargoresolver, semver}.
Concurrent versions are therefore isolated -- the types of one major version cannot interoperate with another.
Nix~\cite{dolstra2004nix} diverges from the Linux Filesystem Hierarchy Standard to place packages at paths containing unique cryptographic hashes, allowing the installation of multiple versions of a package name on the same system.
npm allows duplication of dependee packages by storing them in a subdirectory of their depender.

\subsubsection{Peer Dependencies}
\label{sec.background.survey.peer}
Given a depender parent and a dependee child, a peer dependency is a constraint that the child places on the version of its `peer' sibling: another dependee of the parent.
This is typically used by plugins; for example, a React component library (the child) declares a peer dependency on \texttt{react} (the sibling).
It does not provide its own copy, but instead requires that its parent provide a compatible version.
This functionality is only required with concurrent versions~(\S\ref{sec.background.survey.concurrent}), as otherwise the child could simply depend on the peer directly.

{\emergencystretch=1em%
npm supports peer dependencies, but their behaviour has changed multiple times.
The \texttt{-{}-legacy-peer-deps} behaviour is that a peer dependee is only included when the parent also depends on it.
The contemporary behaviour is that peer dependees that are not marked as optional are included regardless of the parent's dependencies.\par}

Peer dependencies appear only in npm's row of Table~\ref{tbl.comparison}; npm stands in for a family of JavaScript package managers (pnpm, Bun, Yarn, etc.), omitted as they share its semantics.
The mechanism itself is general -- any concurrent-version ecosystem (e.g.\ Cargo) could adopt it -- while single-version ecosystems enforce this implicitly, since every depender sees the same version.

\subsubsection{Features}
\label{sec.background.survey.features}
{\emergencystretch=1em%
In Cargo, dependency relations are parameterised by features -- optional functionality of a package that can introduce additional dependencies.
For example,
\begin{center}
  \texttt{image = \{ version = "0.25", features = ["png"] \}}
\end{center}
selects the \texttt{image} crate's \texttt{png} feature, causing \texttt{image} to additionally depend on the \texttt{png} decoder crate.
The resolver unifies features, so a package will be provided with the union of all features requested by its dependers.
In the Python world, dependency specifiers define `extras' which can similarly add dependencies.
Portage's \textit{USE flags} serve the same role: a dependency can request USE flags with \texttt{dev-libs/foo[bar]}, and USE flags can trigger additional dependencies via \mbox{\texttt{bar?\ (\ dev-libs/baz\ )}}.
USE flags additionally allow a package to depend on the \textit{absence} of a flag -- \texttt{dev-libs/foo[-bar]} in Portage -- combining features with conflicts~(\S\ref{sec.background.survey.conflicts}).
Cargo features are only additive: a depender can require a feature but not block its presence.\par}

{\emergencystretch=1em%
Ecosystems without native feature support sometimes encode features manually.
opam's \texttt{ocaml-variants} package encodes compiler options in version strings such as \texttt{4.02.1+fp}: each combination is a separate version, of which version uniqueness~(\S\ref{sec.calculus.core}) permits one; without unification, every union of options is its own version, so their number grows combinatorially.
Newer \texttt{ocaml-option-*} packages instead encode options as distinct package names, e.g.\ \texttt{ocaml-option-flambda} -- structurally similar to the feature reduction of Def.~\ref{def.feature-reduction} -- with the \texttt{ocaml-options-only-*} family enforcing exclusive combinations via explicit conflicts~(\S\ref{sec.background.survey.conflicts}).\par}

\subsubsection{Package Formulae}
\label{sec.background.survey.package-formula}
opam allows a package to declare that it requires either \texttt{lwt} or \texttt{async} as a concurrency monad, e.g.\ \texttt{"lwt" \{>= "6.0"\}\,|\,"async"}.
Similarly, APT supports disjunction in \texttt{Depends} fields, for example, \texttt{Depends: libssl-dev\,|\,libgnutls-dev} to accept either TLS implementation.

\subsubsection{Variable Formulae}
\label{sec.background.survey.variable-formula}
{\emergencystretch=1em%
Variable formulae extend package formulae~(\S\ref{sec.background.survey.package-formula}).
For example, in opam, the variable \texttt{os-distribution} makes a formula conditional on the operating system on which the package manager is deploying the package.
In APT, dependencies can similarly be parameterised by computer architecture.
The opam package-local \texttt{with-test} denotes dependencies that are only required when tests have been enabled for the package.
Python's dependency groups serve a similar role, having groups such as \texttt{docs} or \texttt{test} in \texttt{pyproject.toml} labelling sets of dependencies that are conditional on the development task.
Variables can also be used to distinguish build-time from run-time dependencies.
Instead of an explicit Boolean formula, these can also be implicit labels on dependencies, making them conditional on the platform or architecture.\par}

\subsubsection{Virtual Packages}
\label{sec.background.survey.virtual-packages}
Virtual packages let a depender express a need for some capability without naming a specific provider; capable packages declare themselves providers of the shared virtual name.
For example, in APT~\cite{debian}, both \texttt{openssh-server} and \texttt{dropbear-bin} provide \texttt{ssh-server}.

\subsubsection{Optional Dependencies}
\label{sec.background.survey.optional-dependencies}
opam supports optional dependencies, where dependees are used by the depender if present in a resolution, determining build ordering and triggering rebuilds.
Like features~(\S\ref{sec.background.survey.features}), they enable functionality, but their effect is resolution-wide rather than scoped to individual dependencies.
While such use-if-present behaviour occurs throughout toolchains -- with Gentoo going so far as to outlaw unrecorded \textit{automagic dependencies}, and sandboxed builds used to avoid them -- only opam in Table~\ref{tbl.comparison} makes it explicit.
Optional dependencies are a purely post-resolution concern, so we omit them from the model (Appendix~\ref{appendix.optional-dependencies}).

opam's optional dependencies are distinct from npm's \texttt{optionalDependencies} (like Debian's \texttt{Recommends}), which ask the resolver to include a dependee if possible but do not fail otherwise.
An extension with resolver preferences could capture this best-effort behaviour.

\section{The Package Calculus}
\label{sec.calculus}

We define a minimal formal system for dependency resolution in package management systems~(\S\ref{sec.calculus.core}), extend it with version formulae~(\S\ref{sec.calculus.version-formulae}), and show how the NP-completeness of resolution can be avoided, but is ultimately inherent in capturing the complexity of real-world package managers~(\S\ref{sec.calculus.resolution-complexity}).

\subsection{The Core Calculus}
\label{sec.calculus.core}

We define the core calculus, the minimal kernel of the Package Calculus.

\begin{definition}[Packages]
  \label{def.calculus.package}
  We define:
  \begin{subdefinition}
    \item\label{def.calculus.package.names}
    $N$ as the set of possible package names.
    \item\label{def.calculus.package.versions}
    $V$ as the set of possible package versions.
    \item\label{def.calculus.package.packages}
    $N\times V$ as the set of possible packages, each a name-version pair.
    \item\label{def.calculus.package.real}
    $R \subseteq N \times V$ as the set of \textit{real} packages -- those that exist.
  \end{subdefinition}
\end{definition}

\noindent Throughout, any variable left unbound is implicitly universally quantified.

\begin{definition}[Dependency]
  \label{def.calculus.dependency}
  We define dependencies $\Delta \subseteq (N \times V) \times (N \times \mathcal{P}(V))$ as a relation from packages to a name and set of versions.
  As is standard for a relation, $p \Delta (n, vs)$ abbreviates $(p, (n, vs)) \in \Delta$, where a depender $p \in N \times V$ expresses a dependency on dependees of name $n \in N$ with versions $vs \subseteq V$.
  We assume that dependencies are \textit{functional in name}, where $p \Delta (n, vs) \land p \Delta (n, vs') \implies vs = vs'$.
  Dependencies that are not functional in name can be normalised by merging same-name entries per depender, intersecting their version sets.
\end{definition}

\begin{definition}[Resolution]
  \label{def.calculus.resolution}
  Given dependencies $\Delta$ and root $r \in R$, a resolution $S \subseteq R$ is valid if the following conditions hold:
  \begin{subdefinition}
    \item\label{def.calculus.resolution.root-inclusion}
    \textbf{Root Inclusion}: $r \in S$
    \item\label{def.calculus.resolution.dependency-closure}
    \textbf{Dependency Closure}: $\forall\, p \in S.\, p \Delta (n, vs) \implies \exists\, v \in vs.\, (n, v) \in S$ \\
    Every dependency of a package in the resolution must be satisfied by a compatible version.
    \item \label{def.calculus.resolution.version-uniqueness}
    \textbf{Version Uniqueness}: $\forall\, (n, v),\, (n, v') \in S.\, v = v'$
  \end{subdefinition}
  An installation request is a \textit{query} $Q \subseteq N \times \mathcal{P}(V)$: a set of package names, each with a set of acceptable versions.
  As the formalism takes only a single root package as input, we realise a query as a synthetic root $r \in R$ whose dependencies are exactly $Q$, that is, $r \Delta (m, vs) \iff (m, vs) \in Q$.

  We write $\mathcal{S}(\Delta, r)$ for the set of all resolutions of $r$ in $\Delta$.
  Read $S \in \mathcal{S}(\Delta, r)$ as `$S$ is a valid resolution'.
  \textsc{DependencyResolution} is the decision problem: given $R$, $\Delta$, and $r$, is $\mathcal{S}(\Delta, r)$ non-empty?
\end{definition}

\begin{figure}[ht]
  \captionsetup{justification=centering}
  \begin{lrbox}{\sfboxa}\begin{minipage}{0.27\textwidth}%
\begin{Verbatim}[fontsize=\footnotesize]
Package: A
Depends: B (= 1), C (= 1)

Package: B
Depends: D (>= 1), D (<< 3)

Package: C
Depends: D (>= 2)
\end{Verbatim}
  \end{minipage}\end{lrbox}%
  \begin{lrbox}{\sfboxb}\begin{minipage}{0.27\textwidth}\centering
    \[\begin{aligned}
      (A, 1) &\Delta (B, \{1\}) \\
      (A, 1) &\Delta (C, \{1\}) \\
      (B, 1) &\Delta (D, \{1, 2\}) \\
      (C, 1) &\Delta (D, \{2, 3\})
    \end{aligned}\]
  \end{minipage}\end{lrbox}%
  \begin{lrbox}{\sfboxc}\begin{minipage}{0.27\textwidth}\centering
    \includegraphics[scale=0.8]{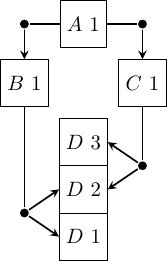}%
  \end{minipage}\end{lrbox}%
  \setlength{\sfrowht}{0pt}\sfmax{\sfboxa}\sfmax{\sfboxb}\sfmax{\sfboxc}%
  \hfill
  \begin{subfigure}[b]{0.27\textwidth}
    \sfcell{\usebox{\sfboxa}}
    \caption{Debian syntax.}
    \label{fig.calculus.syntax}
  \end{subfigure}%
  \hfill
  \begin{subfigure}[b]{0.27\textwidth}
    \sfcell{\usebox{\sfboxb}}
    \caption{Dependency relation $\Delta$.}
    \label{fig.calculus.dependencies}
  \end{subfigure}%
  \hfill
  \begin{subfigure}[b]{0.27\textwidth}
    \sfcell{\usebox{\sfboxc}}
    \Description{Hypergraph rendering of the dependency relation $\Delta$ given alongside.}
    \caption{Hypergraph illustration.}
    \label{fig.calculus.illustration}
  \end{subfigure}%
  \hfill\null
  \caption{An instance of the Package Calculus.}
  \label{fig.calculus}
\end{figure}

We illustrate these definitions with an example.
Fig.~\ref{fig.calculus.syntax} shows a snippet of Debian syntax for expressing the dependencies in Fig.~\ref{fig.calculus.dependencies}.
We visualise these dependencies in Fig.~\ref{fig.calculus.illustration} as a directed hypergraph -- a generalisation of a directed graph in which each edge connects a set of source vertices to a set of target vertices~\cite{berge1970hypergraphs}.
Each dependency is an edge from the depender (the domain) to a set of dependees.
Formally, we define vertices $R$ and edges
\[
  E = \{(\{(n, v)\}, \{(m, u) \mid u \in vs\}) \mid (n, v) \Delta (m, vs)\}
\]
Note that we restrict the domain to a cardinality of one -- we can only express a dependency \textit{from} one package.
The only valid resolution\footnote{An alternative core allows multiple versions and encodes uniqueness via conflicts, but many package managers do not support conflicts (Table~\ref{tbl.comparison}); we model the majority case directly, and concurrent-version systems extend it~(\S\ref{sec.mise-en-place.concurrent}).} for query $Q = \{(A, \{1\})\}$ is
\[
  S = \{r, (A, 1), (B, 1), (C, 1), (D, 2)\}
\]

\begin{theorem}
  \label{thm.resolution-complexity}
  \textsc{DependencyResolution} (Def.~\ref{def.calculus.resolution}) is NP-complete.
\end{theorem}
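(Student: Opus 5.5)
Membership in NP and NP-hardness are handled separately; hardness goes through a polynomial reduction from 3-SAT, following the classical encoding of Di Cosmo et al.

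\textbf{Membership.} A certificate is a candidate resolution $S \subseteq R$. Only packages reachable from $r$ through $\Delta$ ever need to appear, so we may assume $S \subseteq R$ and $|S| \le |R|$. The three conditions of Def.~\ref{def.calculus.resolution} can then each be checked in polynomial time in $|R| + |\Delta|$, counting each $vs$ by its explicit size:
\begin{itemize}
  \item root inclusion is a single membership test;
  \item dependency closure scans every edge $p \Delta (n, vs)$ with $p \in S$ and checks that some $(n,v) \in S$ has $v \in vs$;
  \item version uniqueness is a pairwise comparison of the elements of $S$.
\end{itemize}
We assume $\Delta$ is given explicitly, as a finite relation with finite version sets, so the input size bounds all of this.

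\textbf{Hardness.} Let $\varphi = C_1 \land \dots \land C_m$ be a 3-CNF formula over variables $x_1,\dots,x_k$.
\begin{itemize}
  \item Each variable $x_i$ becomes a package name $X_i$ with two real versions, $(X_i, \top)$ and $(X_i, \bot)$. Version uniqueness then forces at most one truth value per variable.
  \item Each clause $C_j$ becomes a name $K_j$, with one real version $(K_j, \ell)$ for each literal $\ell$ of $C_j$.
  \item If $\ell = x_i$, then $(K_j,\ell)\,\Delta\,(X_i,\{\top\})$; if $\ell = \neg x_i$, then $(K_j,\ell)\,\Delta\,(X_i,\{\bot\})$.
  \item The root $r$ depends on every clause: $r\,\Delta\,(K_j, \{\text{literals of } C_j\})$ for each $j$.
\end{itemize}
This construction is linear in $|\varphi|$ and functional in name. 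Duplicate literals within a clause collapse to a single version, which is harmless.

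\textbf{Correctness.} We argue both directions.
\begin{itemize}
  \item Given a satisfying assignment $\alpha$, take $S$ to contain $r$, $(X_i, \alpha(x_i))$ for each $i$, and, for each $j$, $(K_j, \ell_j)$ for one literal $\ell_j$ that $\alpha$ makes true. Each condition holds: the root is present, each clause package's dependency is met by the chosen variable version, and each name appears once.
  \item Conversely, given a valid $S$, dependency closure from $r$ selects some $(K_j,\ell)$ for every $j$, and closure again places the version of $X_i$ that makes $\ell$ true. Version uniqueness makes $\alpha(x_i) :=$ the chosen version of $X_i$ well defined; unconstrained variables are set arbitrarily. Each clause then contains a true literal, so $\alpha \models \varphi$.
\end{itemize}

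The main obstacle is the converse direction. There we must check that version uniqueness, rather than any explicit conflict (which the core lacks), suffices to exclude inconsistent assignments, and that the single-source hyperedges of the core can still express clause disjunction. The clause-as-name gadget handles both, because disjunction lives in the version set $vs$ of the root's dependency on $K_j$.
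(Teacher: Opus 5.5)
Your proposal is correct and matches the paper's proof essentially step for step. It uses the same polynomial-time certificate check for membership, and the same \textsc{3-SAT} gadget: each variable becomes a name with versions $\top$ and $\bot$, and each clause becomes a name with one version per literal, which the root requires and which depends on the polarity satisfying that literal; the two-directional correctness argument is also the same, and your remarks on functionality in name and duplicate literals are small details the paper leaves implicit.
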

\begin{proof}
  By a reduction from \textsc{3-SAT}: a variable becomes a package name with versions $\top$ and $\bot$, and a clause becomes a package name, required by the root, with one version per literal -- each depending on the assignment that satisfies that literal.
  See Appendix~\ref{appendix.resolution-complexity}.
\end{proof}

\subsection{Version Formulae}
\label{sec.calculus.version-formulae}

The core calculus represents each dependency as an explicit set of compatible versions (Def.~\ref{def.calculus.dependency}).
Real package managers, however, do not enumerate these sets: they describe them with \textit{version formulae}~(\S\ref{sec.background.survey.version-constraints}) interpreted over a total ordering of versions that is typically non-lexicographical.
We show that the core's abstract treatment of versions loses nothing by working with sets: a version formula reduces to the core by evaluating it to its set of admitted versions.
This is also the simplest instance of the extend-and-reduce pattern of~\S\ref{sec.mise-en-place} -- extend the calculus, define resolution over the extension, then reduce back to the core.

\begin{definition}[Version Ordering]
  \label{def.version-ordering}
  A \textit{version ordering} is a total order $\leq_v$ on $V$ (Def.~\ref{def.calculus.package.versions}).
\end{definition}

Different ecosystems define different version orderings and formulae~\cite{semver, debianversions}.
The VERS specification~\cite{vers} defines a universal version range notation, implemented by the univers library~\cite{univers}, which wraps ecosystem-specific version comparison logic.
We define the Version Formula Package Calculus to be as expressive as the VERS specification according to the following definitions.

\pagebreak

\begin{definition}[Version Formula]
  \label{def.version-formula}
  We define:
  \begin{subdefinition}
    \item\label{def.version-formula.formulae}
    Version formulae $\Phi$:
    \[\phi ::= \top \mid \bot \mid \phi \land \phi \mid \phi \lor \phi \mid \omega\ c \qquad \omega ::=\ \geq\ \mid\ >\ \mid\ \leq\ \mid\ <\ \mid\ =\ \mid\ \neq \qquad c \in V\]
    \item\label{def.version-formula.real}
    $R_\Phi$, defined precisely as $R$ in Def.~\ref{def.calculus.package.real}, the subscript serving merely to differentiate it from the $R$ produced by the reduction to the core.
    \item\label{def.version-formula.versions}
    $V_n = \{ v \mid (n, v) \in R_\Phi \}$, the set of versions of $n \in N$ that exist.
    \item\label{def.version-formula.ordering}
    A version ordering $\leq_v$ (Def.~\ref{def.version-ordering}).
    \item\label{def.version-formula.semantics}
    Semantics function $\llbracket \cdot \rrbracket_n : \Phi \to \mathcal{P}(V)$ for $\Phi$ under $n \in N$, where $\omega_v$ interprets $\omega$ under $\leq_v$:
    \begin{gather*}
      \begin{alignedat}{2}
        \llbracket \top \rrbracket_n &= V_n &\qquad
        \llbracket \phi_1 \land \phi_2 \rrbracket_n &= \llbracket \phi_1 \rrbracket_n \cap \llbracket \phi_2 \rrbracket_n \\
        \llbracket \bot \rrbracket_n &= \emptyset &\qquad
        \llbracket \phi_1 \lor \phi_2 \rrbracket_n &= \llbracket \phi_1 \rrbracket_n \cup \llbracket \phi_2 \rrbracket_n
      \end{alignedat}
      \qquad
      \llbracket \omega\, c \rrbracket_n = \{v \in V_n \mid v\ \omega_v\ c\}
    \end{gather*}
  \end{subdefinition}
\end{definition}

These version formulae can express common version constraints:
\begin{gather*}
  \begin{alignedat}{2}
    \texttt{">1.0.0"} &\mapsto \llbracket > 1.0.0 \rrbracket_n &\qquad
    \texttt{"!=1.4.*"} &\mapsto \llbracket < 1.4.0\ \lor\ \geq 1.5.0 \rrbracket_n \\
    \texttt{"{\textasciicircum}1.2.3"} &\mapsto \llbracket \geq 1.2.3\ \land\ < 2.0.0 \rrbracket_n&\qquad
    \texttt{"{\textasciitilde}1.2.3"} &\mapsto \llbracket \geq 1.2.3\ \land\ < 1.3.0 \rrbracket_n
  \end{alignedat}
\end{gather*}

We extend the calculus to use version formulae as the representation of versions in dependencies (Def.~\ref{def.version-formula-dependency}), define what a valid resolution looks like over this representation (Def.~\ref{def.version-formula-resolution}), and reduce the extension back to the core calculus by evaluating each formula to its version set (Def.~\ref{def.version-formula-reduction}).

\begin{definition}[Version Formula Dependency]
  \label{def.version-formula-dependency}
  We define version formula dependencies $\Delta_\Phi \subseteq (N \times V) \times (N \times \Phi)$ where an element $p \Delta_\Phi (n, \phi)$ has formula $\phi$ representing the set of compatible versions of $n$.
\end{definition}

\begin{definition}[Version Formula Resolution]
  \label{def.version-formula-resolution}
  Given dependencies $\Delta_\Phi$ and root $r_\Phi \in R_\Phi$, a resolution $S_\Phi \subseteq R_\Phi$ is valid if:
  \begin{subdefinition}
    \item\label{def.version-formula-resolution.root-inclusion}
    \textbf{Root Inclusion}: $r_\Phi \in S_\Phi$, as in Def.~\ref{def.calculus.resolution.root-inclusion}.
    \item\label{def.version-formula-resolution.dependency-closure}
    \textbf{Dependency Closure}: $\forall\, p \in S_\Phi.\, p \Delta_\Phi (n, \phi) \implies \exists\, v \in \llbracket \phi \rrbracket_n.\, (n, v) \in S_\Phi$
    \item\label{def.version-formula-resolution.version-uniqueness}
    \textbf{Version Uniqueness}: $\forall\, (n, v),\, (n, v') \in S_\Phi.\, v = v'$, as in Def.~\ref{def.calculus.resolution.version-uniqueness}.
  \end{subdefinition}
  We write $\mathcal{S}_\Phi(\Delta_\Phi, r_\Phi)$ for the set of all resolutions of $r_\Phi$ in $\Delta_\Phi$.
\end{definition}

\begin{definition}[Version Formula Reduction]
  \label{def.version-formula-reduction}
  Given $R_\Phi$, $\Delta_\Phi$, and $r_\Phi$, we define a reduction to the core calculus with $R$, $\Delta$, and $r$ as follows:
  \begin{subdefinition}
    \item\label{def.version-formula-reduction.packages}
    \textbf{Packages}: $r = r_\Phi$ and $R = R_\Phi$
    \item\label{def.version-formula-reduction.dependencies}
    \textbf{Dependencies}:
    $\Delta = \{ ((n, v), (m, \llbracket \phi \rrbracket_m)) \mid (n, v) \Delta_\Phi (m, \phi) \}$
  \end{subdefinition}
\end{definition}

\begin{theorem}[Correctness]
  \label{thm.version-formula-reduction-correctness}
  $S \in \mathcal{S}(\Delta, r) \iff S \in \mathcal{S}_\Phi(\Delta_\Phi, r_\Phi)$
\end{theorem}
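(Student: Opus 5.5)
We prove the biconditional by unfolding both definitions of validity and matching them condition by condition. Throughout, we use that the reduction of Def.~\ref{def.version-formula-reduction} sets $R = R_\Phi$ and $r = r_\Phi$. This immediately gives two things. First, the ambient constraint $S \subseteq R$ coincides with $S \subseteq R_\Phi$. Second, Root Inclusion (Def.~\ref{def.calculus.resolution.root-inclusion}) coincides with Def.~\ref{def.version-formula-resolution.root-inclusion}. Version Uniqueness is stated identically in Def.~\ref{def.calculus.resolution.version-uniqueness} and Def.~\ref{def.version-formula-resolution.version-uniqueness}, and it does not mention dependencies, so it transfers verbatim. The only substantive work is therefore Dependency Closure.

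For Dependency Closure, we first establish a characterisation lemma:
\[
  p \,\Delta\, (m, vs) \iff \exists\, \phi.\ p \,\Delta_\Phi\, (m, \phi) \land vs = \llbracket \phi \rrbracket_m .
\]
This follows directly from the set-builder definition in Def.~\ref{def.version-formula-reduction.dependencies} by destructuring the pair $((n,v),(m,vs))$.

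For the direction ($\Leftarrow$), assume $S$ satisfies Def.~\ref{def.version-formula-resolution.dependency-closure}. Take $p \in S$ with $p \,\Delta\, (m, vs)$. The lemma yields some $\phi$ with $p \,\Delta_\Phi\, (m,\phi)$ and $vs = \llbracket\phi\rrbracket_m$. The hypothesis then supplies $v \in \llbracket\phi\rrbracket_m = vs$ with $(m,v) \in S$.

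For the direction ($\Rightarrow$), take $p \in S$ with $p \,\Delta_\Phi\, (m,\phi)$. By construction, $p \,\Delta\, (m, \llbracket\phi\rrbracket_m)$ holds. Core closure then gives the required witness $v \in \llbracket\phi\rrbracket_m$ with $(m,v) \in S$.

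The main obstacle is the functional-in-name assumption of Def.~\ref{def.calculus.dependency}, which the core relies on. A priori, $\Delta_\Phi$ might relate $p$ to the same name $m$ through two formulae with distinct denotations, and then the reduced $\Delta$ would not be functional. The argument above never uses functionality: closure is quantified over every edge, so the correspondence of edges is one-to-one in both directions regardless. We would therefore note that the proof goes through without the assumption. Alternatively, if the assumption must hold for $\Delta$ to be well-formed, we would require $\Delta_\Phi$ to be functional in name as well; that property is preserved by applying $\llbracket\cdot\rrbracket_m$ pointwise.

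A smaller subtlety is that $\llbracket\cdot\rrbracket_m$ only yields versions in $V_m$, that is, versions of real packages. This is harmless, because any witness $(m,v) \in S \subseteq R$ is real anyway. In Lean this step is essentially \texttt{simp} plus two \texttt{obtain} steps.
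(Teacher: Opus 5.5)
Your proof is correct and takes essentially the same route as the paper's: the reduction sets $R = R_\Phi$ and $r = r_\Phi$, so root inclusion and version uniqueness coincide verbatim, and dependency closure transfers in both directions because each edge $p \,\Delta\, (m, vs)$ arises from some $p \,\Delta_\Phi\, (m, \phi)$ with $vs = \llbracket \phi \rrbracket_m$, and conversely. Your side remark also holds: none of the validity conditions uses functionality in name, so the equivalence does not depend on it.
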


\subsection{Resolution Complexity}
\label{sec.calculus.resolution-complexity}

Package managers navigate the NP-completeness of \textsc{DependencyResolution} (Thm.~\ref{thm.resolution-complexity}) in different ways, forming a spectrum from restricted-but-tractable resolution to full NP-completeness.
At the most restricted end, Go's MVS~(\S\ref{sec.background.survey.version-constraints})~\cite{cox2018mvs} achieves deterministic, linear-time resolution.
MVS relies on strict adherence to semantic versioning~\cite{semver}:
only major version changes can break APIs.
Under this assumption, dependencies specify only a minimum version bound, since any version above the minimum within the same major version is guaranteed to be compatible.
Different major versions are treated as different package names, allowing multiple major versions to coexist in the same resolution.
Rather than selecting the latest available version, MVS selects the \textit{minimum} version satisfying all bounds -- the maximum of all minimum bounds.
Since what constitutes a breaking change is ambiguous in practice, selecting the minimum reduces exposure to unintended breakage.
Upgrades are an explicit developer action that rewrites the minimum bounds.

Cargo also allows multiple versions of a package name to exist in a resolution, but it falls short of linear-time resolution because its dependency formulae allow upper bounds within a major version -- minor version bumps sometimes include breaking changes.
Its dependency formulae can also span semver-incompatible~\cite{semver} version sets (e.g.\ \texttt{">=1, <3"}, \texttt{">=0.1, <0.5"}), whereas one can only depend on \textit{one} major version of a Go module.

At the other end, in ecosystems like OCaml, any API change is potentially breaking due to features such as module inclusion, where a package's API is implicitly dependent on the APIs of its dependees.
There is no meaningful distinction between major and minor versions, making semantic versioning inapplicable and MVS unviable.
Were opam to adopt MVS, it would be unable to express compatibility with multiple major versions, and so could depend on only single, exact versions.
For this reason, opam specifies no upper bounds by default; instead, when a new release is proposed, the opam-repository continuous integration (CI) builds and tests all packages that depend on it (its reverse dependencies), and failures result in a retroactive upper bound on the new release~\cite{ocamllabs-year}.

Semantic versioning is a useful heuristic but not a substitute for expressive version constraints.
Regardless of ecosystem, versioning is determined by the dependee, but incompatibility is a property of the dependency relation; one dependency's compatible change is another's breaking change.

Nix sidesteps the spectrum by treating every name-version pair as a distinct package and requiring the exact version of a dependee to be specified -- singular dependencies~(Appendix~\ref{appendix.singular-dependencies}).
Derivations are parameterised by their dependencies in the Nix DSL, so resolution is performed manually rather than by a constraint solver.
In the Nixpkgs package repository, typically one version of a package name is included at a time, and breakages are often due to incompatibilities introduced by updates.

We can formalise the approaches underlying this spectrum~\cite{cox2016versionsat} as two independent ways to reduce the complexity of resolution:
\begin{enumerate}
  \item\label{sec.calculus.resolution-complexity.total-ordering}
        \textbf{Restrict version constraints to lower bounds.}
        With a total ordering $\leq_v$ of versions (Def.~\ref{def.version-ordering}), we redefine a dependency relation to only have a minimum bound, with a dependency as $p \Delta_{\min} (n, v)$ where $p \in N \times V$, $n \in N$, $v \in V$, and $v$ is the minimum bound.
        We replace the dependency closure condition (Def.~\ref{def.calculus.resolution.dependency-closure}) with
        \[\forall\, p \in S.\, p \Delta_{\min} (n, v) \implies \exists\, v' \in V.\, (n, v') \in S \land v \leq_v v'\]
        An incompatibility between a depender and a dependee satisfying this relationship is considered a specification error; the depender and dependee authors must collaborate to fix it.
        Without upper bounds, any version above the maximum of all minimum bounds is valid, so a greedy algorithm suffices -- selecting the minimum satisfying version for stability, as MVS does, or the latest for freshness.
        This reduces the problem to a graph traversal and version selection in $O(|\Delta_{\min}| + |R|)$ time.
  \item\label{sec.calculus.resolution-complexity.version-unique}
        \textbf{Remove version uniqueness.}
        Remove the version uniqueness condition (Def.~\ref{def.calculus.resolution.version-uniqueness}), allowing multiple versions of a package name.
        Each dependency can be resolved independently, allowing a greedy algorithm to select any compatible version for each dependency.
        This reduces the problem to graph traversal in $O(|\Delta|)$ time.
\end{enumerate}

Go's MVS combines both approaches -- the version uniqueness constraint is removed for different major versions~(approach~\ref{sec.calculus.resolution-complexity.version-unique}), and within a major version, version constraints are restricted to a lower bound~(approach~\ref{sec.calculus.resolution-complexity.total-ordering}).
Cargo also removes the version uniqueness constraint for different major versions, but the practical need for upper bounds within a major version reintroduces NP-completeness.
opam exploits neither, accepting NP-complete resolution in exchange for expressiveness, at the cost of extensive compatibility testing.
Nix adopts approach~\ref{sec.calculus.resolution-complexity.version-unique} as well as a restriction stronger than approach~\ref{sec.calculus.resolution-complexity.total-ordering}: singular dependencies, where each dependency specifies exactly one version, delegating version selection from the constraint solver to a Turing-complete packaging language.

{\emergencystretch=1em%
Where a package manager sits on this spectrum determines the sophistication of its resolver: greedy algorithms suffice at the restricted end, while those accepting full NP-completeness employ SAT solvers (Appendix~\ref{appendix.sat-resolution}) or implement conflict-driven clause learning (CDCL) natively for performance and error reporting.
Similarly, lock files~\cite{gamage2026} -- snapshots of a resolution -- are unnecessary under MVS, where the resolution is determined by the minimum bounds alone, but are necessary for reproducibility where the resolution changes as new versions are released.\par}

Our theory was informed by the goal of practical resolution across ecosystems~\cite{decan2018,dietrich2019}.
To this end, singular dependencies and Go's MVS are not expressive enough to model general dependency resolution, while the Turing-complete Nix DSL is too expressive to model in other dependency resolvers and does not (yet) support resolving version constraints.
We argue that the Package Calculus is the \textit{minimum} degree of expressiveness needed to model the ecosystems surveyed.

\section{Package Managers, Mise en Place}
\label{sec.mise-en-place}

We model each axis of divergence in dependency expression~(\S\ref{sec.background}) by extending the core calculus~(\S\ref{sec.calculus.core}) and reducing each extension back to the core.
We define the Conflict Package Calculus~(\S\ref{sec.mise-en-place.conflicts}), Concurrent Package Calculus~(\S\ref{sec.mise-en-place.concurrent}), Peer Package Calculus~(\S\ref{sec.mise-en-place.peer-dependencies}), Feature Package Calculus~(\S\ref{sec.mise-en-place.features}), Package Formula Calculus~(\S\ref{sec.mise-en-place.package-formula}), Variable Formula Calculus~(\S\ref{sec.mise-en-place.variable-formula}), and Virtual Package Calculus~(\S\ref{sec.mise-en-place.virtual}).
We cover singular dependencies, which are less expressive than the core, in Appendix~\ref{appendix.singular-dependencies}; and optional dependencies, which have no resolution-time effect, in Appendix~\ref{appendix.optional-dependencies}.
All reductions are polynomial in the size of the input instance.
The reductions serve as a compilation pass from the extended calculus to the core.
The soundness theorems of the reductions construct a resolution in the extended calculus from a resolution of the reduction.
We prepare each extension in isolation -- a \textit{mise en place} -- before composing them to order in~\S\ref{sec.a-la-carte}.

\subsection{Conflicts}
\label{sec.mise-en-place.conflicts}

The core calculus has no mechanism for one package to forbid another~(\S\ref{sec.background.survey.conflicts}).
We extend it with a conflict relation (Def.~\ref{def.conflict}), define what a valid resolution looks like over the extended calculus (Def.~\ref{def.conflict-resolution}), and reduce back to the core by encoding each conflict as a synthetic package (Def.~\ref{def.conflict-reduction}).

\begin{definition}[Conflict]
  \label{def.conflict}
  We define conflicts $\Gamma \subseteq (N \times V) \times (N \times \mathcal{P}(V))$ as a relation where an element $p \Gamma (n, vs)$ denotes that package $p \in N \times V$ conflicts with package name $n \in N$ with versions $vs \subseteq V$.
\end{definition}

We use conflicts $\Gamma$ as the defining symbol of the calculus and introduce $\Delta_\Gamma$ as $\Delta$ in Def.~\ref{def.calculus.dependency}, and $R_\Gamma$ as $R$ in Def.~\ref{def.calculus.package.real}, in order to differentiate them from the use of $\Delta$ and $R$ in the reduction to the core.
In subsequent extensions, unless we explicitly redefine a subscript of a core calculus definition -- as we do with version formula dependency $\Delta_\Phi$ in Def.~\ref{def.version-formula-dependency} -- read it as defined in the core.

\begin{definition}[Conflict Resolution]
  \label{def.conflict-resolution}
  Given dependencies $\Delta_\Gamma$, conflicts $\Gamma$, and root $r_\Gamma$, a resolution $S_\Gamma \subseteq R_\Gamma$ is valid if:
  \begin{subdefinition}
    \item\label{def.conflict-resolution.core-resolution}
    \textbf{Core Resolution}: $S_\Gamma \in \mathcal{S}(\Delta_\Gamma, r_\Gamma)$ (Def.~\ref{def.calculus.resolution}).
    \item\label{def.conflict-resolution.conflict-avoidance}
    \textbf{Conflict Avoidance}: $\forall\, p \in S_\Gamma.\, p \Gamma (n, vs) \implies \not\exists\, v \in vs.\, (n, v) \in S_\Gamma$ \\
    For every package in the resolution, a conflicting package must not be in the resolution.
  \end{subdefinition}
  We write $\mathcal{S}_\Gamma(\Delta_\Gamma, \Gamma, r_\Gamma)$ for the set of all resolutions of $r_\Gamma$ in $\Delta_\Gamma$ and $\Gamma$.
\end{definition}

We reduce the Conflict Package Calculus to the core calculus by encoding each conflict as a synthetic conflict package name with two versions, $1$ and $0$.
The package declaring the conflict depends on version $1$, while each conflicting package depends on version $0$.
Since both versions cannot coexist by version uniqueness, the two sides of the conflict are mutually exclusive.
Fig.~\ref{fig.conflict} shows an instance of the Conflict Package Calculus reduced to the core.

\begin{figure}[ht]
  \captionsetup{justification=centering}
  \begin{lrbox}{\sfboxa}\begin{minipage}{0.20\textwidth}%
\begin{Verbatim}[fontsize=\footnotesize]
Package: A
Version: 1
Conflicts: B (<< 3)
\end{Verbatim}
  \end{minipage}\end{lrbox}%
  \begin{lrbox}{\sfboxb}\begin{minipage}{0.17\textwidth}\centering
    \includegraphics[scale=0.8]{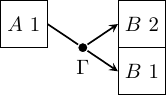}%
  \end{minipage}\end{lrbox}%
  \begin{lrbox}{\sfboxc}\begin{minipage}{0.32\textwidth}\centering
    \includegraphics[scale=0.8]{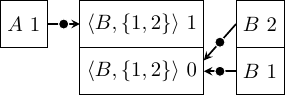}%
  \end{minipage}\end{lrbox}%
  \begin{lrbox}{\sfboxd}\begin{minipage}{0.25\textwidth}\centering
    \[\begin{aligned}
      (A, 1) &\Delta (\langle B, \{1, 2\} \rangle, \{1\}) \\
      (B, 1) &\Delta (\langle B, \{1, 2\} \rangle, \{0\}) \\
      (B, 2) &\Delta (\langle B, \{1, 2\} \rangle, \{0\})
    \end{aligned}\]
  \end{minipage}\end{lrbox}%
  \setlength{\sfrowht}{0pt}\sfmax{\sfboxa}\sfmax{\sfboxb}\sfmax{\sfboxc}\sfmax{\sfboxd}%
  \hfill
  \begin{subfigure}[b]{0.20\textwidth}
    \sfcell{\usebox{\sfboxa}}
    \caption{Debian syntax.}
    \label{fig.conflict.debian}
  \end{subfigure}%
  \hfill
  \begin{subfigure}[b]{0.18\textwidth}
    \sfcell{\usebox{\sfboxb}}
    \Description{Hypergraph rendering of the conflict relation given alongside.}
    \caption{Hypergraph.}
    \label{fig.conflict.hypergraph}
  \end{subfigure}%
  \hfill
  \begin{subfigure}[b]{0.32\textwidth}
    \sfcell{\usebox{\sfboxc}}
    \Description{Hypergraph rendering of the reduction formula given alongside.}
    \caption{Hypergraph of reduction.}
    \label{fig.conflict.hypergraph-reduction}
  \end{subfigure}%
  \hfill
  \begin{subfigure}[b]{0.25\textwidth}
    \sfcell{\usebox{\sfboxd}}
    \caption{Reduction.}
    \label{fig.conflict.reduction}
  \end{subfigure}%
  \hfill\null
  \caption{Conflict Package Calculus $(A, 1) \Gamma (B, \{1, 2\})$ reduced to the core calculus.}
  \label{fig.conflict}
\end{figure}

\begin{definition}[Conflict Reduction]
  \label{def.conflict-reduction}
  Given $R_\Gamma$, $\Delta_\Gamma$, $\Gamma$, and $r_\Gamma$, we define a reduction to the core calculus with $R$, $\Delta$, and $r$ as follows:
  \begin{subdefinition}
    \item\label{def.conflict-reduction.packages}
    \textbf{Packages}:
      $r = r_\Gamma$ and, with $\langle n, vs \rangle \in N$ and $0, 1 \in V$,
      \[R = R_\Gamma \cup \bigcup\limits_{p \Gamma (n, vs)} \{(\langle n, vs \rangle, 0), (\langle n, vs \rangle, 1)\}\]
    \item\label{def.conflict-reduction.dependencies}
    \textbf{Dependencies}:
    \[
      \Delta = \Delta_\Gamma
      \cup \bigcup_{p \Gamma (n, vs)} \{(p,\, (\langle n, vs \rangle,\, \{1\}))\}
      \cup \bigcup_{p \Gamma (n, vs)} \{((n, u),\, (\langle n, vs \rangle,\, \{0\})) \mid u \in vs\}
    \]
  \end{subdefinition}
\end{definition}

We define soundness and completeness theorems for each reduction.
Given an instance of the extended calculus, soundness gives a construction that extracts a resolution of the extended instance from a resolution of its core reduction.
Completeness goes the other way: a construction of a resolution of the core reduction from a resolution of the extended instance.

\begin{theorem}[Soundness]
  \label{thm.conflict-reduction-soundness}
  If $S \in \mathcal{S}(\Delta, r)$, then
  $S_\Gamma = S \cap R_\Gamma$
  is a valid resolution in $\mathcal{S}_\Gamma(\Delta_\Gamma, \Gamma, r_\Gamma)$.
\end{theorem}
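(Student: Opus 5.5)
We verify the conditions of Def.~\ref{def.conflict-resolution} for $S_\Gamma = S \cap R_\Gamma$ directly, using the three core conditions on $S$ (Def.~\ref{def.calculus.resolution}) and the shape of the reduction (Def.~\ref{def.conflict-reduction}). We assume, as the construction implicitly does, that the synthetic names $\langle n, vs\rangle$ are fresh: no package of $R_\Gamma$ carries such a name. Under this assumption, $S$ splits as a disjoint union of $S_\Gamma$ and a set of synthetic packages. Note also that $S_\Gamma \subseteq R_\Gamma$ holds by construction.

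\textbf{Core Resolution.} Root inclusion follows from $r = r_\Gamma \in S$ and $r_\Gamma \in R_\Gamma$, so $r_\Gamma \in S_\Gamma$. Version uniqueness is inherited, since $S_\Gamma \subseteq S$.

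For dependency closure, take $p \in S_\Gamma$ with $p \Delta_\Gamma (n, vs)$. Since $\Delta_\Gamma \subseteq \Delta$ and $p \in S$, closure of $S$ gives some $v \in vs$ with $(n, v) \in S$. We must show $(n,v) \in R_\Gamma$. Here $n$ is a name occurring in $\Delta_\Gamma$, hence not synthetic by freshness, and every package of $R$ with a non-synthetic name lies in $R_\Gamma$. Therefore $(n, v) \in S_\Gamma$.

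\textbf{Conflict Avoidance.} Suppose $p \in S_\Gamma$ with $p \Gamma (n, vs)$, and suppose for contradiction that some $u \in vs$ has $(n, u) \in S_\Gamma$. By the reduction, $p \Delta (\langle n, vs\rangle, \{1\})$, so closure of $S$ forces $(\langle n, vs\rangle, 1) \in S$. Likewise $(n, u) \Delta (\langle n, vs\rangle, \{0\})$ with $(n,u) \in S$, so $(\langle n, vs\rangle, 0) \in S$. These two packages share a name but differ in version, contradicting version uniqueness of $S$, since $0 \neq 1$.

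\textbf{Main obstacle.} The only subtle point is functionality in name (Def.~\ref{def.calculus.dependency}). The reduced $\Delta$ must be functional in name, yet a package could have several conflicts with the same key $(n, vs)$, or could both declare a conflict on $(n,vs)$ and lie in $\{n\}\times vs$ itself. In the latter case $p$ would depend on both $\{1\}$ and $\{0\}$ of the same synthetic name. We should therefore read $\Delta$ as given, or as normalised by intersection. Under normalisation that self-conflicting $p$ gets the empty set $\{1\}\cap\{0\} = \emptyset$, which closure of $S$ cannot satisfy, so such a $p$ cannot lie in $S$ at all. In that case conflict avoidance holds vacuously, and the argument above is unchanged in every other case. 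I would state the freshness assumption explicitly and note this normalisation case.
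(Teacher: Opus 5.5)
Your proof is correct and is essentially the direct verification the paper relies on: root inclusion and version uniqueness are inherited from $S$, closure transfers because $\Delta_\Gamma \subseteq \Delta$ and any witness with a non-synthetic name lies in $R_\Gamma$, and conflict avoidance holds because a clash would force both $(\langle n, vs\rangle, 0)$ and $(\langle n, vs\rangle, 1)$ into $S$. The one thing to tighten is freshness: state it as disjointness of the synthetic names from every name in the extended instance, including the dependee names of $\Delta_\Gamma$, since that is what your closure step uses (those names need not name packages of $R_\Gamma$, and a $\Delta_\Gamma$ dependency on some $\langle n, vs\rangle$ could otherwise be met in $S$ only by a synthetic package that $S \cap R_\Gamma$ discards); the functionality-in-name discussion is harmless but unnecessary, as your conflict-avoidance argument already covers the self-conflict case with $\Delta$ read literally.
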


\begin{theorem}[Completeness]
  \label{thm.conflict-reduction-completeness}
  If $S_\Gamma \in \mathcal{S}_\Gamma(\Delta_\Gamma, \Gamma, r_\Gamma)$, then
  \[
    S = S_\Gamma
    \cup \bigcup_{p \Gamma (n, vs)} \{ (\langle n, vs \rangle, 1) \mid p \in S_\Gamma \}
    \cup \bigcup_{p \Gamma (n, vs)} \{ (\langle n, vs \rangle, 0) \mid u \in vs,\ (n, u) \in S_\Gamma \}
  \]
  is a valid resolution in $\mathcal{S}(\Delta, r)$.
\end{theorem}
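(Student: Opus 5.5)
We verify the three core conditions of Def.~\ref{def.calculus.resolution} for the constructed $S$ against the reduced $\Delta$ and $R$ of Def.~\ref{def.conflict-reduction}. Throughout we assume, as the reduction implicitly does, that synthetic names $\langle n, vs\rangle$ are fresh: they do not occur as names in $R_\Gamma$. Hence $S \cap R_\Gamma = S_\Gamma$, and every synthetic element of $S$ has the form $(\langle n, vs\rangle, b)$ with $b \in \{0,1\}$.

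\textbf{Subset of $R$ and root inclusion.} $S_\Gamma \subseteq R_\Gamma \subseteq R$. Each synthetic element added is indexed by some conflict $p \Gamma (n, vs)$, and both of its versions lie in $R$ by construction, so $S \subseteq R$. Since $r = r_\Gamma \in S_\Gamma$ by the core resolution condition of Def.~\ref{def.conflict-resolution}, we get $r \in S$.

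\textbf{Dependency closure.} We split on the form of a dependency $q \Delta (m, ws)$ with $q \in S$. No synthetic package carries any outgoing dependency in $\Delta$, so $q \in S_\Gamma$, and there are three cases.
\begin{enumerate*}[label=(\roman*)]
\item If $q \Delta_\Gamma (m, ws)$, closure of $S_\Gamma$ in the core sense supplies a witness in $S_\Gamma \subseteq S$.
\item If the edge is $(q, (\langle n, vs\rangle, \{1\}))$ arising from $q \Gamma (n, vs)$ with $q \in S_\Gamma$, then $(\langle n, vs\rangle, 1) \in S$ by the second union.
\item If the edge is $((n, u), (\langle n, vs\rangle, \{0\}))$ with $u \in vs$ and $(n, u) \in S_\Gamma$, then $(\langle n, vs\rangle, 0) \in S$ by the third union.
\end{enumerate*}

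\textbf{Version uniqueness.} This is the main obstacle. For real names, uniqueness is inherited from $S_\Gamma$. For a synthetic name $\langle n, vs\rangle$, suppose both $(\langle n, vs\rangle, 1)$ and $(\langle n, vs\rangle, 0)$ lie in $S$. Then some $p$ with $p \Gamma (n, vs)$ has $p \in S_\Gamma$, and some $u \in vs$ has $(n, u) \in S_\Gamma$. This contradicts conflict avoidance of Def.~\ref{def.conflict-resolution.conflict-avoidance} applied to $p$.

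The subtle point is that distinct conflicting packages $p, p'$ with the same target $(n, vs)$ share one synthetic name, so the witnesses for version $1$ and version $0$ may come from different declarations. The argument still goes through, because conflict avoidance holds for every declaring $p$ in $S_\Gamma$, so any single such $p$ yields the contradiction. The proof also uses the fact that $\langle n, vs\rangle$ is injective in $(n, vs)$: distinct conflict targets must produce distinct synthetic names, which we state explicitly.
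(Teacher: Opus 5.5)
Your proof is correct and follows the route the paper relies on; the paper's proofs are mechanised in Lean rather than written out, so the comparison is with that direct check of the three conditions: root inclusion is inherited from $S_\Gamma$, dependency closure goes by case analysis on the three components of the reduced $\Delta$, and version uniqueness on a synthetic name $\langle n, vs\rangle$ holds because having both versions $0$ and $1$ would contradict conflict avoidance for a declaring $p \in S_\Gamma$. One small tightening: your claim that no synthetic package has an outgoing dependency needs $\langle n, vs\rangle$ to be fresh with respect to every name occurring in $\Delta_\Gamma$ and $\Gamma$ (i.e.\ a separate namespace), not merely the names of $R_\Gamma$, since $\Delta_\Gamma$ and $\Gamma$ are not required to be supported on $R_\Gamma$.
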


This mechanism generalises to a mutual-exclusion \textit{conflict class} by giving the synthetic name one version per member, encoding the $n$-way exclusion in $O(n)$ instead of $O(n^2)$ pairwise conflicts.

\subsection{Concurrent Versions}
\label{sec.mise-en-place.concurrent}

Fig.~\ref{fig.diamond} illustrates an example of the `diamond dependency problem'; $\{(A, B), (A, C), (B, D), (C, D)\}$ as edges form a diamond.
We define the Concurrent Package Calculus to model package managers that address this by allowing multiple versions of a package name to exist in the resolution~(\S\ref{sec.background.survey.concurrent}).

\begin{figure}[ht]
  \captionsetup{justification=centering}
  \begin{lrbox}{\sfboxa}\begin{minipage}{0.20\textwidth}%
{\footnotesize\bfseries\ttfamily B/Cargo.toml}
\begin{Verbatim}[fontsize=\footnotesize]
[dependencies]
D = "=1"
\end{Verbatim}
{\footnotesize\bfseries\ttfamily C/Cargo.toml}
\begin{Verbatim}[fontsize=\footnotesize]
[dependencies]
D = "=3"
\end{Verbatim}
  \end{minipage}\end{lrbox}%
  \begin{lrbox}{\sfboxb}\begin{minipage}{0.20\textwidth}\centering
    \[\begin{aligned}
      (A, 1) &\Delta (B, \{1\}) \\
      (A, 1) &\Delta (C, \{1\}) \\
      (B, 1) &\Delta (D, \{1\}) \\
      (C, 1) &\Delta (D, \{3\})
    \end{aligned}\]
  \end{minipage}\end{lrbox}%
  \begin{lrbox}{\sfboxc}\begin{minipage}{0.25\textwidth}\centering
    \includegraphics[scale=0.8]{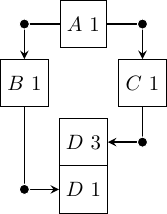}%
  \end{minipage}\end{lrbox}%
  \setlength{\sfrowht}{0pt}\sfmax{\sfboxa}\sfmax{\sfboxb}\sfmax{\sfboxc}%
  \hfill
  \begin{subfigure}[b]{0.20\textwidth}
    \sfcell{\usebox{\sfboxa}}
    \caption{Cargo syntax.}
    \label{fig.diamond.cargo}
  \end{subfigure}%
  \hfill
  \begin{subfigure}[b]{0.20\textwidth}
    \sfcell{\usebox{\sfboxb}}
    \caption{Package Calculus.}
    \label{fig.diamond.calculus}
  \end{subfigure}%
  \hfill
  \begin{subfigure}[b]{0.25\textwidth}
    \captionsetup{justification=centering}
    \sfcell{\usebox{\sfboxc}}
    \Description{Hypergraph rendering of the diamond dependency given alongside.}
    \caption{Hypergraph illustration.}
    \label{fig.diamond.hypergraph}
  \end{subfigure}%
  \hfill\null
  \caption{The `diamond dependency problem', which with $Q = \{(A, \{1\})\}$ has no valid resolution.}
  \label{fig.diamond}
\end{figure}

\begin{definition}[Granularity Function]
  \label{def.granularity-function}
  We define a set of granularities $G$ and a function $g : V \to G$ mapping each version to a granularity such that versions that cannot coexist share a granularity.
  For example, in Cargo, we can define $g(x.y.z) = x$ to extract the major version of a package, and for npm or Nix $g(v)=v$ as they allow duplication regardless of version.
  We can emulate the core with $g(v)=\epsilon$.
\end{definition}

\begin{definition}[Concurrent Resolution]
  \label{def.concurrent-resolution}
  Given dependencies $\Delta_C$, granularity $g$, and root $r_C$, a resolution $S_C \subseteq R_C$ and parent relation $\pi \subseteq S_C \times S_C$ are valid if:
  \begin{subdefinition}
    \item\label{def.concurrent-resolution.root-inclusion}
    \textbf{Root Inclusion}: $r_C \in S_C$, as in Def.~\ref{def.calculus.resolution.root-inclusion}.
    \item\label{def.concurrent-resolution.parent-closure}
    \textbf{Parent Closure}:
    $\forall\, p \in S_C.\, p \Delta_C (n, vs) \implies \exists!\, v \in vs.\, (n, v) \in S_C \land ((n, v), p) \in \pi$ \\
    A parent of a package is the depender in the dependency that caused the child's inclusion, and each parent has exactly one child per dependency.
    \item\label{def.concurrent-resolution.version-granularity}
    \textbf{Version Granularity}:
    $\forall\, (n, v),\, (n, v') \in S_C.\, v\neq v' \implies g(v)\neq g(v')$ \\
    Different versions of a name in $S_C$ must differ under $g$.
  \end{subdefinition}
  We write $\mathcal{S}_C(\Delta_C, g, r_C) \ni (S_C, \pi)$ for all resolution-parent pairs of $r_C$ in $\Delta_C$ under $g$.
\end{definition}

To reduce the Concurrent Package Calculus to the core calculus, we push the granular version into the package name so that version uniqueness in the core calculus enforces version granularity.
We introduce intermediate packages that allow for depending on multiple granular versions of now-distinct package names~\cite{pubgrub-rs}.
The parent relation $\pi$ and the intermediate packages of the reduction below determine one another (Thms~\ref{thm.concurrent-reduction-soundness} and~\ref{thm.concurrent-reduction-completeness}); peer dependencies~(\S\ref{sec.mise-en-place.peer-dependencies}) then constrain precisely this selection.
 npm represents $\pi$ in its lock file as nested \verb|node_modules| paths.
Fig.~\ref{fig.concurrent} shows an instance of the Concurrent Package Calculus and its reduction to the core.

\begin{figure}[ht]
  \captionsetup{justification=centering}
  \begin{lrbox}{\sfboxa}\begin{minipage}{0.16\textwidth}%
{\footnotesize\bfseries\ttfamily B/Cargo.toml}
\begin{Verbatim}[fontsize=\footnotesize]
[dependencies]
D = ">=1, <3"
\end{Verbatim}
{\footnotesize\bfseries\ttfamily C/Cargo.toml}
\begin{Verbatim}[fontsize=\footnotesize]
[dependencies]
D = ">=2, <4"
\end{Verbatim}
  \end{minipage}\end{lrbox}%
  \begin{lrbox}{\sfboxb}\begin{minipage}{0.36\textwidth}\centering
    \[\begin{aligned}
      (A, 1.0.0) &\Delta_C (B, \{1.0.0\}) \\
      (A, 1.0.0) &\Delta_C (C, \{1.0.0\}) \\
      (B, 1.0.0) &\Delta_C (D, \{1.0.0, 2.0.0, 2.0.1\}) \\
      (C, 1.0.0) &\Delta_C (D, \{2.0.0, 2.0.1, 3.0.0\})
    \end{aligned}\]
  \end{minipage}\end{lrbox}%
  \begin{lrbox}{\sfboxc}\begin{minipage}{0.31\textwidth}\centering
    \includegraphics[scale=0.8]{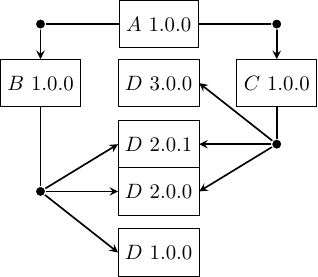}%
  \end{minipage}\end{lrbox}%
  \setlength{\sfrowht}{0pt}\sfmax{\sfboxa}\sfmax{\sfboxb}\sfmax{\sfboxc}%
  \hfill
  \begin{subfigure}[b]{0.16\textwidth}
    \sfcell{\usebox{\sfboxa}}
    \caption{Cargo syntax.}
    \label{fig.concurrent.cargo}
  \end{subfigure}%
  \hfill
  \begin{subfigure}[b]{0.36\textwidth}
    \sfcell{\usebox{\sfboxb}}
    \caption{Concurrent Package Calculus.}
    \label{fig.concurrent.calculus}
  \end{subfigure}%
  \hfill
  \begin{subfigure}[b]{0.31\textwidth}
    \sfcell{\usebox{\sfboxc}}
    \Description{Hypergraph rendering of the concurrent calculus dependencies given alongside.}
    \caption{Hypergraph illustration.}
    \label{fig.concurrent.hypergraph}
  \end{subfigure}%
  \hfill\null

  \vspace{1em}
  \begin{lrbox}{\sfboxa}\begin{minipage}{0.40\textwidth}\centering
    \[\begin{aligned}
      (\langle A, 1 \rangle, 1.0.0) &\Delta (\langle B, 1 \rangle, \{1.0.0\}) \\
      (\langle A, 1 \rangle, 1.0.0) &\Delta (\langle C, 1 \rangle, \{1.0.0\}) \\
      (\langle B, 1 \rangle, 1.0.0) &\Delta (\langle B, 1.0.0, D \rangle, \{1, 2\}) \\
      (\langle B, 1.0.0, D \rangle, 1) &\Delta (\langle D, 1 \rangle, \{1.0.0\}) \\
      (\langle B, 1.0.0, D \rangle, 2) &\Delta (\langle D, 2 \rangle, \{2.0.0, 2.0.1\}) \\
      (\langle C, 1 \rangle, 1.0.0) &\Delta (\langle C, 1.0.0, D \rangle, \{2, 3\}) \\
      (\langle C, 1.0.0, D \rangle, 2) &\Delta (\langle D, 2 \rangle, \{2.0.0, 2.0.1\}) \\
      (\langle C, 1.0.0, D \rangle, 3) &\Delta (\langle D, 3 \rangle, \{3.0.0\})
    \end{aligned}\]
  \end{minipage}\end{lrbox}%
  \begin{lrbox}{\sfboxb}\begin{minipage}{0.58\textwidth}\centering
    \includegraphics[scale=0.8]{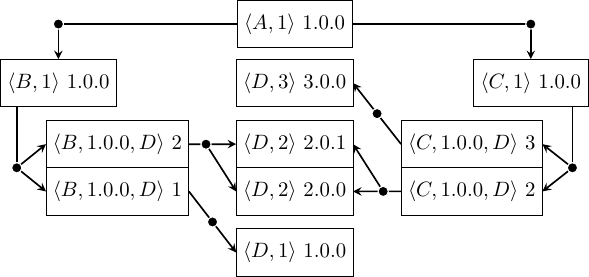}%
  \end{minipage}\end{lrbox}%
  \setlength{\sfrowht}{0pt}\sfmax{\sfboxa}\sfmax{\sfboxb}%
  \begin{subfigure}[b]{0.40\textwidth}
    \sfcell{\usebox{\sfboxa}}
    \caption{Reduction to the core calculus.}
    \label{fig.concurrent-reduction.reduction}
  \end{subfigure}%
  \hfill
  \begin{subfigure}[b]{0.58\textwidth}
    \sfcell{\usebox{\sfboxb}}
    \Description{Hypergraph rendering of the reduction formula given alongside.}
    \caption{Hypergraph of reduction.}
    \label{fig.concurrent-reduction.hypergraph}
  \end{subfigure}%
  \hfill\null
  \caption{Concurrent Package Calculus instance where $g(x.y.z) = x$, reduced to the core calculus.}
  \label{fig.concurrent}
\end{figure}

\begin{definition}[Concurrent Reduction]
  \label{def.concurrent-reduction}
  Given $R_C$, $\Delta_C$, $g$, and $r_C$, we reduce to the core with $R$, $\Delta$, and $r$.
  For each $(n, v) \Delta_C (m, vs)$ let $W = \{ g(u) \mid u \in vs \}$, with $\langle n, g(v) \rangle, \langle n, v, m \rangle \in N$ and $W \subseteq V$.
  \begin{subdefinition}
    \item\label{def.concurrent-reduction.packages}
    \textbf{Packages}: $r = (\langle n, g(v) \rangle, v)$ where $r_C = (n, v)$, and
    \[
      R = \bigcup_{(n, v) \in R_C} \{ (\langle n, g(v) \rangle, v) \}
      \cup \bigcup_{(n, v) \Delta_C (m, vs)} \{ (\langle n, v, m \rangle, w) \mid \lvert W \rvert > 1,\ w \in W \}
    \]
    \item\label{def.concurrent-reduction.dependencies}
    \textbf{Dependencies}: each dependency reduces to a dependency on single-granularity dependees, or to an intermediate $\langle n, v, m \rangle$ with each version depending on a granularity group:
    \[
      \Delta = \bigcup_{(n, v) \Delta_C (m, vs)}
      \begin{cases}
        \{ ((\langle n, g(v) \rangle, v),\, (\langle m, w \rangle,\, vs)) \} & \text{if } W = \{w\} \\[4pt]
        \left(\begin{aligned}
          &\{ ((\langle n, g(v) \rangle, v),\, (\langle n, v, m \rangle,\, W)) \} \\
          {}\cup{}&\{ ((\langle n, v, m \rangle, w),\, (\langle m, w \rangle,\, \{ u \in vs \mid g(u) = w \})) \mid w \in W \}
        \end{aligned}\right) & \text{if } \lvert W \rvert > 1 \\[8pt]
        \{ ((\langle n, g(v) \rangle, v),\, (\langle n, v, m \rangle,\, \emptyset)) \} & \text{if } W = \emptyset
      \end{cases}
    \]
  \end{subdefinition}
\end{definition}

\begin{theorem}[Soundness]
  \label{thm.concurrent-reduction-soundness}
  If $S \in \mathcal{S}(\Delta, r)$, then $S_C = \{(n, v) \mid (\langle n, g(v) \rangle, v) \in S\}$ and
  \[
    \pi = \bigcup_{(n, v) \Delta_C (m, vs)} \left\{ ((m, u),\, (n, v)) \;\middle|\;
    \begin{aligned}
      &u \in vs,\ (n, v) \in S_C,\ (m, u) \in S_C, \\
      &\lvert W \rvert > 1 \implies (\langle n, v, m \rangle, g(u)) \in S
    \end{aligned}
    \right\}
  \]
  are a valid resolution and parent relation in $\mathcal{S}_C(\Delta_C, g, r_C)$.
\end{theorem}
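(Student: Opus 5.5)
The plan is to verify the three conditions of Def.~\ref{def.concurrent-resolution} directly for the given $S_C$ and $\pi$, using only the three core conditions of Def.~\ref{def.calculus.resolution} on $S$. Before that, I would fix the convention that the name constructors $\langle n, w \rangle$ and $\langle n, v, m \rangle$ are injective and have disjoint images. I would also record one observation about Def.~\ref{def.concurrent-reduction.packages}: every element of $R$ with a name of the form $\langle m, w \rangle$ is $(\langle m, g(u) \rangle, u)$ for some $(m,u) \in R_C$. Hence $(\langle m, w\rangle, u) \in S \subseteq R$ forces both $g(u) = w$ and $(m,u) \in R_C$.

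The first two conditions and the side conditions are routine.
\begin{itemize}
  \item The observation gives $S_C \subseteq R_C$. The definition of $\pi$ requires both endpoints to lie in $S_C$, so $\pi \subseteq S_C \times S_C$.
  \item Root inclusion: $r = (\langle n, g(v)\rangle, v) \in S$ with $r_C = (n,v)$, so $r_C \in S_C$.
  \item Version granularity: suppose $(n,v),(n,v') \in S_C$ with $g(v) = g(v')$. Then $(\langle n, g(v)\rangle, v)$ and $(\langle n, g(v)\rangle, v')$ both lie in $S$ and share a core name. Core version uniqueness gives $v = v'$, which is the contrapositive of the required condition.
\end{itemize}

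The main work is parent closure, specifically the uniqueness part of $\exists!$. Fix $p = (n,v) \in S_C$ and $p \Delta_C (m, vs)$. By functionality in name, $vs$ (and so $W$) is determined by $p$ and $m$, so the union defining $\pi$ contributes unambiguously for this pair. I would split on the three cases of Def.~\ref{def.concurrent-reduction.dependencies}.
\begin{itemize}
  \item $W = \emptyset$: the core dependency of $(\langle n, g(v)\rangle, v) \in S$ is on an empty version set. This contradicts core dependency closure, so the case is vacuous.
  \item $W = \{w\}$: core closure yields $u \in vs$ with $(\langle m, w\rangle, u) \in S$. By the observation, $g(u) = w$ and $(m,u) \in S_C$, and since $|W| \not> 1$, $((m,u),p) \in \pi$. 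For uniqueness, any $u' \in vs$ with $(m,u') \in S_C$ and $((m,u'),p) \in \pi$ has $g(u') = w$. Then $(\langle m, w\rangle, u') \in S$, and core version uniqueness gives $u' = u$.
  \item $|W| > 1$: core closure gives some $w \in W$ with $(\langle n,v,m\rangle, w) \in S$, and that intermediate's dependency gives $u \in vs$ with $g(u) = w$ and $(\langle m, w\rangle, u) \in S$. Hence $(m,u) \in S_C$ and $((m,u),p) \in \pi$. For uniqueness, any $u'$ with $((m,u'),p) \in \pi$ satisfies $(\langle n,v,m\rangle, g(u')) \in S$. Version uniqueness on the intermediate name forces $g(u') = w$, and version uniqueness on $\langle m, w\rangle$ forces $u' = u$.
\end{itemize}

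The obstacle I expect is exactly this uniqueness in the $|W|>1$ case. It is where the side condition $(\langle n,v,m\rangle, g(u)) \in S$ in $\pi$ is essential: without it, other versions of $m$ in $S_C$ at different granularities, included by other dependers, would also count as children of $p$. The core's version uniqueness, applied twice, first to the intermediate name and then to the granular name, is what restores the $\exists!$.
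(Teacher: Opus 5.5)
Your proposal is correct. It takes the same direct route the paper relies on: the paper gives no written proof, but its mechanised proof would have to check each condition of Def.~\ref{def.concurrent-resolution} in turn, exactly as you do, with parent closure split on the three cases for $W$ in Def.~\ref{def.concurrent-reduction.dependencies}. You also correctly isolate the two ingredients that secure the $\exists!$ when $\lvert W \rvert > 1$: functionality in name, so that every pair $((m,u'),p) \in \pi$ comes from the same dependency, and two applications of core version uniqueness, first on $\langle n, v, m \rangle$ and then on $\langle m, w \rangle$.
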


\begin{theorem}[Completeness]
  \label{thm.concurrent-reduction-completeness}
  If $(S_C, \pi) \in \mathcal{S}_C(\Delta_C, g, r_C)$, then
  \[
    \begin{aligned}
      S = {}& \{ (\langle n, g(v) \rangle, v) \mid (n, v) \in S_C \} \\
      {}\cup{}& \bigcup_{(n, v) \Delta_C (m, vs)} \left\{ (\langle n, v, m \rangle, g(u)) \;\middle|\;
      \begin{aligned}
        &(n, v) \in S_C,\ (m, u) \in S_C,\ u \in vs, \\
        &((m, u), (n, v)) \in \pi,\ \lvert W \rvert > 1
      \end{aligned}
      \right\}
    \end{aligned}
  \]
  is a valid resolution in $\mathcal{S}(\Delta, r)$.
\end{theorem}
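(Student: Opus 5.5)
We verify the three core conditions of Def.~\ref{def.calculus.resolution} for the constructed $S$ against the reduced $R$, $\Delta$, $r$ of Def.~\ref{def.concurrent-reduction}.

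\textbf{Membership and root inclusion.} First we check $S \subseteq R$. Any $(\langle n, g(v)\rangle, v)$ with $(n,v) \in S_C \subseteq R_C$ lies in the first component of $R$. Any intermediate $(\langle n, v, m\rangle, g(u))$ is only included when $(n,v)\Delta_C(m,vs)$, $u \in vs$ and $\lvert W\rvert>1$, so $g(u) \in W$ and it lies in the second component. Root inclusion follows from $r_C \in S_C$ (Def.~\ref{def.concurrent-resolution.root-inclusion}), since $r = (\langle n, g(v)\rangle, v)$ for $r_C=(n,v)$.

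\textbf{Dependency closure.} Here we split on which kind of package $p\in S$ is and which case of Def.~\ref{def.concurrent-reduction.dependencies} produced the dependency. For $p = (\langle n,g(v)\rangle, v)$ with $(n,v)\in S_C$, every $\Delta$-edge out of $p$ arises from some $(n,v)\Delta_C(m,vs)$. (Distinct names $n$ map to distinct $\langle n,\cdot\rangle$, so there is no spurious aliasing; we rely on the synthetic name constructors being injective and disjoint.) Parent closure (Def.~\ref{def.concurrent-resolution.parent-closure}) gives a unique $u\in vs$ with $(m,u)\in S_C$ and $((m,u),(n,v))\in\pi$. We then take cases on $W$:
\begin{itemize}
  \item If $W=\{w\}$, then $g(u)=w$ and $(\langle m,w\rangle,u)\in S$ witnesses the edge.
  \item If $\lvert W\rvert>1$, the intermediate $(\langle n,v,m\rangle, g(u))$ is in $S$ by construction, with $g(u)\in W$.
  \item The case $W=\emptyset$ is impossible, since $vs$ is nonempty because $u\in vs$.
\end{itemize}
For $p=(\langle n,v,m\rangle, w)\in S$, we have $w=g(u)$ for some $u$ with $(m,u)\in S_C$ and $u\in vs$. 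Its only dependency is on $(\langle m,w\rangle, \{u'\in vs \mid g(u')=w\})$, which $u$ itself satisfies, as $(\langle m,g(u)\rangle,u)\in S$. We need the functional-in-name assumption (Def.~\ref{def.calculus.dependency}) on $\Delta_C$, so that $vs$ is determined by $(n,v,m)$ and the intermediate name is attached to a single dependency.

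\textbf{Version uniqueness.} This is the main obstacle. For names $\langle n,w\rangle$, two members $(\langle n,w\rangle,v)$ and $(\langle n,w\rangle,v')$ have $g(v)=g(v')=w$ and $(n,v),(n,v')\in S_C$, so version granularity (Def.~\ref{def.concurrent-resolution.version-granularity}) forces $v=v'$. For intermediate names $\langle n,v,m\rangle$, two members $g(u)$ and $g(u')$ both come from children $(m,u)$ and $(m,u')$ of parent $(n,v)$ via $\pi$ for the same dependency, with $u,u'\in vs$. The uniqueness quantifier $\exists!$ in parent closure then gives $u=u'$. The delicate point is that $\pi$ may relate $(n,v)$ to $(m,u')$ only through this dependency, since with functional-in-name dependencies there is exactly one $\Delta_C$ edge from $(n,v)$ to $m$. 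That is exactly why the $\exists!$ applies, and we will state it explicitly in the proof.
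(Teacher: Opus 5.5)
Your proof is correct and follows essentially the same route as the paper's: you check $S \subseteq R$ and root inclusion directly, and you discharge dependency closure by a case split on $\lvert W \rvert$ using the unique child supplied by parent closure, with $W=\emptyset$ ruled out. You then obtain version uniqueness from version granularity for the names $\langle n, g(v)\rangle$ and from the $\exists!$ of parent closure for the intermediate names $\langle n, v, m\rangle$, and your explicit appeal to functional-in-name dependencies and injective, disjoint name constructors is exactly the side condition that ties each intermediate name to a single $vs$.
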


The Concurrent Package Calculus allows multiple versions of the same package name to be selected even if it would be possible to use just one, consistent with Cargo's implementation.
Future work could restrict versions for \textit{public} dependencies exposed in a depender's interface, which must share one version across their subgraph -- a compatibility question that type interfaces make precise~\cite{florisson2016packages} -- as in Cargo's proposed public/private dependencies~\cite{cargo-pub-priv-deps}.

\subsection{Peer Dependencies}
\label{sec.mise-en-place.peer-dependencies}

A peer dependency~(\S\ref{sec.background.survey.peer}) is a child's constraint on a peer the parent must also depend on -- meaningful only when multiple versions can coexist, so we define the Peer Package Calculus atop the Concurrent Package Calculus~(\S\ref{sec.mise-en-place.concurrent}).
Our formalisation captures npm's \texttt{-{}-legacy-peer-deps} behaviour, though the contemporary behaviour can also be modelled with minor modifications.

\begin{definition}[Peer Dependency]
  \label{def.peer-dependency}
  We define peer dependencies $\Theta \subseteq (N \times V) \times (N \times \mathcal{P}(V))$ as a relation where an element $p \Theta (n, vs)$ denotes that a parent of package $p$ can only depend on the peer package name $n$ with version in $vs$.
\end{definition}

\begin{definition}[Peer Dependency Resolution]
  \label{def.peer-dependency-resolution}
  Given dependencies $\Delta_C$, peer dependencies $\Theta$, granularity $g$, and root $r_C$, a resolution $S_\Theta \subseteq R_C$ and parent relation $\pi \subseteq S_\Theta \times S_\Theta$ are valid if:
  \begin{subdefinition}
    \item\label{def.peer-dependency-resolution.concurrent-resolution}
    \textbf{Concurrent Resolution}: $(S_\Theta, \pi)$ is valid in $\mathcal{S}_C(\Delta_C, g, r_C)$ (Def.~\ref{def.concurrent-resolution}).
    \item\label{def.peer-dependency-resolution.peer-satisfaction}
    \textbf{Peer Satisfaction}:
    The version of $n$ that $p$'s parent $q$ selected must satisfy $p$'s peer constraint.
    \[\forall\, p \in S_\Theta.\, p \Theta (n, vs) \land (p, q) \in \pi \land q \Delta_C (n, us) \land v \in us \land ((n, v), q) \in \pi \implies v \in vs\]
  \end{subdefinition}
  We write $\mathcal{S}_\Theta(\Delta_C, \Theta, g, r_C) \ni (S_\Theta, \pi)$ for all resolution-parent pairs of $r_C$ in $\Delta_C$ and $\Theta$ under $g$.
\end{definition}

\begin{figure}[ht]
  \captionsetup{justification=centering}
  \begin{lrbox}{\sfboxa}\begin{minipage}{0.19\textwidth}%
{\footnotesize\bfseries\ttfamily A/package.json}
\begin{Verbatim}[fontsize=\footnotesize]
"dependencies":
{
  "B": "1",
  "C": ">=2 <4"
}
\end{Verbatim}
{\footnotesize\bfseries\ttfamily B/package.json}
\begin{Verbatim}[fontsize=\footnotesize]
"peerDependencies":
{
  "C": ">=1 <3"
}
\end{Verbatim}
  \end{minipage}\end{lrbox}%
  \begin{lrbox}{\sfboxb}\begin{minipage}{0.21\textwidth}\centering
    \[\begin{aligned}
      (A, 1) &\Delta_C (B, \{1\}) \\
      (A, 1) &\Delta_C (C, \{2, 3\}) \\
      (B, 1) &\Theta (C, \{1, 2\})
    \end{aligned}\]
  \end{minipage}\end{lrbox}%
  \begin{lrbox}{\sfboxc}\begin{minipage}{0.18\textwidth}\centering
    \includegraphics[scale=0.8]{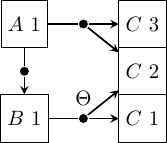}%
  \end{minipage}\end{lrbox}%
  \begin{lrbox}{\sfboxd}\begin{minipage}{0.42\textwidth}\centering
    \includegraphics[scale=0.8]{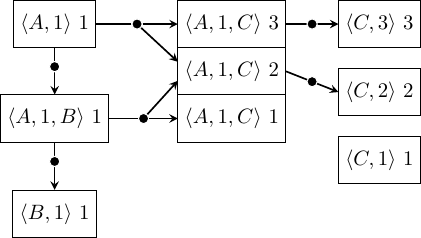}%
  \end{minipage}\end{lrbox}%
  \setlength{\sfrowht}{0pt}\sfmax{\sfboxb}\sfmax{\sfboxc}\sfmax{\sfboxd}\sfmax{\sfboxa}%
  \hfill
  \begin{subfigure}[b]{0.19\textwidth}
    \sfcell{\usebox{\sfboxa}}
    \caption{npm syntax.}
    \label{fig.peer-dependency.npm}
  \end{subfigure}%
  \hfill
  \begin{subfigure}[b]{0.21\textwidth}
    \sfcell{\usebox{\sfboxb}}
    \caption{Peer Package Calculus.}
    \label{fig.peer-dependency.calculus}
  \end{subfigure}%
  \hfill
  \begin{subfigure}[b]{0.18\textwidth}
    \sfcell{\usebox{\sfboxc}}
    \Description{Hypergraph rendering of the peer dependency relation given alongside.}
    \caption{Hypergraph.}
    \label{fig.peer-dependency.hypergraph}
  \end{subfigure}%
  \hfill
  \begin{subfigure}[b]{0.42\textwidth}
    \sfcell{\usebox{\sfboxd}}
    \Description{
      Hypergraph rendering of the Peer Dependency Reduction applied to the instance given alongside, with dependencies:
      (<A,1>, 1) on (<A,1,B>, \{1\});
      (<A,1,B>, 1) on (<B,1>, \{1\});
      (<A,1>, 1) on (<A,1,C>, \{2,3\});
      (<A,1,C>, 2) on (<C,2>, \{2\});
      (<A,1,C>, 3) on (<C,3>, \{3\});
      and, (<A,1,B>, 1) on (<A,1,C>, \{1,2\}).
    }
    \caption{Hypergraph of reduction.}
    \label{fig.peer-dependency.hypergraph-reduction}
  \end{subfigure}%
  \hfill\null
  \caption{Peer Package Calculus instance where $g(v)=v$, reduced to the core calculus.\\
  $(\langle A, 1, C\rangle, 1)$ can never be selected so it has no dependency on $(\langle C, 1\rangle, \{ 1 \})$.}
  \label{fig.peer-dependency}
\end{figure}

We reduce the Peer Package Calculus to the core calculus by modifying the Concurrent Reduction (Def.~\ref{def.concurrent-reduction}), using intermediate package dependencies to constrain the peer package version selection.
Fig.~\ref{fig.peer-dependency} shows an instance of the Peer Package Calculus reduced to the core.

\begin{definition}[Peer Dependency Reduction]
  \label{def.peer-dependency-reduction}
  Given $R_C$, $\Delta_C$, $\Theta$, $g$, and $r_C$, we define a reduction to the core calculus with $R$, $\Delta$, and $r$.
  For each dependency $(n, v) \Delta_C (m, vs)$ let $i = \langle n, v, m \rangle$, with $\langle n, g(v) \rangle, i \in N$, and let $\mathit{Peers} = \{ (o, u, ws) \mid (n, v) \Delta_C (o, us),\ u \in us,\ (o, u) \Theta (m, ws) \}$, the sibling instances constraining $m$.
  \begin{subdefinition}
    \item\label{def.peer-dependency-reduction.packages}
    \textbf{Packages}: $r = (\langle n, g(v) \rangle, v)$ where $r_C = (n, v)$, and
    \[
      R = \bigcup_{(n, v) \in R_C} \{ (\langle n, g(v) \rangle, v) \}
      \cup \bigcup_{(n, v) \Delta_C (m, vs)} \big(
        \{ (i, u) \mid u \in vs \}
        \cup \{ (i, w) \mid (o, u, ws) \in \mathit{Peers},\ w \in ws \}
      \big)
    \]
    \item\label{def.peer-dependency-reduction.dependencies}
    \textbf{Dependencies}: each dependency reduces to a dependency on an intermediate name $i$ with versions $vs$, each of which depends on the corresponding dependee.
    A peer dependency on $m$ from a sibling $o$ extends $i$ with the peer versions and makes $o$'s intermediate packages depend on $i$.
    We use the full version in the intermediate package to allow selection of an exact peer package version as the intersection of constraints, regardless of $g$.
    \[
      \Delta = \bigcup_{(n, v) \Delta_C (m, vs)} \left(
        \begin{aligned}
          &\{ ((\langle n, g(v) \rangle, v),\, (i,\, vs)) \} \\
          {}\cup{}&\{ ((i, u),\, (\langle m, g(u) \rangle,\, \{u\})) \mid u \in vs \} \\
          {}\cup{}&\{ ((\langle n, v, o \rangle, u),\, (i,\, ws)) \mid (o, u, ws) \in \mathit{Peers} \}
        \end{aligned}
      \right)
    \]
  \end{subdefinition}
\end{definition}

\begin{theorem}[Soundness]
  \label{thm.peer-dependency-reduction-soundness}
  If $S \in \mathcal{S}(\Delta, r)$, then $S_\Theta = \big\{(n, v) \mid (\langle n, g(v) \rangle, v) \in S\big\}$ and
  \[
    \pi = \bigcup_{(n, v) \Delta_C (m, vs)} \{ ((m, u),\, (n, v)) \mid u \in vs,\ (n, v) \in S_\Theta,\ (\langle n, v, m \rangle, u) \in S \}
  \]
  are a valid resolution and parent relation in $\mathcal{S}_\Theta(\Delta_C, \Theta, g, r_C)$.
\end{theorem}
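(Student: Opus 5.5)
We verify the four conditions of Def.~\ref{def.peer-dependency-resolution} directly against the core conditions of Def.~\ref{def.calculus.resolution} for $S$, following the pattern of Thm.~\ref{thm.concurrent-reduction-soundness}. Throughout we assume three things. First, the synthetic names $\langle n, g(v)\rangle$ and $\langle n, v, m\rangle$ are pairwise distinct, and distinct from each other as tag forms. Second, $\Delta_C$ is functional in name (Def.~\ref{def.calculus.dependency}). Together these mean that a depender $q=(n,v)$ and a dependee name $m$ determine a unique dependency $q \Delta_C (m, us)$ and a unique intermediate $i=\langle n,v,m\rangle$. Third, $\pi \subseteq S_\Theta \times S_\Theta$ holds by construction. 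The easy conditions come first. \emph{Root inclusion} holds because $r=(\langle n,g(v)\rangle,v)\in S$ gives $r_C=(n,v)\in S_\Theta$. \emph{Version granularity} holds because two elements $(n,v),(n,v')\in S_\Theta$ with $g(v)=g(v')$ yield two packages of the same core name $\langle n,g(v)\rangle$ in $S$, so version uniqueness in the core forces $v=v'$.

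For \emph{parent closure}, take $p=(n,v)\in S_\Theta$ with $p \Delta_C (m, vs)$, and let $i=\langle n,v,m\rangle$. Since $(\langle n,g(v)\rangle,v)\in S$ depends on $(i,vs)$, dependency closure gives some $u\in vs$ with $(i,u)\in S$. The package $(i,u)$ depends on $(\langle m,g(u)\rangle,\{u\})$, so $(m,u)\in S_\Theta$, and hence $((m,u),p)\in\pi$. For uniqueness, any $u'$ with $((m,u'),p)\in\pi$ via this dependency requires $(i,u')\in S$, and version uniqueness on the name $i$ gives $u'=u$. This step also shows that the extra peer versions $(i,w)$ with $w\notin vs$, which Def.~\ref{def.peer-dependency-reduction.packages} adds to $R$, can never appear in $S$: once some $(i,u)$ with $u \in vs$ is selected, uniqueness on $i$ excludes every other version.

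The main step is \emph{peer satisfaction}. Suppose $p=(o,u)\in S_\Theta$, $p\,\Theta\,(m,ws)$, $(p,q)\in\pi$ with $q=(n,v)$, $q\Delta_C(m,us)$, $x\in us$, and $((m,x),q)\in\pi$; we must show $x\in ws$. Unfolding $(p,q)\in\pi$ yields the dependency $q\Delta_C(o,us')$ with $u\in us'$ and $(\langle n,v,o\rangle,u)\in S$. Then $(o,u,ws)\in\mathit{Peers}$ for the dependency $q\Delta_C(m,us)$. By Def.~\ref{def.peer-dependency-reduction.dependencies}, this gives $(\langle n,v,o\rangle,u)\,\Delta\,(i,ws)$ with $i=\langle n,v,m\rangle$. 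Dependency closure then supplies some $w\in ws$ with $(i,w)\in S$. Separately, unfolding $((m,x),q)\in\pi$ gives $(i,x)\in S$. Version uniqueness on $i$ now forces $x=w\in ws$.

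The obstacle I expect is bookkeeping rather than mathematics. The memberships in $\pi$ are indexed by a union over dependencies, so we must pin down that the same intermediate $i$ appears in both unfoldings. Functional-in-name and the freshness of synthetic names handle this. If the freshness assumption needs to be made explicit, I would state it as a side condition on $N$, as the Lean mechanisation presumably does.
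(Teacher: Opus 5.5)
Your proposal is correct and takes essentially the same approach as the paper: it checks each condition of Def.~\ref{def.peer-dependency-resolution} directly against root inclusion, dependency closure and version uniqueness of the core resolution $S$, with version uniqueness on the intermediate name $\langle n, v, m\rangle$ giving both the uniqueness half of parent closure and peer satisfaction. One small correction: $\pi \subseteq S_\Theta \times S_\Theta$ is not purely by construction, because the child side $(m,u) \in S_\Theta$ needs the core dependency of $(\langle n, v, m\rangle, u)$ on $(\langle m, g(u)\rangle, \{u\})$ (which you do invoke in the parent-closure step), and likewise $S_\Theta \subseteq R_C$ relies on your freshness assumption.
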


\begin{theorem}[Completeness]
  \label{thm.peer-dependency-reduction-completeness}
  If $(S_\Theta, \pi) \in \mathcal{S}_\Theta(\Delta_C, \Theta, g, r_C)$, then
  \[
    S = \bigcup_{(n, v) \in S_\Theta} \{ (\langle n, g(v) \rangle, v) \}
    \cup \bigcup_{(n, v) \Delta_C (m, vs)} \left\{ (\langle n, v, m \rangle, u) \;\middle|\;
    \begin{aligned}
      &(n, v) \in S_\Theta,\ (m, u) \in S_\Theta, \\
      &u \in vs,\ ((m, u), (n, v)) \in \pi
    \end{aligned}
    \right\}
  \]
  is a valid resolution in $\mathcal{S}(\Delta, r)$.
\end{theorem}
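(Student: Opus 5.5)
We check the three core conditions of Def.~\ref{def.calculus.resolution} for the constructed $S$ against the $\Delta$ of Def.~\ref{def.peer-dependency-reduction}.
Before that, we confirm that $S \subseteq R$. A package $(\langle n, g(v)\rangle, v)$ with $(n,v) \in S_\Theta \subseteq R_C$ lies in the first component of $R$. An intermediate $(\langle n,v,m\rangle,u)$ with $u \in vs$ lies in $\{(i,u) \mid u \in vs\}$.
We also record that the reduction's name map is injective: $\langle n, g(v)\rangle$ and $\langle n,v,m\rangle$ are distinct tagged names. This lets us read off the origin of each package of $S$.

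\textbf{Root inclusion.} Since $r_C \in S_\Theta$ by Concurrent Resolution (Def.~\ref{def.peer-dependency-resolution.concurrent-resolution}, via Def.~\ref{def.concurrent-resolution.root-inclusion}), we get $r = (\langle n, g(v)\rangle, v) \in S$.

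\textbf{Version uniqueness.} There are two kinds of name to check.
\begin{itemize}
\item For a name $\langle m, w\rangle$: two members $(\langle m, g(u)\rangle,u)$ and $(\langle m, g(u')\rangle,u')$ with $g(u)=g(u')=w$ come from $(m,u),(m,u') \in S_\Theta$. Version Granularity (Def.~\ref{def.concurrent-resolution.version-granularity}) then forces $u=u'$.
\item For an intermediate name $\langle n,v,m\rangle$: two members $u,u'$ both satisfy $((m,u),(n,v)),((m,u'),(n,v)) \in \pi$ with $u,u' \in vs$. By functionality in name there is a single dependency $(n,v)\Delta_C(m,vs)$. The uniqueness quantifier $\exists!$ in Parent Closure (Def.~\ref{def.concurrent-resolution.parent-closure}) then gives $u=u'$.
\end{itemize}
One subtlety must be handled: $\pi$ might relate $(m,u)$ to $(n,v)$ through some other dependency. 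Functionality in name rules this out, since each pair $(n,v),m$ determines a single $vs$.

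\textbf{Dependency closure.} We go through the three clauses of $\Delta$ in turn.
\begin{enumerate}
\item Take $(\langle n,g(v)\rangle,v) \in S$ with $(n,v)\Delta_C(m,vs)$. Parent Closure provides some $u \in vs$ with $(m,u)\in S_\Theta$ and $((m,u),(n,v))\in\pi$. Hence $(\langle n,v,m\rangle,u) \in S$, which lies in the required set $vs$.
\item Take $(i,u) \in S$ for $i=\langle n,v,m\rangle$. By construction $(m,u) \in S_\Theta$, so $(\langle m, g(u)\rangle,u) \in S$, and this satisfies the dependency on $\{u\}$.
\item This is the peer clause, and we expect it to be the main obstacle. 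Suppose $(\langle n,v,o\rangle,u) \in S$ with $(o,u,ws) \in \mathit{Peers}$, meaning $(n,v)\Delta_C(o,us)$, $u\in us$, and $(o,u)\Theta(m,ws)$. We must exhibit a version $w \in ws$ with $(\langle n,v,m\rangle,w)\in S$.
  \begin{itemize}
  \item Membership of $(\langle n,v,o\rangle,u)$ in $S$ gives $(o,u)\in S_\Theta$ and $((o,u),(n,v))\in\pi$.
  \item Since $(n,v)\in S_\Theta$, Parent Closure applied to $(n,v)\Delta_C(m,vs)$ yields a $w\in vs$ with $(m,w)\in S_\Theta$ and $((m,w),(n,v))\in\pi$. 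So $(\langle n,v,m\rangle,w)\in S$.
  \item It remains to show $w\in ws$. This is exactly Peer Satisfaction (Def.~\ref{def.peer-dependency-resolution.peer-satisfaction}), instantiated with $p=(o,u)$, $q=(n,v)$, $us:=vs$ and $v:=w$.
  \end{itemize}
  The care needed here is in aligning these variable names and in invoking functionality in name, so that the dependency $q\Delta_C(m,\cdot)$ used in Peer Satisfaction is the same one whose intermediate we constructed.
\end{enumerate}

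Together these establish root inclusion, dependency closure and version uniqueness, so $S \in \mathcal{S}(\Delta, r)$.
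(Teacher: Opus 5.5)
Your proof is correct and takes the same direct, condition-by-condition route as the paper, whose proofs are mechanised in Lean rather than written out. You get root inclusion from $r_C \in S_\Theta$, version uniqueness by cases on the two kinds of synthetic name (Version Granularity, and the $\exists!$ of Parent Closure with functionality in name), and dependency closure by cases on the three clauses of $\Delta$, with Peer Satisfaction discharging the peer clause; one minor point is that functionality in name is not needed in that peer clause, since Peer Satisfaction quantifies over every $us$ with $q \Delta_C (m, us)$, so you can instantiate it directly with the $vs$ of the dependency $(n, v) \Delta_C (m, vs)$ that indexes the clause.
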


\subsection{Features}
\label{sec.mise-en-place.features}

Features~(\S\ref{sec.background.survey.features}) are optional functionality of a package that may add dependencies, with selections unified across all dependers.
The core has no notion of parameterised dependency or unification; we extend it with both.

\begin{definition}[Feature]
  \label{def.feature}
  We define:
  \begin{subdefinition}
    \item\label{def.feature.features}
    $F$ as the finite set of possible features.
    \item\label{def.feature.support}
    $\mathit{support} \subseteq (N \times V) \times F$ as a relation where $(p, f) \in \mathit{support}$ denotes that package $p$ supports feature $f$.
  \end{subdefinition}
\end{definition}

\begin{definition}[Feature Dependency]
  \label{def.feature-dependency}
  We define:
  \begin{subdefinition}
    \item\label{def.feature-dependency.parameterised}
    Parameterised dependencies $\Delta_f \subseteq (N \times V) \times (N \times \mathcal{P}(V) \times \mathcal{P}(F))$ as a relation where an element $p \Delta_f (n, vs, fs)$ denotes that package $p \in N \times V$ depends on package name $n \in N$, with a set of compatible versions $vs \subseteq V$, and a set of required features $fs \subseteq F$.
    \item\label{def.feature-dependency.additional}
    Additional dependencies $\Delta_a \subseteq ((N \times V) \times F) \times (N \times \mathcal{P}(V) \times \mathcal{P}(F))$ as a relation where an element $(p, f) \Delta_a (n, vs, fs)$ denotes that package $p \in N \times V$ with feature $f \in F$ depends on $(n, vs, fs)$ as in $\Delta_f$.
  \end{subdefinition}
\end{definition}

\begin{definition}[Feature Resolution]
  \label{def.feature-resolution}
  Given parameterised dependencies $\Delta_f$, additional dependencies $\Delta_a$, and root $r_f$ with no features ($\forall\, f.\, (r_f, f) \notin \mathit{support}$), a resolution $S_f \subseteq R_f \times \mathcal{P}(F)$ -- where an element $(p, fs) \in S_f$ represents package $p$ selected with features $fs \subseteq F$ -- is valid if:
  \begin{subdefinition}
    \item\label{def.feature-resolution.root-inclusion}
    \textbf{Root Inclusion}: $(r_f, \emptyset) \in S_f$
    \item\label{def.feature-resolution.parameterised-dependency-closure}
    \textbf{Parameterised Dependency Closure}:
    \[\forall\, (p, fs_p) \in S_f.\, p \Delta_f (n, vs, fs) \implies \exists\, v \in vs,\, fs' \supseteq fs.\, ((n, v), fs') \in S_f\]
    A dependency parameterised by features must be satisfied by a package with a compatible version and selected features.
    \item\label{def.feature-resolution.additional-dependency-closure}
    \textbf{Additional Dependency Closure}:
    \[\forall\, (p, fs_p) \in S_f,\, f \in fs_p.\, (p, f) \Delta_a (n, vs, fs) \implies \exists\, v \in vs,\, fs' \supseteq fs.\, ((n, v), fs') \in S_f\]
    A dependency added by a selected feature of the depender must be satisfied as with parameterised dependencies.
    \item\label{def.feature-resolution.feature-unification}
    \textbf{Feature Unification}: $\forall\, ((n, v), fs),\, ((n, v'), fs') \in S_f.\, fs = fs'$
    \item\label{def.feature-resolution.version-uniqueness}
    \textbf{Version Uniqueness}: $\forall\, ((n, v), fs),\, ((n, v'), fs') \in S_f.\, v = v'$
    \item\label{def.feature-resolution.support}
    \textbf{Support}: $\forall\, (p, fs) \in S_f,\, f \in fs.\, (p, f) \in \mathit{support}$ \\
    Every selected feature of a package must be in the support relation.
  \end{subdefinition}
  We write $\mathcal{S}_f(\Delta_f, \Delta_a, r_f)$ for the set of all resolutions of $r_f$ in $\Delta_f$ and $\Delta_a$.
\end{definition}

To reduce the Feature Package Calculus to the core calculus, we create packages representing features selected for a package~\cite{pubgrub-rs}.
Fig.~\ref{fig.features} shows an instance of the Feature Package Calculus and its reduction to the core.

\begin{figure}[ht]
  \captionsetup{justification=centering}
  \begin{lrbox}{\sfboxa}\begin{minipage}{0.30\textwidth}\centering
    \begin{align*}
      (A, 1) &\Delta_f (B, \{1\}, \emptyset) \\
      (A, 1) &\Delta_f (C, \{1\}, \emptyset) \\
      (B, 1) &\Delta_f (D, \{1\}, \{\alpha, \beta\}) \\
      (C, 1) &\Delta_f (D, \{1\}, \{\beta\}) \\
      ((D, 1), \alpha) &\Delta_a (E, \{1\}, \emptyset) \\
      ((D, 1), \beta) &\Delta_a (F, \{1\}, \emptyset)
    \end{align*}
  \end{minipage}\end{lrbox}%
  \begin{lrbox}{\sfboxb}\begin{minipage}{0.30\textwidth}\centering
    \includegraphics[scale=0.8]{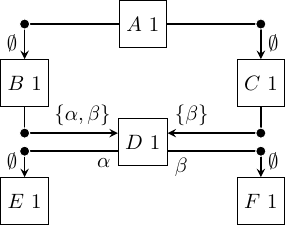}%
  \end{minipage}\end{lrbox}%
  \begin{lrbox}{\sfboxc}\begin{minipage}{0.43\textwidth}
    \begin{minipage}[t]{0.49\linewidth}
      \[\begin{aligned}
        (A, 1) &\Delta (B, \{1\}) \\
        (A, 1) &\Delta (C, \{1\}) \\
        (B, 1) &\Delta (\langle D, \alpha \rangle, \{1\}) \\
        (B, 1) &\Delta (\langle D, \beta \rangle, \{1\}) \\
        (C, 1) &\Delta (\langle D, \beta \rangle, \{1\})
      \end{aligned}\]
    \end{minipage}%
    \hfill
    \begin{minipage}[t]{0.49\linewidth}
      \[\begin{aligned}
        (\langle D, \alpha \rangle, 1) &\Delta (D, \{1\}) \\
        (\langle D, \alpha \rangle, 1) &\Delta (E, \{1\}) \\
        (\langle D, \beta \rangle, 1) &\Delta (D, \{1\}) \\
        (\langle D, \beta \rangle, 1) &\Delta (F, \{1\})
      \end{aligned}\]
    \end{minipage}
  \end{minipage}\end{lrbox}%
  \begin{lrbox}{\sfboxd}\begin{minipage}{0.43\textwidth}\centering
    \includegraphics[scale=0.8]{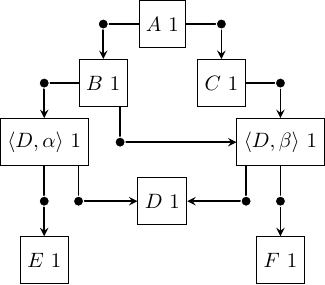}%
  \end{minipage}\end{lrbox}%
  \setlength{\sfrowht}{0pt}\sfmax{\sfboxa}\sfmax{\sfboxc}%
  \setlength{\sfrowhtb}{0pt}\sfmaxb{\sfboxb}\sfmaxb{\sfboxd}%
  \hfill
  \begin{subfigure}[b]{0.20\textwidth}
    \begin{minipage}[c][\dimexpr\sfrowht+\sfrowhtb+2\baselineskip\relax][c]{\linewidth}
{\scriptsize\bfseries\ttfamily A/Cargo.toml}
\begin{Verbatim}[fontsize=\scriptsize]
[dependencies]
B = "1"
C = "1"
\end{Verbatim}
{\scriptsize\bfseries\ttfamily B/Cargo.toml}
\begin{Verbatim}[fontsize=\scriptsize]
[dependencies.D]
version = "1"
features = [
  "alpha", "beta"]
\end{Verbatim}
{\scriptsize\bfseries\ttfamily C/Cargo.toml}
\begin{Verbatim}[fontsize=\scriptsize]
[dependencies.D]
version = "1"
features = ["beta"]
\end{Verbatim}
{\scriptsize\bfseries\ttfamily D/Cargo.toml}
\begin{Verbatim}[fontsize=\scriptsize]
[features]
alpha = ["dep:E"]
beta = ["dep:F"]
[dependencies]
E.version = "1"
E.optional = true
F.version = "1"
F.optional = true
\end{Verbatim}
    \end{minipage}
    \caption{Cargo syntax.}
    \label{fig.features.cargo}
  \end{subfigure}%
  \hfill
  \begin{subfigure}[b]{0.30\textwidth}
    \sfcell{\usebox{\sfboxa}}
    \begin{minipage}[c][2\baselineskip][c]{\linewidth}
      \caption{Feature Package Calculus.}
      \label{fig.features.calculus}
    \end{minipage}
    \sfcellb{\usebox{\sfboxb}}
    \Description{Hypergraph rendering of the feature calculus dependencies given alongside.}
    \caption{Hypergraph illustration.}
    \label{fig.features.hypergraph}
  \end{subfigure}%
  \hfill
  \begin{subfigure}[b]{0.43\textwidth}
    \sfcell{\usebox{\sfboxc}}
    \begin{minipage}[c][2\baselineskip][c]{\linewidth}
      \caption{Reduction to the core calculus.}
      \label{fig.feature-reduction.reduction}
    \end{minipage}
    \sfcellb{\usebox{\sfboxd}}
    \Description{Hypergraph rendering of the reduction formula given alongside.}
    \caption{Hypergraph of reduction.}
    \label{fig.feature-reduction.hypergraph}
  \end{subfigure}%
  \hfill\null
  \caption{Feature Package Calculus instance reduced to the core calculus.}
  \label{fig.features}
\end{figure}

\newlength{\opsubwd}
\begin{definition}[Feature Reduction]
  \label{def.feature-reduction}
  Given $R_f$, $\Delta_f$, $\Delta_a$, $\mathit{support}$, and $r_f$, we define a reduction to the core calculus with $R$, $\Delta$, and $r$.
  A dependee $(m, vs, fs)$ reduces to a set of \textit{targets} -- the base name when it requires no features, or its feature packages otherwise:
  \[
    t(m, vs, fs) =
    \begin{cases}
      \{ (m,\, vs) \} & \text{if } fs = \emptyset \\
      \{ (\langle m, f' \rangle,\, vs) \mid f' \in fs \} & \text{if } fs \neq \emptyset
    \end{cases}
  \]
  \begin{subdefinition}
    \item\label{def.feature-reduction.packages}
    \textbf{Packages}: $r = r_f$, and each supported feature gives a feature package name $\langle n, f \rangle \in N$:
    \[
      R = R_f \cup \bigcup_{(n, v) \in R_f} \{ (\langle n, f \rangle, v) \mid ((n, v), f) \in \mathit{support} \}
    \]
    \item\label{def.feature-reduction.dependencies}
    \textbf{Dependencies}: each synthetic feature package depends on its base package; each parameterised or additional dependency reduces to one dependency per target:
    \settowidth{\opsubwd}{$\scriptstyle ((n, v), f) \Delta_a (m, vs, fs)$}%
    \[
      \begin{aligned}
        \Delta = {}& \bigcup_{\mathmakebox[\opsubwd]{(n, v) \in R_f}} \{ ((\langle n, f \rangle, v),\, (n,\, \{v\})) \mid ((n, v), f) \in \mathit{support} \} \\
        {}\cup{}& \bigcup_{\mathmakebox[\opsubwd]{p \Delta_f (n, vs, fs)}} \{ (p,\, d) \mid d \in t(n, vs, fs) \} \\
        {}\cup{}& \bigcup_{\mathmakebox[\opsubwd]{((n, v), f) \Delta_a (m, vs, fs)}} \{ ((\langle n, f \rangle, v),\, d) \mid d \in t(m, vs, fs) \}
      \end{aligned}
    \]
  \end{subdefinition}
\end{definition}

\begin{theorem}[Soundness]
  \label{thm.feature-reduction-soundness}
  If $S \in \mathcal{S}(\Delta, r)$, then
    \[
      S_f = \{ ((n, v),\, \{ f \mid (\langle n, f \rangle, v) \in S \}) \mid (n, v) \in S \cap R_f \}
    \]
  is a valid resolution in $\mathcal{S}_f(\Delta_f, \Delta_a, r_f)$.
\end{theorem}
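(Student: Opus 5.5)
The proof checks each clause of Feature Resolution (Def.~\ref{def.feature-resolution}) directly for the $S_f$ of the statement, using only that $S \subseteq R$ and that $S$ satisfies root inclusion, dependency closure and version uniqueness in the reduced instance (Def.~\ref{def.feature-reduction}). Throughout I assume, as the reduction implicitly does, that the synthetic names $\langle n, f\rangle$ are fresh. That is, they are distinct from every name used in $R_f$ and distinct from one another for distinct $(n,f)$. For $(n,v)\in S\cap R_f$, write $F_S(n,v) = \{ f \mid (\langle n, f \rangle, v) \in S \}$ for the feature set attached to it in $S_f$.

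\textbf{Two auxiliary facts.} I would establish these first, since every clause reduces to them.
\begin{enumerate}
  \item If $(\langle n, f\rangle, v) \in S$, then $((n,v),f)\in\mathit{support}$ and $(n,v)\in S$. The first holds because $S\subseteq R$, and the only packages of $R$ with name $\langle n,f\rangle$ are those added for supported features. Given this, the dependency $(\langle n,f\rangle,v)\,\Delta\,(n,\{v\})$ is present, and dependency closure yields $(n,v)\in S$.
  \item Any $(n,v)\in S$ whose name $n$ is a name of $R_f$ lies in $S\cap R_f$. This follows from freshness of the synthetic names.
\end{enumerate}

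\textbf{Easy clauses.}
\begin{itemize}
  \item \emph{Support} is exactly the first half of fact~1.
  \item \emph{Root inclusion}: $r=r_f\in S\cap R_f$, and $F_S(r_f)=\emptyset$. Any feature in $F_S(r_f)$ would be supported by fact~1, but the root supports no features by hypothesis, so $(r_f,\emptyset)\in S_f$.
  \item \emph{Version uniqueness} for $S_f$ is inherited from version uniqueness of $S$ on base names.
  \item \emph{Feature unification} follows from version uniqueness. Two entries $((n,v),fs)$ and $((n,v'),fs')$ of $S_f$ have $v=v'$, and $S_f$ assigns each package the single set $F_S(n,v)$, so $fs=fs'$.
\end{itemize}

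\textbf{Main step: the two dependency-closure clauses.} Take $(p,fs_p)\in S_f$ with $p\,\Delta_f\,(n,vs,fs)$, and split on whether $fs$ is empty, following the definition of the targets $t$.
\begin{itemize}
  \item If $fs=\emptyset$, the target is $(n,vs)$ itself. Closure in $S$ gives some $v\in vs$ with $(n,v)\in S$. By fact~2, $((n,v),F_S(n,v))\in S_f$, and $F_S(n,v)\supseteq\emptyset$ trivially.
  \item If $fs\neq\emptyset$, each $f\in fs$ yields a target $(\langle n,f\rangle, vs)$. Closure gives $v_f\in vs$ with $(\langle n,f\rangle,v_f)\in S$, and fact~1 gives $(n,v_f)\in S$.
\end{itemize}

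The expected obstacle is in the second case: the witnesses $v_f$ are chosen separately for each feature, and I must show they coincide. Version uniqueness on the base name $n$ forces all $v_f$ equal to a single $v\in vs$. Hence $fs\subseteq F_S(n,v)$, and the single element $((n,v),F_S(n,v))\in S_f$ witnesses the clause with $fs'=F_S(n,v)\supseteq fs$.

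\emph{Additional dependency closure} is the same argument once the depender is shifted. Given $((n,v),fs_p)\in S_f$ with $f\in fs_p$ and $((n,v),f)\,\Delta_a\,(m,vs,fs)$, the definition of $F_S$ gives $(\langle n,f\rangle,v)\in S$. The reduction includes $((\langle n,f\rangle,v),d)\in\Delta$ for each $d\in t(m,vs,fs)$, so closure of $S$ at $(\langle n,f\rangle,v)$ feeds into the same case analysis on $fs$ as above.
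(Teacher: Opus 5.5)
Your proof is correct and takes the natural route; the paper gives no written proof (it defers to the Lean mechanisation), and your clause-by-clause check, with version uniqueness on the base name $n$ collapsing the per-feature witnesses $v_f$ into one version carrying all of $fs$, is the expected argument. One small tightening: the freshness of the names $\langle n, f\rangle$ must hold relative to every name of the extended instance, including dependee names in $\Delta_f$, $\Delta_a$ and the root's name, not only names occurring in $R_f$. In the $fs=\emptyset$ case you apply fact~2 to a dependee name not yet known to occur in $R_f$, and if that name coincided with some $\langle n', f'\rangle$ the conclusion would actually fail.
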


\begin{theorem}[Completeness]
  \label{thm.feature-reduction-completeness}
  If $S_f \in \mathcal{S}_f(\Delta_f, \Delta_a, r_f)$, then
    \[
      S = \bigcup_{((n, v), fs) \in S_f} \Big( \{ (n, v) \} \cup \{ (\langle n, f \rangle, v) \mid f \in fs \} \Big)
    \]
  is a valid resolution in $\mathcal{S}(\Delta, r)$.
\end{theorem}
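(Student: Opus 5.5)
The plan is to unfold Def.~\ref{def.calculus.resolution} for the constructed $S$ and check, in order, that $S \subseteq R$, root inclusion, version uniqueness, and dependency closure. Each check goes back to the matching clause of Def.~\ref{def.feature-resolution} for $S_f$. Throughout I will assume, as the reduction implicitly does, that the synthetic names $\langle n, f \rangle$ are fresh: distinct from every name in $R_f$ and from each other for distinct $(n, f)$. Then every element of $S$ is unambiguously either a base package $(n, v) \in R_f$ or a feature package $(\langle n, f \rangle, v)$. Moreover, each element of either kind arises from some $((n, v), fs) \in S_f$, with $f \in fs$ in the second case.

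\textbf{Membership and root inclusion.} A base element $(n, v)$ comes from $((n, v), fs) \in S_f \subseteq R_f \times \mathcal{P}(F)$, so it lies in $R_f \subseteq R$. A feature element $(\langle n, f \rangle, v)$ has $f \in fs$, so Support (Def.~\ref{def.feature-resolution.support}) gives $((n, v), f) \in \mathit{support}$. Hence it is in $R$ by Def.~\ref{def.feature-reduction.packages}. Root inclusion follows from $(r_f, \emptyset) \in S_f$, which puts $r = r_f$ in $S$.

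\textbf{Version uniqueness.} Take two elements of $S$ with the same name. If the name is a base name $n$, both come from entries $((n, v), fs), ((n, v'), fs') \in S_f$, and Def.~\ref{def.feature-resolution.version-uniqueness} gives $v = v'$. If the name is $\langle n, f \rangle$, both again come from $S_f$-entries with base name $n$, so the same clause applies. Freshness rules out a clash between a base name and a feature name.

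\textbf{Dependency closure.} This is the main step, and I would split it over the three unions in Def.~\ref{def.feature-reduction.dependencies}.

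\begin{enumerate}
\item For $(\langle n, f \rangle, v) \Delta (n, \{v\})$ with $(\langle n, f \rangle, v) \in S$: the generating entry $((n, v), fs)$ also contributes the base element $(n, v)$ to $S$, which satisfies the dependency.

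\item For $p \Delta (m, \cdot)$ arising from $p \Delta_f (n, vs, fs)$ with $p \in S$: the depender $p$ is in $R_f$, so $p$ is a base element and $(p, fs_p) \in S_f$ for some $fs_p$. Parameterised Dependency Closure then yields $v \in vs$ and $fs' \supseteq fs$ with $((n, v), fs') \in S_f$.
\begin{itemize}
\item If $fs = \emptyset$, the unique target is $(n, vs)$, satisfied by $(n, v) \in S$.
\item Otherwise each target is $(\langle n, f' \rangle, vs)$ with $f' \in fs \subseteq fs'$. It is satisfied by $(\langle n, f' \rangle, v) \in S$.
\end{itemize}

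\item For dependencies of $(\langle n, f \rangle, v)$ arising from $((n, v), f) \Delta_a (m, vs, fs)$: membership of $(\langle n, f \rangle, v)$ in $S$ yields $((n, v), fs_p) \in S_f$ with $f \in fs_p$. Additional Dependency Closure then gives the same kind of witness, and the same case analysis on $t(m, vs, fs)$ finishes.
\end{enumerate}

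One point to note is that $\Delta$ may fail to be functional in name, since several dependencies can target the same $\langle m, f' \rangle$. Closure is checked per tuple, though, so this does not affect the argument. I expect the only real obstacle to be bookkeeping. Every element of $S$ must be traced back to a unique $S_f$-entry, which is where the freshness of synthetic names and Feature Unification/Version Uniqueness do the work. Once that correspondence is in place, each closure case is a direct instance of the corresponding closure clause of Def.~\ref{def.feature-resolution}.
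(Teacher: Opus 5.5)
Your proposal is correct and is essentially the paper's argument: a direct check of the core conditions, with $S \subseteq R$ from Support, root inclusion and version uniqueness carried over from $S_f$, and dependency closure split over the three unions of $\Delta$ using Parameterised/Additional Dependency Closure and the case split on $t$. One small tightening: your step ``a $\Delta_f$-depender $p \in S$ is a base element'' needs the names $\langle n, f \rangle$ to be fresh with respect to every name in the input (including dependers of $\Delta_f$), not only names in $R_f$. Also, only the existence of a generating $S_f$-entry is ever used, so Feature Unification plays no role.
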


\subsection{Package Formulae}
\label{sec.mise-en-place.package-formula}

The core's dependency relation is conjunctive: every dependency must be satisfied.
Some packaging DSLs allow Boolean formulae~(\S\ref{sec.background.survey.package-formula}) which a conjunction of dependencies alone cannot express.

\begin{definition}[Package Formula]
  \label{def.package-formula}
  We define:
  \begin{subdefinition}
    \item\label{def.package-formula.grammar}
    The set of package formulae $\Psi$ is given by the grammar:
    \[\psi ::= (n, vs) \mid \psi \land \psi \mid \psi \lor \psi \mid \neg\psi \qquad n \in N,\ vs \subseteq V\]
    \item\label{def.package-formula.satisfaction}
    The satisfaction relation $S_\Psi \models \psi$ (resolution $S_\Psi$ satisfies formula $\psi$) is defined:
    \begin{gather*}
      \inferrule*[right=DEP]{\exists\, v \in vs.\, (n, v) \in S_\Psi}{S_\Psi \models (n, vs)} \qquad
      \inferrule*[right=AND]{S_\Psi \models \psi_L \\ S_\Psi \models \psi_R}{S_\Psi \models \psi_L \land \psi_R} \qquad
      \inferrule*[right=NOT]{S_\Psi \not\models \psi}{S_\Psi \models \neg \psi} \\[1.5ex]
      \inferrule*[right=OR-L]{S_\Psi \models \psi_L}{S_\Psi \models \psi_L \lor \psi_R} \qquad
      \inferrule*[right=OR-R]{S_\Psi \models \psi_R}{S_\Psi \models \psi_L \lor \psi_R}
    \end{gather*}
  \end{subdefinition}
\end{definition}

\begin{definition}[Package Formula Dependency]
  \label{def.package-formula-dependency}
  We define package formula dependencies $\Delta_\Psi \subseteq (N \times V) \times \Psi$ as a relation where an element $p \Delta_\Psi \psi$ denotes that package $p$ depends on formula $\psi$.
\end{definition}

\begin{definition}[Package Formula Resolution]
  \label{def.package-formula-resolution}
  Given dependencies $\Delta_\Psi$ and root $r_\Psi$, a resolution $S_\Psi \subseteq R_\Psi$ is valid if:
  \begin{subdefinition}
    \item\label{def.package-formula-resolution.root-inclusion}
    \textbf{Root Inclusion}: $r_\Psi \in S_\Psi$, as in Def.~\ref{def.calculus.resolution.root-inclusion}.
    \item\label{def.package-formula-resolution.formula-closure}
    \textbf{Formula Closure}: $\forall\, p \in S_\Psi.\, p \Delta_\Psi \psi \implies S_\Psi \models \psi$
    \item\label{def.package-formula-resolution.version-uniqueness}
    \textbf{Version Uniqueness}: $\forall\, (n, v),\, (n, v') \in S_\Psi.\, v = v'$, as in Def.~\ref{def.calculus.resolution.version-uniqueness}.
  \end{subdefinition}
  We write $\mathcal{S}_\Psi(\Delta_\Psi, r_\Psi)$ for the set of all resolutions of $r_\Psi$ in $\Delta_\Psi$.
\end{definition}

We reduce the Package Formula Calculus to the core calculus using synthetic packages for disjunction and the conflict encoding from Def.~\ref{def.conflict-reduction} for negation.
Each disjunction node introduces one synthetic package name with two versions (one per disjunct), analogous to a Tseitin auxiliary variable~\cite{tseitin1968complexity}, keeping the reduction linear in the formula size and avoiding the exponential blowup of na\"{i}ve conversion to conjunctive normal form (CNF).
Fig.~\ref{fig.package-formula} shows an instance reduced to the core.

\begin{figure}[ht]
  \captionsetup{justification=centering}
  \begin{lrbox}{\sfboxa}\begin{minipage}{0.17\textwidth}%
\begin{Verbatim}[fontsize=\footnotesize]
# A-1.ebuild
DEPEND="|| (
  ( =B-2 =C-1 )
  ( =B-1 !!=C-1 )
)"
\end{Verbatim}
  \end{minipage}\end{lrbox}%
  \begin{lrbox}{\sfboxb}\begin{minipage}{0.69\textwidth}\centering
    \includegraphics[scale=0.8]{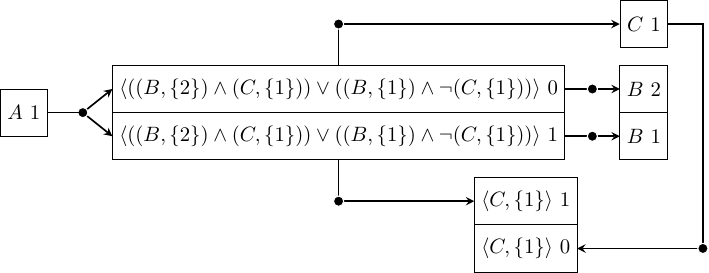}%
  \end{minipage}\end{lrbox}%
  \setlength{\sfrowht}{0pt}\sfmax{\sfboxa}\sfmax{\sfboxb}%
  \hfill
  \begin{subfigure}[b]{0.17\textwidth}
    \sfcell{\usebox{\sfboxa}}
    \caption{Portage syntax.}
    \label{fig.package-formula.portage}
  \end{subfigure}%
  \hfill
  \begin{subfigure}[b]{0.69\textwidth}
    \sfcell{\usebox{\sfboxb}}
    \Description{Hypergraph rendering of the package-formula reduction whose source is given in the caption.}
    \caption{Hypergraph of reduction.}
    \label{fig.package-formula.hypergraph-reduction}
  \end{subfigure}%
  \hfill\null
  \caption{Package Formula Calculus instance $(A, 1) \Delta_\Psi (((B, \{2\}) \land (C, \{1\})) \lor ((B, \{1\}) \land \neg (C, \{1\})))$ reduced to the core calculus.}
  \label{fig.package-formula}
\end{figure}

\pagebreak

\begin{definition}[Package Formula Reduction]
  \label{def.package-formula-reduction}
  Given $R_\Psi$, $\Delta_\Psi$, and $r_\Psi$, we define a reduction to the core calculus with $R$, $\Delta$, and $r$ as follows:
  \begin{subdefinition}
    \item\label{def.package-formula-reduction.packages}
    \textbf{Packages}: $r = r_\Psi$, and synthetic packages for each disjunction and negated atom arising in the encoding $\mathcal{E}$ (Def.~\ref{def.package-formula-reduction.formula}), including those introduced by its De Morgan rewriting:
    \[
      R = R_\Psi
      \cup \bigcup_{\substack{\psi_L \lor \psi_R \\ \text{arising in } \mathcal{E}}} \{ (\langle \psi_L \lor \psi_R \rangle, 0),\, (\langle \psi_L \lor \psi_R \rangle, 1) \}
      \cup \bigcup_{\substack{\neg (n, vs) \\ \text{arising in } \mathcal{E}}} \{ (\langle n, vs \rangle, 0),\, (\langle n, vs \rangle, 1) \}
    \]
    where $\langle \psi_L \lor \psi_R \rangle, \langle n, vs \rangle \in N$ and $0, 1 \in V$.
    \item\label{def.package-formula-reduction.formula}
    \textbf{Formula}:
    we define the encoding function $\mathcal{E}: \Delta_\Psi \to \mathcal{P}(\Delta)$:
    \[
      \begin{aligned}
        \mathcal{E}(p \Delta_\Psi (n, vs)) &= \{ (p, (n, vs)) \} \\[4pt]
        \mathcal{E}(p \Delta_\Psi (\psi_L \land \psi_R)) &= \mathcal{E}(p \Delta_\Psi \psi_L) \cup \mathcal{E}(p \Delta_\Psi \psi_R) \\[4pt]
        \mathcal{E}(p \Delta_\Psi (\psi_L \lor \psi_R)) &= \{ (p, (i, \{0, 1\})) \} \cup \mathcal{E}((i, 0) \Delta_\Psi \psi_L) \cup \mathcal{E}((i, 1) \Delta_\Psi \psi_R)  \text{ with } i \coloneqq \langle \psi_L \lor \psi_R \rangle \\[4pt]
        \mathcal{E}(p \Delta_\Psi \neg (n, vs)) &= \{ (p, (\langle n, vs \rangle, \{1\})) \} \cup \{ ((n, u), (\langle n, vs \rangle, \{0\})) \mid u \in vs \} \\[4pt]
        \mathcal{E}(p \Delta_\Psi \neg (\psi_L \land \psi_R)) &= \mathcal{E}(p \Delta_\Psi (\neg \psi_L \lor \neg \psi_R)) \\[4pt]
        \mathcal{E}(p \Delta_\Psi \neg (\psi_L \lor \psi_R)) &= \mathcal{E}(p \Delta_\Psi (\neg \psi_L \land \neg \psi_R)) \\[4pt]
        \mathcal{E}(p \Delta_\Psi \neg \neg \psi) &= \mathcal{E}(p \Delta_\Psi \psi)
      \end{aligned}
    \]
   \item\label{def.package-formula-reduction.dependencies}
   \textbf{Dependencies}: we apply the encoding function to each dependency $\displaystyle \Delta = \bigcup_{p \Delta_\Psi \psi} \mathcal{E}(p \Delta_\Psi \psi)$.
  \end{subdefinition}
\end{definition}

\begin{theorem}[Soundness]
  \label{thm.package-formula-reduction-soundness}
  If $S \in \mathcal{S}(\Delta, r)$, then
  $S_\Psi = S \cap R_\Psi$
  is a valid resolution in $\mathcal{S}_\Psi(\Delta_\Psi, r_\Psi)$.
\end{theorem}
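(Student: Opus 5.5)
The plan is to verify the three conditions of Def.~\ref{def.package-formula-resolution} for $S_\Psi = S \cap R_\Psi$. Throughout I assume, as in the earlier reductions, that the synthetic names $\langle \psi_L \lor \psi_R \rangle$ and $\langle n, vs \rangle$ are disjoint from the names occurring in $R_\Psi$.

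\textbf{Root inclusion and version uniqueness.} Root inclusion is immediate: $r = r_\Psi \in R_\Psi$, and $r \in S$ by Def.~\ref{def.calculus.resolution.root-inclusion}. Version uniqueness is inherited, since $S_\Psi \subseteq S$ and $S$ satisfies Def.~\ref{def.calculus.resolution.version-uniqueness}. All the work is therefore in formula closure.

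\textbf{Key lemma for formula closure.} I would prove the following by induction: for every package $p \in S$ and every formula $\psi$ with $\mathcal{E}(p \Delta_\Psi \psi) \subseteq \Delta$, we have $S_\Psi \models \psi$. Formula closure then follows directly from Def.~\ref{def.package-formula-reduction.dependencies}, since each $\mathcal{E}(p \Delta_\Psi \psi)$ is included in $\Delta$. The induction cannot be on the plain size of $\psi$, because the De Morgan clauses of $\mathcal{E}$ do not shrink the formula. Instead I would induct on the same well-founded measure that makes $\mathcal{E}$ terminate: for instance, lexicographically the number of connectives and then the total depth of negations, so that pushing a $\neg$ inward strictly decreases it.

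\textbf{Cases of the lemma.} These mirror the clauses of $\mathcal{E}$.
\begin{itemize}
\item Atom $(n, vs)$: $p \Delta (n, vs)$, so dependency closure gives some $v \in vs$ with $(n, v) \in S$. Since $n$ is not synthetic, $(n, v) \in R_\Psi$, hence $(n,v) \in S_\Psi$ and rule \textsc{Dep} applies.
\item Conjunction: both sub-encodings lie in $\Delta$, so apply the induction hypothesis twice and use \textsc{And}.
\item Disjunction: $p \Delta (i, \{0,1\})$ with $i = \langle \psi_L \lor \psi_R \rangle$, so $(i, 0) \in S$ or $(i, 1) \in S$. Apply the induction hypothesis to that package with $\psi_L$ or $\psi_R$ respectively, whose encodings are in $\Delta$, and conclude by \textsc{Or-L} or \textsc{Or-R}.
\item Negated atom $\neg (n, vs)$: $p \Delta (\langle n, vs \rangle, \{1\})$ forces $(\langle n, vs \rangle, 1) \in S$. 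Suppose some $u \in vs$ had $(n, u) \in S_\Psi \subseteq S$. The clause $(n, u) \Delta (\langle n, vs \rangle, \{0\})$ would then force $(\langle n, vs \rangle, 0) \in S$, contradicting version uniqueness. So $S_\Psi \not\models (n, vs)$, and \textsc{Not} applies.
\item De Morgan and double negation: the encoding equals that of the rewritten formula, which is smaller under the measure. Apply the induction hypothesis to it, then transport satisfaction back. This uses that $\models$ is classical, since \textsc{Not} is defined as non-satisfaction: $S_\Psi \models \neg(\psi_L \land \psi_R)$ iff $S_\Psi \models \neg\psi_L \lor \neg\psi_R$, and similarly for the other two rewrites. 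I would prove these three equivalences as a small auxiliary lemma by inverting the rules.
\end{itemize}

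\textbf{Expected obstacle.} The main difficulty is the bookkeeping of shared synthetic names. The name $\langle \psi_L \lor \psi_R \rangle$ depends only on the subformula, not on $p$, so several dependers, or several occurrences of one subformula, share the same intermediate packages. The same holds for $\langle n, vs \rangle$ under repeated negated atoms. The disjunction case needs only that $\mathcal{E}((i,b) \Delta_\Psi \psi_{L/R}) \subseteq \Delta$, which holds because $i$ determines $\psi_L, \psi_R$. The negated-atom case needs only the two clauses actually generated, so sharing only adds dependencies and never breaks the argument. I would make this explicit by stating the lemma with the hypothesis $\mathcal{E}(p \Delta_\Psi \psi) \subseteq \Delta$ rather than equality. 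Getting that hypothesis and the termination measure right is what would require care in the Lean mechanisation.
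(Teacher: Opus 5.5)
Your proof is correct in substance and follows the natural route. Root inclusion and version uniqueness are immediate. Formula closure comes from a lemma generalised to all $p \in S$, so that it can pass through the synthetic packages $(i,0)$ and $(i,1)$, proved clause by clause along $\mathcal{E}$. The negated-atom case is closed by version uniqueness on $\langle n, vs \rangle$. Stating the hypothesis as $\mathcal{E}(p \Delta_\Psi \psi) \subseteq \Delta$ is exactly what makes shared synthetic names harmless.

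The one step that fails as written is your sample termination measure.
\begin{itemize}
\item If ``connectives'' includes $\neg$, the De Morgan clauses increase it. The formula $\neg(\psi_L \land \psi_R)$ has $2 + c_L + c_R$ connectives, while $\neg\psi_L \lor \neg\psi_R$ has $3 + c_L + c_R$. A lexicographic order led by this count therefore ranks the rewritten formula higher, and the induction hypothesis is unavailable.
\item If only $\land$ and $\lor$ are counted, the second component must do the work. Read as summing the heights of negated subformulae, it does not decrease: $1 + h$ becomes $2h$ when both conjuncts have height $h$.
\end{itemize}

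A measure that works is $\mu((n, vs)) = 1$, $\mu(\psi_L \land \psi_R) = \mu(\psi_L \lor \psi_R) = \mu(\psi_L) + \mu(\psi_R) + 1$, and $\mu(\neg\psi) = 2\mu(\psi)$. Each De Morgan step lowers it by one, and double negation takes $4\mu(\psi)$ to $\mu(\psi)$.

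Cleaner still, avoid measures altogether. By structural induction on $\psi$, prove two statements simultaneously: $\mathcal{E}(p \Delta_\Psi \psi) \subseteq \Delta$ implies $S_\Psi \models \psi$, and $\mathcal{E}(p \Delta_\Psi \neg\psi) \subseteq \Delta$ implies $S_\Psi \models \neg\psi$. Every clause of $\mathcal{E}$ then recurses on a strict subformula in one of the two polarities. Only the easy, constructive directions of the De Morgan and double-negation equivalences are needed.

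Finally, state your freshness assumption against the names occurring in formula atoms, not just those in $R_\Psi$. The atom case uses that $n$ itself is not synthetic.
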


\begin{theorem}[Completeness]
  \label{thm.package-formula-reduction-completeness}
  If $S_\Psi \in \mathcal{S}_\Psi(\Delta_\Psi, r_\Psi)$, define the witness set $W^c(\psi)$ for $c \in \{\top, \bot\}$, where $\top$ marks subformulae on a \textit{taken} path, and $\bot$ on an \textit{untaken} path:
  \begin{align*}
    W^c((n, vs)) &= \emptyset \\[4pt]
    W^c(\psi_L \land \psi_R) &= W^c(\psi_L) \cup W^c(\psi_R) \\[4pt]
    W^\top(\neg(n, vs)) &= \{(\langle n, vs \rangle, 1)\} \\[4pt]
    W^\bot(\neg(n, vs)) &= \begin{cases}\{(\langle n, vs \rangle, 0)\} & \text{if } \exists\, u \in vs.\, (n, u) \in S_\Psi \\ \emptyset & \text{otherwise}\end{cases} \\[4pt]
    W^\top(\psi_L \lor \psi_R) &= \begin{cases}\{(\langle \psi_L \lor \psi_R \rangle, 0)\} \cup W^\top(\psi_L) \cup W^\bot(\psi_R) & \text{if } S_\Psi \models \psi_L \\ \{(\langle \psi_L \lor \psi_R \rangle, 1)\} \cup W^\bot(\psi_L) \cup W^\top(\psi_R) & \text{otherwise}\end{cases} \\[4pt]
    W^\bot(\psi_L \lor \psi_R) &= W^\bot(\psi_L) \cup W^\bot(\psi_R) \\[4pt]
    W^c(\neg(\psi_L \land \psi_R)) &= W^c(\neg\psi_L \lor \neg\psi_R) \\[4pt]
    W^c(\neg(\psi_L \lor \psi_R)) &= W^c(\neg\psi_L \land \neg\psi_R) \\[4pt]
    W^c(\neg\neg\psi) &= W^c(\psi)
  \end{align*}
  Then $\displaystyle S = S_\Psi \cup \bigcup_{p \Delta_\Psi \psi} \begin{cases} W^\top(\psi) & \text{if } p \in S_\Psi \\ W^\bot(\psi) & \text{otherwise} \end{cases}$ is a valid resolution in $\mathcal{S}(\Delta, r)$.
\end{theorem}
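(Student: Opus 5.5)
We verify the three core conditions of Def.~\ref{def.calculus.resolution} for the constructed $S$. Root inclusion is immediate, since $r = r_\Psi \in S_\Psi \subseteq S$.

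\textbf{Version uniqueness.} For real names, $S \cap R_\Psi = S_\Psi$, which is version-unique by Def.~\ref{def.package-formula-resolution.version-uniqueness}. For synthetic names, we must show that the witness sets never contribute both versions $0$ and $1$ of one synthetic name. This splits into two cases.
\begin{enumerate*}[label=(\roman*)]
  \item A disjunction name $\langle \psi_L \lor \psi_R \rangle$ receives a version only from $W^\top$, and there the branch is determined solely by whether $S_\Psi \models \psi_L$. Different occurrences of the same syntactic disjunction therefore make the same choice.
  \item A negation name $\langle n, vs \rangle$ receives $1$ only under $W^\top$ and $0$ only under $W^\bot$ when some $(n,u) \in S_\Psi$ with $u \in vs$. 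We prove an auxiliary lemma by structural induction, following the clauses of $\mathcal{E}$ including the De Morgan cases: if $W^\top(\psi)$ is emitted, then $S_\Psi \models \psi$, and every atom $\neg(n,vs)$ on the taken path is satisfied. Hence whenever $(\langle n, vs\rangle,1) \in S$, no $(n,u)$ with $u \in vs$ lies in $S_\Psi$, so version $0$ is never added.
\end{enumerate*}
The auxiliary lemma needs an invariant for $W^\top$. If $W^\top(\psi)$ is used at a node, then that node is satisfied; for a disjunction node, the chosen disjunct is satisfied. This holds because $S_\Psi \models \psi_L \lor \psi_R$, so if $\psi_L$ fails then $\psi_R$ holds.

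\textbf{Dependency closure.} We proceed by induction on the encoding $\mathcal{E}(p\,\Delta_\Psi\,\psi)$, carrying two hypotheses.
\begin{itemize}
  \item If the depender $p$ (real, or a synthetic version $(i,0)$ or $(i,1)$) is in $S$ and $W^\top(\psi) \subseteq S$ with $S_\Psi \models \psi$, then every edge of $\mathcal{E}(p\,\Delta_\Psi\,\psi)$ out of a member of $S$ is satisfied.
  \item If $p \notin S$ and only $W^\bot(\psi)$ is included, then every edge in $\mathcal{E}$ whose source lies in $S$ is satisfied.
\end{itemize}
The cases go as follows.
\begin{itemize}
  \item \emph{Atom.} The taken case follows from \textsc{DEP}; in the untaken case $p \notin S$, so nothing is required.
  \item \emph{Conjunction.} Split using \textsc{AND}.
  \item \emph{Disjunction, taken.} The edge $p \to (i,\{0,1\})$ is satisfied by the chosen version. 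The chosen side recurses in the taken mode with depender $(i,0)$ or $(i,1)$ in $S$. The other side's depender is absent, so it recurses in the untaken mode.
  \item \emph{Negated atom.} The edges $(n,u) \to (\langle n,vs\rangle,\{0\})$ have a real source. They must be satisfied whenever $(n,u) \in S_\Psi$ with $u \in vs$, regardless of mode. In untaken mode $W^\bot$ adds version $0$ exactly then. In taken mode no such $(n,u)$ exists, by the lemma. The edge $p \to \{1\}$ is satisfied in taken mode and vacuous in untaken mode.
\end{itemize}
Finally, each real $p$ with $p\,\Delta_\Psi\,\psi$ uses $W^\top$ precisely when $p \in S_\Psi$, and formula closure supplies $S_\Psi \models \psi$.

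The main obstacle is sharing. Synthetic names are keyed by syntax, $\langle \psi_L\lor\psi_R\rangle$ or $\langle n,vs\rangle$, not by occurrence. The same name can therefore arise from several dependencies, and even from several depender packages, possibly one taken and one untaken. We must check that the union of all witness sets stays version-unique, and that an untaken $W^\bot$ contribution $(\langle n,vs\rangle,0)$ never collides with a taken $(\langle n,vs\rangle,1)$ elsewhere. I would first attempt a global argument: version $1$ of $\langle n,vs\rangle$ appears only if some satisfied occurrence of $\neg(n,vs)$ exists, which means no $(n,u)\in S_\Psi$ with $u \in vs$. That is a global fact, so it forbids version $0$ everywhere. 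Disjunction versions depend only on the formula and $S_\Psi$, so they are globally consistent. If this global view fails for some edge case, such as a disjunction occurring under a negation after De Morgan rewriting, I would fall back to case analysis over the rewritten normal forms.
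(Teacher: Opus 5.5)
Your version-uniqueness argument is sound. So is your treatment of the real-sourced edges $((n,u),(\langle n,vs\rangle,\{0\}))$: both rest on the invariant that $W^\top$ is applied only to satisfied subformulae, so whether some $(n,u)\in S_\Psi$ with $u\in vs$ exists is a global fact, independent of occurrence.

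The gap is in the untaken mode of your dependency-closure induction. Its hypothesis is that the depender is not in $S$, and you use that to declare the atom and negated-atom edges vacuous. The hypothesis is not preserved when the untaken recursion passes through a disjunction, a case you never state. At a $W^\bot$-visited $\psi_L\lor\psi_R$, the recursive dependers are $(\langle\psi_L\lor\psi_R\rangle,0)$ and $(\langle\psi_L\lor\psi_R\rangle,1)$. One of them may be in $S$ because the same syntactic disjunction is taken at another occurrence. For example, let $p\in S_\Psi$ and $p'\notin S_\Psi$ with $p\,\Delta_\Psi\,(A\lor B)$, $p'\,\Delta_\Psi\,(A\lor B)$ and $S_\Psi\models A$. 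Then $(\langle A\lor B\rangle,0)\in S$ via $p$. Yet in $p'$'s untaken recursion, its outgoing edge to $A$ has a source in $S$, and your untaken case does not discharge it. The same happens inside the untaken side of a taken disjunction, e.g.\ $p\,\Delta_\Psi\,(C\lor(A\lor B))$ with $S_\Psi\models C$, together with some $p'\in S_\Psi$ with $p'\,\Delta_\Psi\,(A\lor B)$. It also happens for an edge $((i,c),(\langle n,vs\rangle,\{1\}))$ when $\neg(n,vs)$ is a disjunct of such a shared $i$. Your last paragraph recognises sharing, but you resolve it only for version uniqueness.

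The conclusion still holds, but you need an extra global lemma. If $(\langle\psi_L\lor\psi_R\rangle,0)\in S$, then some occurrence emitted $W^\top(\psi_L\lor\psi_R)$ with $S_\Psi\models\psi_L$, so $W^\top(\psi_L)\subseteq S$. Symmetrically, if version $1$ is in $S$, then $\psi_R$ is satisfied and $W^\top(\psi_R)\subseteq S$. The outgoing edges of $(\langle\psi_L\lor\psi_R\rangle,c)$ are the same set whichever occurrence generated them, since they depend only on the name and $c$. Hence your taken-mode argument covers them.

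The cleanest repair is to drop the per-occurrence two-mode induction and check closure by the source of each edge of $\Delta$. There are three kinds of source:
\begin{itemize}
\item a real $p\in S$ with $p\,\Delta_\Psi\,\psi$, so $p\in S_\Psi$, $S_\Psi\models\psi$ and $W^\top(\psi)\subseteq S$;
\item a synthetic disjunction version in $S$, handled by the lemma above;
\item the real source of a negation edge, handled by your global argument.
\end{itemize}
Each case then needs only a one-level fact: if $S_\Psi\models\phi$ and $W^\top(\phi)\subseteq S$, every edge of $\mathcal{E}(x\,\Delta_\Psi\,\phi)$ whose source is $x$ itself is satisfied.

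Smaller points:
\begin{itemize}
\item You should also check $S\subseteq R$; every witness is a synthetic package arising in $\mathcal{E}$.
\item The identity $S\cap R_\Psi=S_\Psi$ needs the synthetic names to be fresh.
\item The De Morgan clauses enlarge the formula, so your inductions must follow the recursion of $\mathcal{E}$ and $W^c$ under a well-founded measure, not plain structural induction on $\psi$.
\end{itemize}
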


\subsection{Variable Formulae}
\label{sec.mise-en-place.variable-formula}

Packaging DSLs sometimes allow variables in package formulae~(\S\ref{sec.background.survey.variable-formula}); we extend the Boolean package formula with variables, values, and comparisons.

\begin{definition}[Variable Formula]
  \label{def.variable-formula}
  Extending the Package Formula (Def.~\ref{def.package-formula}), we define:
  \begin{subdefinition}
    \item\label{def.variable-formula.variables}
    The finite set of variables $X$, the non-empty set of variable values $Y$ totally ordered by $\leq_Y$, the non-empty set $Y_x$ of possible values for each $x \in X$, and assignment function $\sigma : (x : X) \to Y_x$.
    \item\label{def.variable-formula.grammar}
    The set of variable formulae $\Psi$ is given by the grammar:
    \[\psi ::= (n, vs) \mid \psi \land \psi \mid \psi \lor \psi \mid \neg\psi \mid x\ \omega\ y \qquad \omega ::=\ \geq\ \mid\ >\ \mid\ \leq\ \mid\ <\ \mid\ =\ \mid\ \neq \qquad x \in X,\ y \in Y\]
    \item\label{def.variable-formula.dependencies}
    Variable formula dependencies $\Delta_\Psi \subseteq (N \times V) \times \Psi$ as in Def.~\ref{def.package-formula-dependency}.
    \item\label{def.variable-formula.satisfaction}
    The satisfaction relation $(S_\Psi, \sigma) \models \psi$ (resolution $S_\Psi$ and assignment $\sigma$ satisfy formula $\psi$) extends Def.~\ref{def.package-formula.satisfaction} with the judgement form carrying assignment $\sigma$.
    Writing $\omega_Y$ for the interpretation of $\omega$ under $\leq_Y$, we define:
    \[\inferrule*{\sigma(x)\ \omega_Y\ y}{(S_\Psi, \sigma) \models (x\ \omega\ y)}\]
  \end{subdefinition}
\end{definition}

\begin{definition}[Variable Formula Resolution]
  \label{def.variable-formula-resolution}
  Given dependencies $\Delta_\Psi$ and root $r_\Psi$, a resolution $S_\Psi \subseteq R_\Psi$ and assignment $\sigma$ are valid if:
  \begin{subdefinition}
    \item\label{def.variable-formula-resolution.root-inclusion}
    \textbf{Root Inclusion}: $r_\Psi \in S_\Psi$, as in Def.~\ref{def.calculus.resolution.root-inclusion}.
    \item\label{def.variable-formula-resolution.formula-closure}
    \textbf{Formula Closure}: $\forall\, p \in S_\Psi.\, p \Delta_\Psi \psi \implies (S_\Psi, \sigma) \models \psi$
    \item\label{def.variable-formula-resolution.version-uniqueness}
    \textbf{Version Uniqueness}: $\forall\, (n, v),\, (n, v') \in S_\Psi.\, v = v'$, as in Def.~\ref{def.calculus.resolution.version-uniqueness}.
  \end{subdefinition}
  We write $\mathcal{S}_\Psi(\Delta_\Psi, r_\Psi) \ni (S_\Psi, \sigma)$ for all resolution-assignment pairs of $r_\Psi$ in $\Delta_\Psi$.
\end{definition}

Variable formulae can capture package-level platform constraints.
For example, in npm, \texttt{"os": ["linux"]} depends on variable `os' with value `linux'.
A filtered dependency in opam, \texttt{depends: ["foo" \{os = "linux"\}]}, becomes $\neg(\mathit{os} = \mathit{linux}) \lor (\mathit{foo}, V)$.
Package-local variables, such as opam's \texttt{with-test}, are namespaced by the package.
Some package managers perform this expansion as pre-processing, but lifting it into resolution lets variables be solved for directly -- for example, choosing the Linux distribution with the freshest system dependencies.

To reduce to the core, we extend the Package Formula Reduction (Def.~\ref{def.package-formula-reduction}).

\begin{definition}[Variable Formula Reduction]
  \label{def.variable-formula-reduction}
  Given $R_\Psi$, $\Delta_\Psi$, and $r_\Psi$, we define a reduction to the core calculus with $R$, $\Delta$, and $r$ as follows:
  \begin{subdefinition}
    \item\label{def.variable-formula-reduction.packages}
    \textbf{Packages}: as Def.~\ref{def.package-formula-reduction.packages}, with
    $R \supseteq \{ (\langle x \rangle, y) \mid x \in X,\ y \in Y_x \}$ where $\langle x \rangle \in N$ and $y \in V$.
    \item\label{def.variable-formula-reduction.formula}
    \textbf{Formula}:
    we extend the encoding function $\mathcal{E}$ (Def.~\ref{def.package-formula-reduction.formula}) with:
    \[
        \mathcal{E}(p \Delta_\Psi x\ \omega\ y ) = \{ (p, (\langle x \rangle, \{v \in Y_x \mid v\ \omega_Y\ y\})) \} \qquad
        \mathcal{E}(p \Delta_\Psi \neg(x\ \omega\ y) ) = \mathcal{E}(p \Delta_\Psi x\ \bar\omega\ y )
    \]
   where $\bar{\omega}$ denotes the complement: $\bar{\geq}$ is ${<}$, $\bar{>}$ is ${\leq}$, $\bar{=}$ is ${\neq}$, etc.
   \item\label{def.variable-formula-reduction.dependencies}
   \textbf{Dependencies}: $\displaystyle \Delta = \bigcup_{p \Delta_\Psi \psi} \mathcal{E}(p \Delta_\Psi \psi)$ as in Def.~\ref{def.package-formula-reduction.dependencies}.
  \end{subdefinition}
\end{definition}

\begin{theorem}[Soundness]
  \label{thm.variable-formula-reduction-soundness}
  If $S \in \mathcal{S}(\Delta, r)$, then
  $S_\Psi = S \cap R_\Psi$ and $\sigma(x) = y$ where $(\langle x \rangle, y) \in S$, arbitrary otherwise,
  are a valid resolution and assignment in $\mathcal{S}_\Psi(\Delta_\Psi, r_\Psi)$.
\end{theorem}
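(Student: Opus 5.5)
Since the Variable Formula Reduction extends the Package Formula Reduction by one atomic case, I would replay the proof of Thm.~\ref{thm.package-formula-reduction-soundness} and add the variable case. Root inclusion is immediate because $r = r_\Psi \in S$ and $r_\Psi \in R_\Psi$. Version uniqueness of $S_\Psi = S \cap R_\Psi$ is inherited from $S$, since $S_\Psi$ is a subset of $S$.

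The assignment $\sigma$ is well defined because, for each $x \in X$, version uniqueness in $S$ gives at most one $(\langle x \rangle, y) \in S$. If such a $y$ exists we set $\sigma(x) = y$. Otherwise we pick any element of the non-empty set $Y_x$. In both cases $y \in Y_x$, since $R$ only contains $(\langle x \rangle, y)$ with $y \in Y_x$. One should also check that the synthetic names $\langle x \rangle$, $\langle \psi_L \lor \psi_R \rangle$ and $\langle n, vs \rangle$ are disjoint from the names in $R_\Psi$, as is implicit in the earlier reductions.

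The core of the proof is formula closure. I would prove a generalised lemma by structural induction on the measure that $\mathcal{E}$ recurses on (formula size with negations pushed inward, i.e.\ negation normal form):
\begin{center}
for every depender $p$ (real or synthetic) with $p \in S$ and every $\psi$, if all dependencies in $\mathcal{E}(p \Delta_\Psi \psi)$ issued from $p$ are present in $\Delta$, then $(S_\Psi, \sigma) \models \psi$.
\end{center}
The cases are as follows.
\begin{itemize}
\item \textbf{Atom $(n, vs)$.} Dependency closure of $S$ gives some $(n, v) \in S$ with $v \in vs$, and $(n, v) \in R_\Psi$, so rule \textsc{Dep} applies.
\item \textbf{Conjunction.} Use the induction hypothesis on both conjuncts.
\item \textbf{Disjunction.} $p$ depends on $(i, \{0, 1\})$, so $S$ contains $(i, 0)$ or $(i, 1)$. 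Apply the induction hypothesis with that synthetic depender to the corresponding disjunct, then use \textsc{Or-L} or \textsc{Or-R}.
\item \textbf{Negated atom.} $S$ contains $(\langle n, vs \rangle, 1)$. If some $(n, u) \in S_\Psi$ with $u \in vs$, then $(n, u)$ forces $(\langle n, vs \rangle, 0)$ into $S$, contradicting version uniqueness. So $S_\Psi \not\models (n, vs)$, and \textsc{Not} applies.
\item \textbf{Variable comparison $x\,\omega\,y$.} $S$ contains $(\langle x \rangle, v)$ with $v\,\omega_Y\,y$, so $\sigma(x) = v$ by construction and the new rule applies.
\item \textbf{Negated comparison.} This reduces to $x\,\bar\omega\,y$. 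We need $\sigma(x)\,\bar\omega_Y\,y \iff \neg(\sigma(x)\,\omega_Y\,y)$, which holds because $\leq_Y$ is total.
\item \textbf{De Morgan and double negation.} These follow from the induction hypothesis together with the derived validity of the classical equivalences for $\models$. Because \textsc{Not} is defined as non-derivability, $\models$ is a two-valued semantics, so $\neg(\psi_L \land \psi_R)$ holds exactly when $\neg\psi_L \lor \neg\psi_R$ does, and similarly for the other rewrites.
\end{itemize}

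Formula closure for a real package $p \in S_\Psi$ with $p \Delta_\Psi \psi$ then follows, since $\mathcal{E}(p \Delta_\Psi \psi) \subseteq \Delta$.

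I expect the main obstacle to be setting up this induction precisely. The recursion passes through synthetic dependers $(i, c)$ and through De Morgan rewrites that do not shrink syntactic size, so the measure has to be chosen carefully. Negated subformulae also make satisfaction depend on the absence of packages, which is why the negated-atom case relies on the conflict argument rather than on monotonicity. I would first prove the semantic equivalences $\neg\neg\psi \equiv \psi$ and the two De Morgan laws for $\models$ as separate lemmas, then run the induction on the negation-normal-form size.
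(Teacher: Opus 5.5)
Your proposal is correct and follows the route the paper sets up: replay the package-formula soundness induction over $\mathcal{E}$, generalised to arbitrary (including synthetic) dependers in $S$, with the conflict argument for negated atoms and $\sigma$ read off the unique $(\langle x \rangle, y) \in S$ for comparisons.

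Two small repairs are needed. First, state the lemma's hypothesis as $\mathcal{E}(p\,\Delta_\Psi\,\psi) \subseteq \Delta$ in full, since the negated-atom case uses the edges $((n,u),(\langle n, vs\rangle,\{0\}))$, which are not issued from $p$. Second, negation-normal-form size does not strictly decrease across the rewrites $\neg\neg\psi \mapsto \psi$ or $\neg(\psi_L \land \psi_R) \mapsto \neg\psi_L \lor \neg\psi_R$, so induct along $\mathcal{E}$'s own recursion instead. A weight that works is $m(\neg\psi) = 2m(\psi)$ and $m(\psi_L \circ \psi_R) = m(\psi_L) + m(\psi_R) + 1$, with atoms and comparisons of weight $1$; it strictly decreases at every recursive call.
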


\begin{theorem}[Completeness]
  \label{thm.variable-formula-reduction-completeness}
  If $(S_\Psi, \sigma) \in \mathcal{S}_\Psi(\Delta_\Psi, r_\Psi)$, define the witness set $W^c(\psi)$ for $c \in \{\top, \bot\}$ as in Thm.~\ref{thm.package-formula-reduction-completeness}, extended with $W^c(x\ \omega\ y) = W^c(\neg(x\ \omega\ y)) = \emptyset$ and using $(S_\Psi, \sigma) \models \psi$ for satisfaction tests.

  Then
  $\displaystyle S = S_\Psi \cup \bigcup_{p \Delta_\Psi \psi} \begin{cases} W^\top(\psi) & \text{if } p \in S_\Psi \\ W^\bot(\psi) & \text{otherwise} \end{cases} \cup \{ (\langle x \rangle, \sigma(x)) \mid x \in X \}$
  is a valid resolution in $\mathcal{S}(\Delta, r)$.
\end{theorem}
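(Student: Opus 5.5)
The plan is to reuse the proof of Thm.~\ref{thm.package-formula-reduction-completeness} essentially verbatim and check only the cases the extension adds. These are the atoms $x\ \omega\ y$ and $\neg(x\ \omega\ y)$, and the variable packages $(\langle x \rangle, \sigma(x))$. I will verify the three core conditions of Def.~\ref{def.calculus.resolution} for $S$.

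\textbf{Root inclusion.} This holds because $r = r_\Psi \in S_\Psi \subseteq S$.

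\textbf{Version uniqueness.} I split by the kind of name.
\begin{itemize}
  \item Real names are handled by Def.~\ref{def.variable-formula-resolution.version-uniqueness}.
  \item Each $\langle x \rangle$ receives exactly the one version $\sigma(x)$.
  \item A disjunction name $\langle \psi_L \lor \psi_R \rangle$ gets a version only from $W^\top$. The choice between $0$ and $1$ depends only on whether $(S_\Psi, \sigma) \models \psi_L$. That choice is determined by the subformula itself, so it agrees across every occurrence and every depender sharing that synthetic name.
  \item A negation name $\langle n, vs \rangle$ gets $1$ only from a $W^\top(\neg(n,vs))$. By the invariant below, this happens only when $(S_\Psi,\sigma) \models \neg(n,vs)$, i.e.\ no $(n,u)$ with $u \in vs$ lies in $S_\Psi$. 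It gets $0$ only when such a $(n,u)$ does lie in $S_\Psi$. Hence $0$ and $1$ are never both present.
\end{itemize}

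\textbf{Dependency closure.} I proceed by structural induction on the encoding $\mathcal{E}$, following the De Morgan and double-negation rewriting exactly as $W^c$ does. The invariant for a call $\mathcal{E}(p' \Delta_\Psi \psi)$ reached with colour $c$ has two parts:
\begin{itemize}
  \item if $c = \top$, then $p' \in S$ and $(S_\Psi,\sigma) \models \psi$;
  \item if $c = \bot$, then $p'$ is either outside $S$ or irrelevant, so the dependencies emitted from $p'$ are vacuous.
\end{itemize}
At the top level, $p \in S_\Psi$ gives colour $\top$ by Def.~\ref{def.variable-formula-resolution.formula-closure}.

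The inductive cases run as follows.
\begin{itemize}
  \item \emph{Conjunction} passes the colour through to both conjuncts.
  \item \emph{Disjunction} under $\top$ selects $(i,0)$ or $(i,1)$ according to which disjunct is satisfied, so the dependency $(p',(i,\{0,1\}))$ is met. The chosen child is $\top$ and the other is $\bot$. Under $\bot$, neither $(i,\cdot)$ is added, so both children's dependencies are vacuous.
  \item \emph{New atom $x\ \omega\ y$ under $\top$.} Satisfaction gives $\sigma(x)\ \omega_Y\ y$. Since $\sigma(x) \in Y_x$, the dependency $(\langle x\rangle, \{v \in Y_x \mid v\ \omega_Y\ y\})$ is met by $(\langle x\rangle,\sigma(x)) \in S$.
  \item \emph{Negated atom $\neg(x\ \omega\ y)$ under $\top$.} Here I use that $\neg(\sigma(x)\ \omega_Y\ y)$ iff $\sigma(x)\ \bar\omega_Y\ y$, which holds because $\leq_Y$ is total.
\end{itemize}

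The step I expect to be the main obstacle is the negated package atom. Its encoding emits dependencies $((n,u),(\langle n,vs\rangle,\{0\}))$ from \emph{real} packages, and these are emitted regardless of whether the path is taken. So for every $(n,u) \in S_\Psi$ with $u \in vs$ I must exhibit $(\langle n,vs\rangle,0) \in S$. The difficulty is that the enclosing formula might sit under an untaken branch, or belong to a depender $p \notin S_\Psi$. I handle this with the case split already built into $W^\bot(\neg(n,vs))$ and into the top-level $W^\bot$ for $p \notin S_\Psi$: in every colour, $0$ is added exactly when such a $(n,u)$ is present. Whenever $W^\top$ adds $1$ instead, satisfaction guarantees that no such $(n,u)$ exists.

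Finally, the new atoms add no synthetic packages to $W^c$ and emit no dependencies from real packages. So the only new global obligation is the presence of each $(\langle x\rangle,\sigma(x))$, which the added set supplies, and every remaining case is inherited unchanged from Thm.~\ref{thm.package-formula-reduction-completeness}.
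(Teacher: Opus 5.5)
Your decomposition is the natural one and matches how the paper sets up the result. The paper gives no written proof here; it states the theorem as an extension of the construction in Thm.~\ref{thm.package-formula-reduction-completeness} and relies on its Lean mechanisation. Your treatment of the new cases is what that extension needs: the atom $x\ \omega\ y$ is met by $(\langle x \rangle, \sigma(x))$, the negated atom goes through via totality of $\leq_Y$, and each $\langle x \rangle$ is unique. Your handling of the negated package atom, including the $W^\bot$ witnesses for dependers $p \notin S_\Psi$, is also right.

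One step, however, is false as stated: the $\bot$ half of your invariant, and in particular the disjunction case ``under $\bot$, neither $(i,\cdot)$ is added, so both children's dependencies are vacuous''. Disjunction names $i = \langle \psi_L \lor \psi_R \rangle$ depend only on the formula. So the same $i$ can occur under $\bot$ (in the formula of some $p \notin S_\Psi$, or inside an untaken branch) and under $\top$ somewhere else. Then $(i, b) \in S$, and the dependencies $\mathcal{E}((i, b) \Delta_\Psi \psi_b)$ emitted at the $\bot$ occurrence are live, not vacuous. They are still met: the dependencies with depender $(i, b)$ form the same set at every occurrence, and your $\top$ case discharges them at the occurrence that put $(i, b)$ into $S$. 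The fix is to organise dependency closure by depender rather than by occurrence:
\begin{itemize}
\item $p \in S_\Psi$, via the $\top$ induction on its formulae;
\item synthetic $(i, b) \in S$, via the $\top$ occurrence that chose $b$;
\item real $(n, u) \in S_\Psi$ carrying negated-atom edges, via your global $0$/$1$ argument;
\item negation and variable packages, which have no dependencies.
\end{itemize}

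There are two smaller omissions. First, check $S \subseteq R$: every witness lies in $R$, because $R$ contains the synthetic packages of each disjunction and negated atom arising in $\mathcal{E}$, and $\sigma(x) \in Y_x$. Second, state the freshness assumption you use when identifying the real packages of $S$, including the dependers of negated-atom edges, with $S_\Psi$. Atoms must mention only real names, disjoint from the synthetic $\langle \cdot \rangle$ names.
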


\subsection{Virtual Packages}
\label{sec.mise-en-place.virtual}

Some package managers support dependencies on \textit{virtual packages}~(\S\ref{sec.background.survey.virtual-packages}) that do not exist as real packages, but are instead provided by a real package.
We extend the core with a provides relation.

\begin{definition}[Virtual Package Provides]
  \label{def.virtual-package-provides}
  We define provides $\Pi \subseteq (N \times V) \times (N \times (V \cup \{\top\}))$ as a relation where an element $p \Pi (n, v)$ denotes that package $p$ provides a package name $n \in N$ with version $v \in V \cup \{\top\}$, where $\top$ matches any version.
  We define $v \in_\top vs \iff v \in vs \cup \{\top\}$.
  The name $n$ is \textit{real} if it appears in $R_\Pi$, and is otherwise \textit{virtual}.
\end{definition}

We write $\Delta_\Pi$ for dependencies (Def.~\ref{def.calculus.dependency}); a dependee name need not be real, since a virtual name is satisfied by a provider rather than a real package.

\begin{definition}[Virtual Package Resolution]
  \label{def.virtual-package-resolution}
  Given dependencies $\Delta_\Pi$, provides $\Pi$, and root $r_\Pi$, a resolution $S_\Pi \subseteq R_\Pi$ and provider relation $\rho \subseteq S_\Pi \times N \times S_\Pi$ -- read $((m, u), n, p) \in \rho$ as $(m, u)$ is $p$'s provider for name $n$ -- are valid if:
  \begin{subdefinition}
    \item\label{def.virtual-package-resolution.root-inclusion}
    \textbf{Root Inclusion}: $r_\Pi \in S_\Pi$, as in Def.~\ref{def.calculus.resolution.root-inclusion}.
    \item\label{def.virtual-package-resolution.virtual-dependency-closure}
    \textbf{Virtual Dependency Closure}:
    \begin{alignat*}{2}
      \forall\, p \in S_\Pi.\, p \Delta_\Pi (n, vs) \implies
      &\big(\exists\, v \in vs.\, (n, v) \in S_\Pi\big) \\
      \lor\ &\big(\exists!\, (m, u) \in S_\Pi.\, \exists\, v \in_\top vs.\, (m, u) \Pi (n, v) \land ((m, u), n, p) \in \rho\big)
    \end{alignat*}
    Each dependency is satisfied by a compatible package or a unique provider, witnessed by $\rho$.
    \item\label{def.virtual-package-resolution.version-uniqueness}
    \textbf{Version Uniqueness}: $\forall\, (n, v),\, (n, v') \in S_\Pi.\, v = v'$, as in Def.~\ref{def.calculus.resolution.version-uniqueness}.
  \end{subdefinition}
  We write $\mathcal{S}_\Pi(\Delta_\Pi, \Pi, r_\Pi) \ni (S_\Pi, \rho)$ for all resolution-provider pairs of $r_\Pi$ in $\Delta_\Pi$ and $\Pi$.
\end{definition}

We reduce the Virtual Package Calculus to the core calculus using intermediate package names with versions that encode which provider is selected.
As with the parent relation~(\S\ref{sec.mise-en-place.concurrent}), the provider relation $\rho$ and the intermediate packages of the reduction below determine one another (Thms~\ref{thm.virtual-package-reduction-soundness} and~\ref{thm.virtual-package-reduction-completeness}).
Fig.~\ref{fig.virtual} shows an instance of the Virtual Package Calculus reduced to the core.

\begin{definition}[Virtual Package Reduction]
  \label{def.virtual-package-reduction}
  Given $R_\Pi$, $\Delta_\Pi$, $\Pi$, and $r_\Pi$, we define a reduction to the core calculus with $R$, $\Delta$, and $r$.
  For each dependency $p \Delta_\Pi (n, vs)$, define a predicate for whether the dependee has a provider as
  \[\mathit{prov}(n, vs) \iff \exists\, q,\, v \in_\top vs.\, q \Pi (n, v)\]
  and collect versions of the intermediate name $\langle p, n\rangle$ as
  \[us = \{\langle m, w \rangle \mid (m, w) \Pi (n, v),\ v \in_\top vs\} \cup \{\langle n, u \rangle \mid u \in vs,\ (n, u) \in R_\Pi\}\]
  with $\langle p, n \rangle \in N$ and $\langle m, w \rangle, \langle n, u \rangle \in V$.
  \begin{subdefinition}
    \item\label{def.virtual-package-reduction.packages}
    \textbf{Packages}: $r = r_\Pi$, and each package that can satisfy a dependency on $n$ -- a provider of $n$, or a real package named $n$ -- is exposed as a version of the intermediate name $\langle p, n \rangle$:
    \[
      R = R_\Pi \cup \bigcup_{\substack{p \Delta_\Pi (n, vs) \\ \mathit{prov}(n, vs)}} \{ (\langle p, n \rangle, w) \mid w \in us \}
    \]
    \item\label{def.virtual-package-reduction.dependencies}
    \textbf{Dependencies}: a dependency with no provider reduces to a direct core dependency; otherwise to a dependency on the intermediate name $\langle p, n \rangle$, whose versions each depend on the corresponding provider or real package:
    \[
      \Delta = \bigcup_{p \Delta_\Pi (n, vs)}
      \begin{cases}
        \{ (p,\, (n,\, vs)) \} & \text{if } \neg\mathit{prov}(n, vs) \\[6pt]
        \{ (p,\, (\langle p, n \rangle,\, us)) \}
        \cup \{ ((\langle p, n \rangle, \langle m, w \rangle),\, (m,\, \{w\})) \mid \langle m, w \rangle \in us \} & \text{otherwise}
      \end{cases}
    \]
  \end{subdefinition}
\end{definition}

\begin{figure}[ht]
  \captionsetup{justification=centering}
  \begin{lrbox}{\sfboxa}\begin{minipage}{0.32\textwidth}\centering
    \begin{align*}
      (B, 1) &\Pi (V, 1) \\
      (C, 1) &\Pi (V, 1) \\
      (A, 1) &\Delta_\Pi (V, \{1\})
    \end{align*}
  \end{minipage}\end{lrbox}%
  \begin{lrbox}{\sfboxb}\begin{minipage}{0.32\textwidth}\centering
    \includegraphics[scale=0.8]{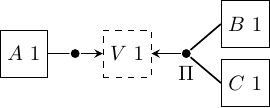}%
  \end{minipage}\end{lrbox}%
  \begin{lrbox}{\sfboxc}\begin{minipage}{0.40\textwidth}\centering
    \begin{align*}
      &(A, 1) \Delta (\langle (A, 1), V \rangle, \{\langle B, 1 \rangle, \langle C, 1 \rangle\}) \\
      &(\langle (A, 1), V \rangle, \langle B, 1 \rangle) \Delta (B, \{1\}) \\
      &(\langle (A, 1), V \rangle, \langle C, 1 \rangle) \Delta (C, \{1\})
    \end{align*}
  \end{minipage}\end{lrbox}%
  \begin{lrbox}{\sfboxd}\begin{minipage}{0.40\textwidth}\centering
    \includegraphics[scale=0.8]{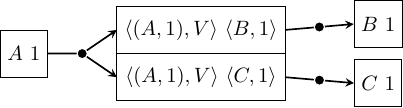}%
  \end{minipage}\end{lrbox}%
  \setlength{\sfrowht}{0pt}\sfmax{\sfboxa}\sfmax{\sfboxc}%
  \setlength{\sfrowhtb}{0pt}\sfmaxb{\sfboxb}\sfmaxb{\sfboxd}%
  \hfill
  \begin{subfigure}[b]{0.20\textwidth}
    \begin{minipage}[c][\dimexpr\sfrowht+\sfrowhtb+2\baselineskip\relax][c]{\linewidth}
\begin{Verbatim}[fontsize=\footnotesize]
Package: A
Version: 1
Depends: V (= 1)

Package: B
Version: 1
Provides: V (= 1)

Package: C
Version: 1
Provides: V (= 1)
\end{Verbatim}
    \end{minipage}
    \caption{Debian syntax.}
    \label{fig.virtual.debian}
  \end{subfigure}%
  \hfill
  \begin{subfigure}[b]{0.32\textwidth}
    \sfcell{\usebox{\sfboxa}}
    \begin{minipage}[c][2\baselineskip][c]{\linewidth}
      \caption{Virtual Package Calculus.}
      \label{fig.virtual.calculus}
    \end{minipage}
    \sfcellb{\usebox{\sfboxb}}
    \Description{Hypergraph rendering of the virtual calculus dependencies given alongside.}
    \caption{Hypergraph illustration.}
    \label{fig.virtual.hypergraph}
  \end{subfigure}%
  \hfill
  \begin{subfigure}[b]{0.40\textwidth}
    \sfcell{\usebox{\sfboxc}}
    \begin{minipage}[c][2\baselineskip][c]{\linewidth}
      \caption{Reduction to the core.}
      \label{fig.virtual.reduction}
    \end{minipage}
    \sfcellb{\usebox{\sfboxd}}
    \Description{Hypergraph rendering of the reduction formula given alongside.}
    \caption{Hypergraph of reduction.}
    \label{fig.virtual.hypergraph-reduction}
  \end{subfigure}%
  \hfill\null
  \caption{Virtual Package Calculus instance reduced to the core calculus.}
  \label{fig.virtual}
\end{figure}

\begin{theorem}[Soundness]
  \label{thm.virtual-package-reduction-soundness}
  If $S \in \mathcal{S}(\Delta, r)$, then
  $S_\Pi = S \cap R_\Pi$ and
  \[
    \rho = \bigcup_{p \Delta_\Pi (n, vs)} \left\{ ((m, w),\, n,\, p) \;\middle|\;
    \begin{aligned}
      &p \in S_\Pi,\ (m, w) \Pi (n, v),\ v \in_\top vs, \\
      &(\langle p, n \rangle, \langle m, w \rangle) \in S
    \end{aligned}
    \right\}
  \]
  are a valid resolution and provider relation in $\mathcal{S}_\Pi(\Delta_\Pi, \Pi, r_\Pi)$.
\end{theorem}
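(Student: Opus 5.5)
The proof checks the three conditions of Def.~\ref{def.virtual-package-resolution} directly for $S_\Pi = S \cap R_\Pi$ and the given $\rho$. It uses only the core conditions of $S$ (Def.~\ref{def.calculus.resolution}) and the shape of $\Delta$ in Def.~\ref{def.virtual-package-reduction}. Throughout I assume, as the reduction implicitly does, that the synthetic names $\langle p, n\rangle$ are fresh: they do not occur in $R_\Pi$, and the pairing $\langle m, w\rangle$ is injective. Hence $R_\Pi$ and the intermediate packages are disjoint. I also use that $\Delta_\Pi$ is functional in name (Def.~\ref{def.calculus.dependency}), so for each $p$ and $n$ there is exactly one $vs$ and a single intermediate name $\langle p, n\rangle$ with version set $us$.

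\textbf{Root inclusion and version uniqueness.} Root inclusion holds because $r = r_\Pi \in S$ and $r_\Pi \in R_\Pi$. Version uniqueness is inherited from $S$, since $S_\Pi \subseteq S$. It is also worth noting that $\rho \subseteq S_\Pi \times N \times S_\Pi$: by its definition, $p \in S_\Pi$, and $(m,w) \in S_\Pi$ will follow from the closure argument below.

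\textbf{Virtual dependency closure.} Fix $p \in S_\Pi$ with $p \Delta_\Pi (n, vs)$, and split on $\mathit{prov}(n, vs)$.

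If $\neg\mathit{prov}(n,vs)$, then $p \Delta (n, vs)$, so core dependency closure gives some $v \in vs$ with $(n, v) \in S$. Since $n$ is not a synthetic name, $(n,v) \in R_\Pi$, and so $(n,v) \in S_\Pi$. This establishes the first disjunct.

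If $\mathit{prov}(n,vs)$, then $p \Delta (\langle p,n\rangle, us)$, so some $x \in us$ has $(\langle p,n\rangle, x) \in S$. By core version uniqueness this $x$ is the only version of $\langle p,n\rangle$ in $S$. By the definition of $us$, $x$ takes one of two forms.
\begin{itemize}
  \item If $x = \langle n, u\rangle$ with $u \in vs$ and $(n,u) \in R_\Pi$, then the intermediate dependency forces $(n,u) \in S$, hence $(n,u) \in S_\Pi$. This gives the first disjunct.
  \item If $x = \langle m, w\rangle$ with $(m,w)\Pi(n,v)$ for some $v \in_\top vs$, then the intermediate dependency forces $(m,w) \in S$. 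Since $(m,w)$ is a provider and is not synthetic, $(m,w) \in S_\Pi$. Therefore $((m,w), n, p) \in \rho$, which gives existence in the second disjunct.
\end{itemize}

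\textbf{Uniqueness of the provider (main obstacle).} This is the step that needs care: the second disjunct demands $\exists!$, not merely $\exists$. Suppose $((m',w'), n, p) \in \rho$. By the definition of $\rho$, $(\langle p,n\rangle, \langle m',w'\rangle) \in S$. Core version uniqueness then gives $\langle m',w'\rangle = x = \langle m,w\rangle$, and injectivity of the pairing yields $(m',w') = (m,w)$.

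A similar observation covers the first bullet. If $x$ is of the real form $\langle n,u\rangle$, the first disjunct already holds, so no uniqueness claim is required there. The only delicate overlap is when $x$ is simultaneously of both forms, i.e.\ a real package named $n$ that also provides $n$. Even then, $\rho$ contains at most the single provider encoded by $x$, so the disjunction still holds. With this, all conditions are verified, and $(S_\Pi, \rho) \in \mathcal{S}_\Pi(\Delta_\Pi, \Pi, r_\Pi)$.
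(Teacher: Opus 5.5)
Your proof is correct and is essentially the direct verification the theorem calls for; the paper gives no written proof here (it defers to the Lean mechanisation). You inherit root inclusion and version uniqueness from $S$, split closure on $\mathit{prov}(n, vs)$ and on the form of the selected version of $\langle p, n\rangle$, and obtain provider uniqueness from core version uniqueness on $\langle p, n\rangle$ plus injectivity of $\langle m, w\rangle$. Your explicit freshness and injectivity assumptions are the ones the reduction leaves implicit.
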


\begin{theorem}[Completeness]
  \label{thm.virtual-package-reduction-completeness}
  If $(S_\Pi, \rho) \in \mathcal{S}_\Pi(\Delta_\Pi, \Pi, r_\Pi)$, then
  \[
    S = S_\Pi \cup \bigcup_{\substack{p \Delta_\Pi (n, vs) \\ p \in S_\Pi,\ \mathit{prov}(n, vs)}} \{ (\langle p, n \rangle, c(p, n)) \}
  \]
  is a valid resolution in $\mathcal{S}(\Delta, r)$, where, given $p \Delta_\Pi (n, vs)$, $c(p, n)$ selects a provider through $\rho$ if possible and the direct version otherwise:
  \[
    c(p, n) =
    \begin{cases}
      \langle m, w \rangle & \text{for some } (m, w) \in S_\Pi \text{ with } \exists\, v \in_\top vs.\, (m, w) \Pi (n, v),\ ((m, w), n, p) \in \rho \\
      \langle n, u \rangle & \text{otherwise, with } u \in vs,\ (n, u) \in S_\Pi
    \end{cases}
  \]
\end{theorem}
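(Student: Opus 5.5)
We verify directly that $S$ satisfies the three conditions of Def.~\ref{def.calculus.resolution} for $\Delta$ and $r$, after first checking that $S \subseteq R$ and that $c(p, n)$ is well defined. Throughout we use that the synthetic names $\langle p, n \rangle$ are fresh, so they are disjoint from the names in $R_\Pi$ and from each other for distinct pairs $(p, n)$. We also use that $\Delta_\Pi$ is functional in name (Def.~\ref{def.calculus.dependency}), so each pair $(p, n)$ determines a unique $vs$, and hence a unique $us$ and a unique $c(p, n)$.

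\textbf{Well-definedness of $c$ and $S \subseteq R$.} Fix $p \in S_\Pi$ with $p \Delta_\Pi (n, vs)$ and $\mathit{prov}(n, vs)$. By Virtual Dependency Closure (Def.~\ref{def.virtual-package-resolution.virtual-dependency-closure}), one of two cases holds.
\begin{enumerate}
  \item There is a provider $(m, w) \in S_\Pi$ with $(m, w) \Pi (n, v)$ for some $v \in_\top vs$ and $((m, w), n, p) \in \rho$. Then the first case of $c$ applies, and $\langle m, w \rangle \in us$ by the definition of $us$.
  \item Otherwise some $(n, u) \in S_\Pi \subseteq R_\Pi$ has $u \in vs$. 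Then the second case of $c$ applies, and $\langle n, u \rangle \in us$ because $(n, u) \in R_\Pi$.
\end{enumerate}
In either case $(\langle p, n \rangle, c(p, n)) \in R$ by Def.~\ref{def.virtual-package-reduction.packages}. Since also $S_\Pi \subseteq R_\Pi \subseteq R$, we get $S \subseteq R$. Where several witnesses exist we fix one choice, so $c$ is a function.

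\textbf{Root inclusion and dependency closure.} Root inclusion is immediate, since $r = r_\Pi \in S_\Pi \subseteq S$. For dependency closure we split on the two kinds of depender in $S$.
\begin{enumerate}
  \item \emph{A real package $p \in S_\Pi$ with $p \Delta_\Pi (n, vs)$.} If $\neg\mathit{prov}(n, vs)$, the core dependency is $(n, vs)$. The provider disjunct of Virtual Dependency Closure is impossible, because it would exhibit a provider and contradict $\neg\mathit{prov}(n, vs)$. So some $(n, v) \in S_\Pi \subseteq S$ has $v \in vs$. If instead $\mathit{prov}(n, vs)$ holds, the core dependency is $(\langle p, n \rangle, us)$, which is met by $(\langle p, n \rangle, c(p, n)) \in S$ with $c(p, n) \in us$ as shown above.
  \item \emph{An intermediate package $(\langle p, n \rangle, c(p, n)) \in S$.} Its only dependency is $(m, \{w\})$ where $c(p, n) = \langle m, w \rangle$. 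In both cases of $c$, the package $(m, w)$ lies in $S_\Pi$: either it is the chosen provider, or it is $(n, u)$ itself. Hence the dependency is satisfied.
\end{enumerate}
No other dependencies in $\Delta$ have a depender in $S$. This is because intermediate packages enter $S$ only for $p \in S_\Pi$, and then only at the single version $c(p, n)$.

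\textbf{Version uniqueness.} For names occurring in $S_\Pi$, version uniqueness follows from Def.~\ref{def.virtual-package-resolution.version-uniqueness}, since freshness means no intermediate package shares such a name. For a synthetic name $\langle p, n \rangle$, the set $S$ contains exactly one version, namely $c(p, n)$, by the functionality of $c$ noted above.

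The main obstacle is bookkeeping rather than mathematics. We must justify that $c$ is well defined and always lands in $us$, and handle the case $\neg\mathit{prov}(n, vs)$, where the provider disjunct of Virtual Dependency Closure must be ruled out. The freshness and functional-in-name assumptions also need to be made explicit so that version uniqueness holds for the synthetic names. I would check the case split on $c$ first, since every later step depends on it.
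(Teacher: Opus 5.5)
Your proof is correct and matches the direct verification the paper's reduction is built for. You check $S \subseteq R$ and that $c(p,n)$ is well defined and lands in $us$ via Virtual Dependency Closure, then verify root inclusion, dependency closure (real dependers with and without $\mathit{prov}(n, vs)$, plus intermediate dependers), and version uniqueness, with freshness of the names $\langle p, n \rangle$ and functionality in name of $\Delta_\Pi$ correctly identified as what guarantees a single version $c(p, n)$ per synthetic name.
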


\section{Package Managers, \`a la Carte}
\label{sec.a-la-carte}

We return to the motivating example of~\S\ref{sec.introduction}, using four package managers for a single project.
Dependencies between these ecosystems are unversioned; the binary will fail if APT has not been used to install the required C libraries and drivers from the Debian repository at compatible versions.
For instance, the GPU bindings pip installs must match the kernel driver APT provides -- a constraint neither resolver can express.
Build sandboxing prevents one package manager from invoking another, so opam cannot call Cargo during a sandboxed build.
This leads to ad hoc solutions, such as opam invoking a system package manager to install dependencies via its depext mechanism~\cite{opam}.
Another workaround is vendoring dependees across ecosystems -- copying the source code into a project's version control system -- so that versions are fixed and dependencies are available inside sandboxed build environments.
Similarly, containerisation technologies like Docker~\cite{madhavapeddy2025docker,madhavapeddy2026decade} are used to sidestep the problem by bundling all dependencies together.
To avoid vendoring or bundling in containers, packages can be duplicated between ecosystem-specific package formats, such as Rust projects packaged in opam.
But these workarounds carry serious costs: vendored and duplicated packages require ongoing maintenance when new versions are released; and the dependencies they introduce are hidden from package managers, rendering security vulnerability tracking incomplete.

The Package Calculus offers an alternative.
As the core calculus~(\S\ref{sec.calculus.core}) is expressive enough to model each axis of divergence in dependency expression~(\S\ref{sec.mise-en-place}), a project spanning multiple ecosystems can be resolved by reducing each ecosystem's dependencies to the core and solving the combined instance with a single \textit{polyglot resolver}.
A dependency that crosses between ecosystems is then an ordinary core dependency.
Realising this requires modelling real package managers, which combine several extensions rather than using them in isolation; we show how to compose calculi and their reductions~(\S\ref{sec.a-la-carte.composition}).

More ambitiously, the core could serve as an intermediate representation for \textit{translating} dependencies from one ecosystem's language into another, routing $n^2$ direct translators through $2n$ reductions and liftings~(\S\ref{sec.a-la-carte.transpiling}).
This direction is more speculative: it requires inverting the reductions.
The reduction of a single-extension instance to the core can be inverted: decoding the synthetic names recovers an equivalent extended instance.
Inverting the encoding of composed extensions, whose synthetic structures interleave, follows no known general principle.

\subsection{Composition of Extensions}
\label{sec.a-la-carte.composition}

The reductions defined in~\S\ref{sec.mise-en-place} form a toolkit for constructing the calculi and reductions of combined extensions, to model real package managers.
Cargo, for example, uses both concurrent versions~(\S\ref{sec.mise-en-place.concurrent}) and features~(\S\ref{sec.mise-en-place.features}).

\subsubsection{Composing Features and Concurrent Versions}
\label{sec.a-la-carte.composition.features-concurrent}

Composing the two gives the \textit{Concurrent Feature Package Calculus}.
As in the Feature Package Calculus~(\S\ref{sec.mise-en-place.features}), each selected package is paired with a set of features, and as in the Concurrent Package Calculus~(\S\ref{sec.mise-en-place.concurrent}), distinct granularity classes of a name may coexist.
The conditions below merge the two -- a parameterised ($\Delta_f$) or additional ($\Delta_a$) dependency must select a unique granular version of its dependee, while the feature and granularity invariants of both extensions are preserved.

\begin{definition}[Concurrent Feature Resolution]
  \label{def.concurrent-feature-resolution}
  Given parameterised dependencies $\Delta_f$, additional dependencies $\Delta_a$, granularity $g$, and root $r_{CF}$ with no features, a resolution $S_{CF} \subseteq R_{CF} \times \mathcal{P}(F)$ and parent relation $\pi \subseteq (N \times V) \times (N \times V)$ are valid if:
  \begin{subdefinition}
    \item\label{def.concurrent-feature-resolution.root-inclusion}
    \textbf{Root Inclusion}: $(r_{CF}, \emptyset) \in S_{CF}$
    \smallskip
    \item\label{def.concurrent-feature-resolution.parent-closure}
    \textbf{Parameterised and Additional Parent Closure}: define
    \[\mathit{sel}(p, n, vs, fs) \iff \exists!\, v \in vs.\, \exists\, fs' \supseteq fs.\, ((n, v), fs') \in S_{CF} \land ((n, v), p) \in \pi\]
    a predicate for whether, when $p$ depends on $n$ with versions $vs$ and features $fs$, exactly one version of $n$ in $vs$ is selected with at least the required features, with the parent recorded in $\pi$.
    \begin{gather*}
      \forall\, (p, fs_p) \in S_{CF}.\, p \Delta_f (n, vs, fs) \implies \mathit{sel}(p, n, vs, fs) \\
      \forall\, (p, fs_p) \in S_{CF},\, f \in fs_p.\, (p, f) \Delta_a (n, vs, fs) \implies \mathit{sel}(p, n, vs, fs)
    \end{gather*}
    \item\label{def.concurrent-feature-resolution.parent-functional}
    \textbf{Parent Relation Functionality}:
    \[\forall\, ((n, v), p),\, ((n, v'), p) \in \pi.\, v = v'\]
    A depender's $\Delta_f$ and $\Delta_a$ dependencies must agree on a single version of a name.
    \smallskip
    \item\label{def.concurrent-feature-resolution.version-granularity}
    \textbf{Version Granularity}: $\forall\, ((n, v), fs),\, ((n, v'), fs') \in S_{CF}.\, v \neq v' \implies g(v) \neq g(v')$
    \smallskip
    \item\label{def.concurrent-feature-resolution.feature-unification}
    \textbf{Feature Unification}: $\forall\, ((n, v), fs),\, ((n, v), fs') \in S_{CF}.\, fs = fs'$
    \smallskip
    \item\label{def.concurrent-feature-resolution.support}
    \textbf{Support}: $\forall\, (p, fs) \in S_{CF},\, f \in fs.\, (p, f) \in \mathit{support}$
    \smallskip
  \end{subdefinition}
  We write $\mathcal{S}_{CF}(\Delta_f, \Delta_a, g, r_{CF}) \ni (S_{CF}, \pi)$ for all resolution-parent pairs of $r_{CF}$ in $\Delta_f$ and $\Delta_a$ under $g$.
\end{definition}

\begin{definition}[Concurrent Feature Reduction]
  \label{def.concurrent-feature-reduction}
  Given $R_{CF}$, $\Delta_f$, $\Delta_a$, $\mathit{support}$, $g$, and $r_{CF}$, we define a reduction to the core calculus with $R$, $\Delta$, and $r$ by combining the Feature Reduction (Def.~\ref{def.feature-reduction}) and Concurrent Reduction (Def.~\ref{def.concurrent-reduction}).
  For each $(n, v) \Delta_f (m, vs, fs)$ and $((n, v), f) \Delta_a (m, vs, fs)$, write $i = \langle n, v, m \rangle$ for the intermediate name; in the helpers below, the argument $j$ is a per-feature intermediate name in which $f'$ is bound by each $f' \in fs$ comprehension.
  \begin{subdefinition}
    \item\label{def.concurrent-feature-reduction.packages}
    \textbf{Packages}: $r = (\langle n, g(v) \rangle, v)$ where $r_{CF} = (n, v)$.
    Writing
    \[
      P(j) = \{ (i, u) \mid u \in vs \} \cup \{ (j, u) \mid f' \in fs,\ u \in vs \},
    \]
    the real packages are
    \settowidth{\opsubwd}{$\scriptstyle ((n, v), f) \Delta_a (m, vs, fs)$}%
    \[
      \begin{aligned}
        R = {}& \bigcup_{\mathmakebox[\opsubwd]{(n, v) \in R_{CF}}} \{ (\langle n, g(v) \rangle, v) \} & {}\cup{} & \bigcup_{\mathmakebox[\opsubwd]{\substack{(n, v) \in R_{CF} \\ ((n, v), f) \in \mathit{support}}}} \{ (\langle \langle n, f \rangle, g(v) \rangle, v) \} \\
        {}\cup{}& \bigcup_{\mathmakebox[\opsubwd]{(n, v) \Delta_f (m, vs, fs)}} P(\langle n, v, m, f' \rangle) & {}\cup{} & \bigcup_{\mathmakebox[\opsubwd]{((n, v), f) \Delta_a (m, vs, fs)}} P(\langle n, v, f, m, f' \rangle)
      \end{aligned}
    \]
    \item\label{def.concurrent-feature-reduction.dependencies}
    \textbf{Dependencies}: for a granular depender $p$ and per-feature intermediate name $j$,
    \[
      \begin{aligned}
        D(p, j) = {}& \{ (p,\, (i,\, vs)) \} {}\cup{} \{ ((i, u),\, (\langle m, g(u) \rangle,\, \{u\})) \mid u \in vs \} \\
        {}\cup{}& \bigcup_{f' \in fs} \left(
          \begin{aligned}
            & \{ (p,\, (j,\, vs)) \} \\
            {}\cup{}& \{ ((j, u),\, (\langle \langle m, f' \rangle, g(u) \rangle,\, \{u\})),\ ((j, u),\, (i,\, \{u\})) \mid u \in vs \}
          \end{aligned}
        \right)
      \end{aligned}
    \]
    the dependencies are
    \settowidth{\opsubwd}{$\scriptstyle ((n, v), f) \Delta_a (m, vs, fs)$}%
    \[
      \begin{aligned}
        \Delta = {}& \bigcup_{\mathmakebox[\opsubwd]{\substack{(n, v) \in R_{CF} \\ ((n, v), f) \in \mathit{support}}}}
          \{ ((\langle \langle n, f \rangle, g(v) \rangle, v),\, (\langle n, g(v) \rangle,\, \{v\})) \} \\
        {}\cup{}& \bigcup_{\mathmakebox[\opsubwd]{(n, v) \Delta_f (m, vs, fs)}} D\big((\langle n, g(v) \rangle, v),\ \langle n, v, m, f' \rangle\big) \\
        {}\cup{}& \bigcup_{\mathmakebox[\opsubwd]{((n, v), f) \Delta_a (m, vs, fs)}} D\big((\langle \langle n, f \rangle, g(v) \rangle, v),\ \langle n, v, f,
 m, f' \rangle\big)
      \end{aligned}
    \]
  \end{subdefinition}
\end{definition}

\begin{theorem}[Soundness]
  \label{thm.concurrent-feature-reduction-soundness}
  If $S \in \mathcal{S}(\Delta, r)$ for $(R, \Delta, r)$ produced by Def.~\ref{def.concurrent-feature-reduction}, then
  \[
    \begin{aligned}
      S_{CF} &= \big\{ \big((n, v),\, \{ f \mid (\langle \langle n, f \rangle, g(v) \rangle, v) \in S \}\big) \;\big|\; (\langle n, g(v) \rangle, v) \in S \big\} \\
      \pi &= \{ ((m, u),\, (n, v)) \mid (\langle n, v, m \rangle, u) \in S \}
    \end{aligned}
  \]
  are a valid resolution and parent relation in $\mathcal{S}_{CF}(\Delta_f, \Delta_a, g, r_{CF})$.
\end{theorem}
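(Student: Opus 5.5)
The plan is to unfold Def.~\ref{def.concurrent-feature-resolution} condition by condition. Each condition is discharged by the three core conditions of $S$ (Def.~\ref{def.calculus.resolution}) together with the shape of $\Delta$ in Def.~\ref{def.concurrent-feature-reduction}, following the proofs of Thms~\ref{thm.concurrent-reduction-soundness} and~\ref{thm.feature-reduction-soundness}.

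Two lemmas come first. Both are immediate from dependency closure and version uniqueness in the core.
\begin{enumerate*}[label=(\roman*)]
  \item If a feature package $(\langle\langle n,f\rangle,g(v)\rangle,v)\in S$, then $(\langle n,g(v)\rangle,v)\in S$, since the feature package depends on $(\langle n,g(v)\rangle,\{v\})$. Moreover $((n,v),f)\in\mathit{support}$, since that package is real only when $f$ is supported.
  \item For each intermediate name $i=\langle n,v,m\rangle$, at most one $(i,u)$ lies in $S$, by version uniqueness on the name $i$.
\end{enumerate*}

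The easy conditions then follow.
\begin{itemize}
  \item \textbf{Root Inclusion.} We have $r=(\langle n,g(v)\rangle,v)\in S$. The root supports no features, so by (i) no feature package of the root is in $S$, and its feature set is $\emptyset$.
  \item \textbf{Version Granularity.} If $(n,v),(n,v')\in S_{CF}$ with $g(v)=g(v')$, then $(\langle n,g(v)\rangle,v)$ and $(\langle n,g(v)\rangle,v')$ share a name, so $v=v'$.
  \item \textbf{Feature Unification.} $S_{CF}$ is defined as a function of $(n,v)$: the feature set is read off $S$ uniquely, so two entries for the same $(n,v)$ carry the same features.
  \item \textbf{Support.} This is lemma (i).
  \item \textbf{Parent Relation Functionality.} Both the $\Delta_f$ and the $\Delta_a$ encodings of a dependency from $(n,v)$ on $m$ route through the same intermediate name $i=\langle n,v,m\rangle$. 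Functionality of $\pi$ is therefore exactly lemma (ii).
\end{itemize}

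The substantive step is parent closure. Take $((n,v),fs_p)\in S_{CF}$ and $(n,v)\Delta_f(m,vs,fs)$, and let $p=(\langle n,g(v)\rangle,v)\in S$.
\begin{enumerate*}[label=(\arabic*)]
  \item By $D(p,j)$, $p$ depends on $(i,vs)$, so some $(i,u)\in S$ with $u\in vs$.
  \item $(i,u)$ depends on $(\langle m,g(u)\rangle,\{u\})$, so $(m,u)\in S_{CF}$, and $((m,u),(n,v))\in\pi$ by definition of $\pi$.
  \item For each $f'\in fs$, $p$ depends on $(j,vs)$ with $j=\langle n,v,m,f'\rangle$, so some $(j,u')\in S$. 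This package depends on $(i,\{u'\})$, and lemma (ii) forces $u'=u$.
  \item $(j,u)$ also depends on $(\langle\langle m,f'\rangle,g(u)\rangle,\{u\})$, so $f'$ lies in the feature set of $(m,u)$. Hence $fs'\supseteq fs$.
\end{enumerate*}
Uniqueness in $\mathit{sel}$ follows from lemma (ii): any $v'\in vs$ with $((m,v'),(n,v))\in\pi$ needs $(i,v')\in S$, so $v'=u$.

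For the additional case $f\in fs_p$ and $((n,v),f)\Delta_a(m,vs,fs)$, the feature package $(\langle\langle n,f\rangle,g(v)\rangle,v)$ is in $S$ by the definition of $S_{CF}$. Its $D$-dependencies use the same $i$, and the identical argument applies.

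The main obstacle I expect is this tying of per-feature intermediates $j$ to the single choice $u$ at $i$. This is where the extra dependency $((j,u),(i,\{u\}))$ is essential. A related subtlety is that several $\Delta_f$ and $\Delta_a$ edges to the same $m$ all generate $(\cdot,(i,\cdot))$ dependencies. If these are merged under functionality in name (Def.~\ref{def.calculus.dependency}), their version sets intersect; the selected $u$ then lies in every relevant $vs$, which is exactly what the argument needs. I would state this normalisation explicitly before the case analysis.
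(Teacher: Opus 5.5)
Your proposal is correct. The paper states this theorem without a written proof and defers to its Lean mechanisation; your argument is the expected direct verification. You unfold Def.~\ref{def.concurrent-feature-resolution} condition by condition and isolate the right key step: the edge $((j,u),(i,\{u\}))$, together with version uniqueness on the shared name $i=\langle n,v,m\rangle$, pins every per-feature intermediate (for both $\Delta_f$ and $\Delta_a$) to the single selected version.

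The normalisation in your final paragraph is not needed. You already apply dependency closure to each edge into $i$ separately, and version uniqueness on $i$ then places the common $u$ in every relevant $vs$. In any case, functionality in name would not merge the $\Delta_f$ and $\Delta_a$ edges, because their dependers differ: the base package and the feature package.

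One routine check should be added: $S_{CF}\subseteq R_{CF}\times\mathcal{P}(F)$. It holds because $(\langle n,g(v)\rangle,v)\in R$ arises only from $(n,v)\in R_{CF}$, given that the synthetic names are fresh, as the paper tacitly assumes.
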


\begin{theorem}[Completeness]
  \label{thm.concurrent-feature-reduction-completeness}
  If $(S_{CF}, \pi) \in \mathcal{S}_{CF}(\Delta_f, \Delta_a, g, r_{CF})$, define
  \[\mathit{active} = ((n, v),\, \_) \in S_{CF},\ u \in vs,\ fs \subseteq fs',\ ((m, u), fs') \in S_{CF},\ ((m, u),\, (n, v)) \in \pi\]
  a predicate for whether $(n, v)$ selects dependee $(m, u)$, and collect its intermediate packages as
  \[\mathit{witness}(j) = \{ (i, u) \mid \mathit{active} \} \cup \{ (j, u) \mid f' \in fs,\ \mathit{active} \}\]
  then
  \settowidth{\opsubwd}{$\scriptstyle ((n, v), f) \Delta_a (m, vs, fs)$}%
  \[
    \begin{aligned}
      S = {}& \bigcup_{\mathmakebox[\opsubwd]{((n, v),\, fs) \in S_{CF}}} \big( \{ (\langle n, g(v) \rangle, v) \} \cup \{ (\langle \langle n, f \rangle, g(v) \rangle, v) \mid f \in fs \} \big) \\
      {}\cup{}& \bigcup_{\mathmakebox[\opsubwd]{(n, v) \Delta_f (m, vs, fs)}} \mathit{witness}(\langle n, v, m, f' \rangle)
      {}\cup{} \bigcup_{\mathmakebox[\opsubwd]{((n, v), f) \Delta_a (m, vs, fs)}} \mathit{witness}(\langle n, v, f, m, f' \rangle)
    \end{aligned}
  \]
  is a valid resolution in $\mathcal{S}(\Delta, r)$.
\end{theorem}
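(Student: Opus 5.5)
We verify directly that the constructed $S$ satisfies the three core conditions of Def.~\ref{def.calculus.resolution}, with each dependency family of Def.~\ref{def.concurrent-feature-reduction.dependencies} discharged by the matching condition of Def.~\ref{def.concurrent-feature-resolution}. This mirrors the completeness proofs for the Concurrent Reduction (Thm.~\ref{thm.concurrent-reduction-completeness}) and the Feature Reduction (Thm.~\ref{thm.feature-reduction-completeness}). First we check $S \subseteq R$. A granular package $(\langle n, g(v)\rangle, v)$ comes from $((n,v),fs) \in S_{CF}$, so $(n,v) \in R_{CF}$. A feature package $(\langle\langle n,f\rangle,g(v)\rangle,v)$ is in $R$ by \textbf{Support}. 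Each witness $(i,u)$ or $(j,u)$ has $u \in vs$ by $\mathit{active}$, hence lies in $P(j)$. Root inclusion is immediate, since $(r_{CF},\emptyset) \in S_{CF}$ yields $(\langle n,g(v)\rangle,v) = r$.

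For dependency closure we case on the source of each core dependency.
\begin{enumerate*}[label=(\roman*)]
\item A feature package $(\langle\langle n,f\rangle,g(v)\rangle,v) \in S$ depends on $(\langle n,g(v)\rangle,\{v\})$. This holds by construction, because both are generated from the same element of $S_{CF}$.
\item Suppose $p = (\langle n,g(v)\rangle,v) \in S$ with $(n,v)\Delta_f(m,vs,fs)$, so that $((n,v),\_) \in S_{CF}$. By $\mathit{sel}$ there is a unique $u \in vs$ and some $fs' \supseteq fs$ with $((m,u),fs') \in S_{CF}$ and $((m,u),(n,v)) \in \pi$. Then $\mathit{active}$ holds, so $(i,u) \in S$ and, for each $f' \in fs$, $(j,u) \in S$, which discharges $p$'s dependencies on $i$ and $j$. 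The dependency $(i,u) \to (\langle m,g(u)\rangle,\{u\})$ holds because $((m,u),fs') \in S_{CF}$. The dependency $(j,u) \to (\langle\langle m,f'\rangle,g(u)\rangle,\{u\})$ holds because $f' \in fs \subseteq fs'$, and $(j,u) \to (i,\{u\})$ holds by $\mathit{active}$ again.
\item The $\Delta_a$ case is identical, using the second clause of \textbf{Parameterised and Additional Parent Closure}. Here the depender is the feature package, which is present exactly when $f \in fs_p$.
\end{enumerate*}
We must also handle witness packages $(i,u)$ that arise from a different active dependency than the one being examined. Every element of $S$ of the form $(i,u)$ is itself witnessed by some $\mathit{active}$ instance, so its outgoing dependencies are discharged by the argument in (ii).

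Version uniqueness is the main obstacle, and we split it by the kind of name. For granular names $\langle n, w\rangle$, two elements $(n,v)$ and $(n,v')$ with $g(v)=g(v')$ force $v=v'$ by \textbf{Version Granularity}. The same holds for feature names $\langle\langle n,f\rangle,w\rangle$. For an intermediate name $i = \langle n,v,m\rangle$, two witnesses $(i,u)$ and $(i,u')$ give $((m,u),(n,v))$ and $((m,u'),(n,v)) \in \pi$, so $u=u'$ by \textbf{Parent Relation Functionality}. This is exactly why that condition was added: $i$ is shared between $\Delta_f$ and $\Delta_a$ dependencies of the same depender on $m$, and their $vs$ may differ. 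The per-feature names $j$ reduce to the same argument.

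The delicate point is that the $\mathit{active}$ predicate does not mention which dependency's $vs$ and $fs$ were used. We must therefore check that a witness $(j,u)$ generated under one dependency still satisfies the dependencies $D$ imposes on it under another dependency sharing the name $j$. Our first attempt is to observe that $j$ already includes $f'$, and that $u$ is pinned by $\pi$-functionality. Then $u$ lies in every relevant $vs$ through $\mathit{active}$, and $f' \in fs'$ follows from the feature set of $(m,u)$, which is unique by \textbf{Feature Unification}.
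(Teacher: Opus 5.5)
Your proof is correct, and it is the direct condition-by-condition check one expects here; the paper gives no written proof and defers to its Lean mechanisation. Root inclusion follows from $(r_{CF},\emptyset)\in S_{CF}$, closure follows for each dependency family via $\mathit{sel}$, and uniqueness follows from Version Granularity for the granular and feature names and from Parent Relation Functionality for the shared intermediate $\langle n, v, m\rangle$. Your hedged last paragraph can be stated outright and simplified: the obligations $D$ places on any $(i,u)$ or $(j,u)$ depend only on $u$ and the components of the name ($m$, plus $f'$ for $j$), so the $\mathit{active}$ instance that generated the witness already discharges them, and Feature Unification is not needed there.
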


\begin{figure}[ht]
  \captionsetup{justification=centering}
  \begin{lrbox}{\sfboxa}\begin{minipage}{0.25\textwidth}%
{\scriptsize\bfseries\ttfamily B-1/Cargo.toml}
\begin{Verbatim}[fontsize=\scriptsize]
[dependencies.D]
version = ">=1, <3"
features = ["a"]
\end{Verbatim}
{\scriptsize\bfseries\ttfamily C-1/Cargo.toml}
\begin{Verbatim}[fontsize=\scriptsize]
[dependencies.D]
version = ">=2, <4"
features = ["b"]
\end{Verbatim}
{\scriptsize\bfseries\ttfamily D-1/Cargo.toml}
\begin{Verbatim}[fontsize=\scriptsize]
[features]
a = ["F/c"]
b = ["F/d"]
[dependencies]
F.version = "1"
F.optional = true
\end{Verbatim}
{\scriptsize\bfseries\ttfamily F-1/Cargo.toml}
\begin{Verbatim}[fontsize=\scriptsize]
[features]
c = []
d = []
\end{Verbatim}
  \end{minipage}\end{lrbox}%
  \begin{lrbox}{\sfboxb}\begin{minipage}{0.30\textwidth}\centering
    \[\begin{aligned}
      (A, 1) &\Delta_f (B, \{1\}, \emptyset) \\
      (A, 1) &\Delta_f (C, \{1\}, \emptyset) \\
      (B, 1) &\Delta_f (D, \{1, 2\}, \{\alpha\}) \\
      (C, 1) &\Delta_f (D, \{2, 3\}, \{\beta\}) \\
      ((D, 1), \alpha) &\Delta_a (F, \{1\}, \{\gamma\}) \\
      ((D, 1), \beta) &\Delta_a (F, \{1\}, \{\delta\})
    \end{aligned}\]
  \end{minipage}\end{lrbox}%
  \begin{lrbox}{\sfboxc}\begin{minipage}{0.40\textwidth}\centering
    \includegraphics[scale=0.8]{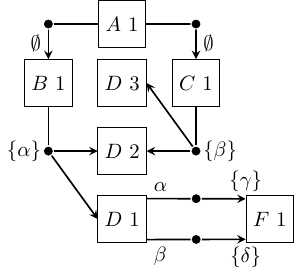}%
  \end{minipage}\end{lrbox}%
  \setlength{\sfrowht}{0pt}\sfmax{\sfboxa}\sfmax{\sfboxb}\sfmax{\sfboxc}%
  \hfill
  \begin{subfigure}[b]{0.25\textwidth}
    \sfcell{\usebox{\sfboxa}}
    \caption{Cargo syntax.}
    \label{fig.concurrent-feature.cargo}
  \end{subfigure}%
  \hfill
  \begin{subfigure}[b]{0.30\textwidth}
    \sfcell{\usebox{\sfboxb}}
    \caption{Concurrent Feature \\ Package Calculus.}
    \label{fig.concurrent-feature.calculus}
  \end{subfigure}%
  \hfill
  \begin{subfigure}[b]{0.40\textwidth}
    \sfcell{\usebox{\sfboxc}}
    \Description{Hypergraph rendering of the concurrent-feature calculus dependencies given alongside.}
    \caption{Hypergraph illustration.}
    \label{fig.concurrent-feature.hypergraph}
  \end{subfigure}%
  \hfill\null
  \caption{Concurrent Feature Package Calculus instance where $g(v) = v$; $D$ supports feature $\alpha$ at versions $1$--$2$ and $\beta$ at versions $1$--$3$, and $F$ supports $\gamma$ and $\delta$ at version $1$.}
  \label{fig.concurrent-feature}
\end{figure}

Fig.~\ref{fig.concurrent-feature} shows a Concurrent Feature Package Calculus instance reduced to the core in Fig.~\ref{fig.concurrent-feature-reduction} (Appendix~\ref{appendix.concurrent-feature-reduction}).

\subsubsection{Composition Considerations}
\label{sec.a-la-carte.composition.considerations}

Reductions satisfy no general composition law: each composition must be defined explicitly, accounting for how the component reductions interact.
The concurrent-feature composition shows the method: defining the semantics of the composed extension and then combining reductions.
We identify three considerations for composing extensions to model real-world package managers.

\paragraph{Invariant Violation.}
\label{sec.a-la-carte.composition.considerations.invariant-violation}
The conflict reduction (Def.~\ref{def.conflict-reduction}) and the negation encoding of the package formula reduction (Def.~\ref{def.package-formula-reduction}) encode mutual exclusion via version uniqueness, which the Concurrent Package Calculus relaxes -- breaking the encoding.
A correct composition must make the reductions mutually aware: either apply the concurrent reduction first and thread its $\langle n, g(v) \rangle$ renaming through, or assign $\epsilon$ granularity to the synthetic packages so they conflict in Def.~\ref{def.concurrent-resolution.version-granularity}.

\paragraph{Interacting Extensions.}
\label{sec.a-la-carte.composition.considerations.interacting-extensions}
Even when invariants are preserved, extensions whose dependency types interact require reductions that are aware of each other's constructs.
For example, Portage combines features~(\S\ref{sec.mise-en-place.features}) with package formulae~(\S\ref{sec.mise-en-place.package-formula}), and a formula atom can carry feature annotations: \texttt{|| ( foo[bar] baz )}.
A correct composition requires the formula atom $(n, vs)$ (Def.~\ref{def.package-formula.grammar}) to support features (e.g.\ $(n, vs, fs)$), and the feature additional dependency target $(n, vs, fs)$ (Def.~\ref{def.feature-dependency.additional}) to support formulae (e.g.\ $\psi$).

\paragraph{Ecosystem Variation.}
\label{sec.a-la-carte.composition.considerations.ecosystem-variation}
The semantics of individual extensions can also differ subtly between ecosystems.
For example, npm supports two different behaviours for peer dependencies depending on the resolver version~(\S\ref{sec.mise-en-place.peer-dependencies}); Cargo features are additive only, while Portage's USE flags can additionally require the absence of a flag~(\S\ref{sec.background.survey.features}) -- effectively a conflict on a feature.
In each case the core calculus is the fixed target; composition reconciles the component reductions without extending the core.

\subsection{Transpiling Packaging Languages}
\label{sec.a-la-carte.transpiling}

Polyglot resolution~(\S\ref{sec.a-la-carte.composition}) resolves dependencies across ecosystems but leaves them expressed in the core.
A more ambitious goal is to \textit{translate} dependencies from one ecosystem's packaging language into another's.
The core makes this concrete by serving as an intermediate representation: rather than $n^2$ translators that each understand a pair of dependency languages, one routes through the core, which reduces the problem to a per-ecosystem frontend and backend -- $2n$ translators -- via the pipeline:
\[L_A \xrightarrow{\,\textup{parse}\,} \Delta_A \xrightarrow{\,\textup{reduce}\,} \Delta \xrightarrow{\,\textup{lift}\,} \Delta_B \xrightarrow{\,\textup{emit}\,} L_B\]
where $L_A$ and $L_B$ denote the source and target DSLs, and $\Delta_A$, $\Delta_B$ the source and target extended-calculus instances.

Casting translation this way isolates where the difficulty lies.
Parsing and emitting are the ecosystem-specific frontend and backend, converting between a DSL and a calculus instance; reduction is as in~\S\ref{sec.mise-en-place}.
The crux is \textit{lifting}: an operation $\textup{lift}$ that reconstructs an extended instance from the core by decoding the synthetic names $\langle \cdot \rangle \in N$ that each reduction introduces.
For a single extension, lifting is a \textit{retraction} of the reduction, $\textup{lift} \circ \textup{reduce} = \mathrm{id}$, recovering the original instance up to normalisation; single-extension translation therefore follows from the reductions of~\S\ref{sec.mise-en-place}.

The open problem is lifting a \textit{composed} reduction.
A real ecosystem combines several extensions, so its target instance $\Delta_B$ must be reconstructed by a composed lifting, and the per-extension liftings do not compose: a composed reduction interleaves the synthetic structures of its constituents -- a concurrent-feature instance nests feature names inside granular names~(Def.~\ref{def.concurrent-feature-reduction}) -- so one cannot lift a single extension out in isolation.
Like a composed reduction, a composed lifting must be defined explicitly, subject to the considerations of~\S\ref{sec.a-la-carte.composition.considerations}.
The calculus therefore reduces cross-ecosystem translation to a single, well-defined problem -- inverting composed reductions -- which we leave, with the \textit{parse} and \textit{emit} stages, to future work.

\section{Related Work}
\label{sec.related}

\paragraph{Dependency Resolution.}
\label{sec.related.dependency-resolution}

The idea of using a formalism for package versioning dates back to the Optimal Package Install/Uninstall Manager (Opium)~\cite{tucker2007opium} in 2007, which adapted Debian's \texttt{apt-get} to use a SAT solver for dependency resolution.
Opium's formalism is close to our Conflict Package Calculus; it encodes Debian's dependencies and conflicts in a SAT problem, expanding version formulae to sets.
We describe how a resolver can prioritise fresher resolutions (Def.~\ref{def.resolution-ordering}); Opium goes further by optimising an objective function over resolutions via pseudo-Boolean and integer linear programming solvers, which it uses to define the install and uninstall problems.

Later work on the Common Upgradeability Description Format (CUDF)~\cite{cudf} generalised the solver interface into a file format that could be consumed by specialised package resolvers.
One such early resolver used answer-set programming (ASP)~\cite{gebser2011asp}, which also found use in high-performance computing (HPC) software infrastructure for solving dependencies with support for a rich set of user-specified resolver preferences~\cite{gamblin2022asp}.
CUDF provides a fixed vocabulary of dependency constructs -- depends, conflicts, and provides -- with integer versions, rich optimisation criteria, and support for upgrade semantics (modelling the currently installed state).
This line of work also proposed modular package-manager architectures that delegate constraint solving to pluggable external solvers~\cite{abate2011mpm}, and developed a formal model of package installation~\cite{dicosmo2013formal} whose conditions -- abundance, peace, and flatness -- correspond to our dependency closure, conflict avoidance, and version uniqueness.
The model notes that the presence of flatness varies between ecosystems -- the divergence which our concurrent-version extension~(\S\ref{sec.mise-en-place.concurrent}) formalises.
However, the fixed vocabulary of CUDF forces package managers to pre-evaluate or expand constructs that do not fit, and CUDF has not seen wide adoption beyond the Debian and opam ecosystems.

Abate et al.~\cite{abate2020dependency} observed in 2020 that while SAT-based dependency solving did find traction among some package manager implementations, reusable components for version solving had not seen wider adoption.
The Ecosyste.ms project~\cite{ecosystems-resolvers} catalogues resolver algorithms across ecosystems, classifying resolution strategies for over thirty package managers.
In recent years, some of the larger package ecosystems, such as npm, have come under severe scaling pressure, and it is becoming increasingly important to revisit the question of how to reuse knowledge across ecosystems.
For example, PacSolve's MaxSMT-based resolver is a drop-in replacement for npm that reduced vulnerabilities in resolved packages and picked fresher and fewer dependees~\cite{pinckney2023pacsolve,pinckney2024phd}.
Approximate solving has also been useful for package selection in bounded-latency situations~\cite{ignatiev2014optimization}.
However, these systems are all tied to a single ecosystem: none can resolve dependencies that cross ecosystem boundaries.
PubGrub~\cite{weizenbaum2018pubgrub} is a CDCL-based dependency resolution algorithm created for Pub (Dart's package manager) that has since been adopted by Bundler (Ruby), Poetry (Python), and Swift Package Manager, among others.
The pubgrub-rs Rust implementation~\cite{pubgrub-rs} encodes concurrent versions and features with the same techniques we formalise in~\S\ref{sec.mise-en-place.concurrent} and~\S\ref{sec.mise-en-place.features}.

More abstract treatments model dependency metadata itself as a mathematical object -- a general event structure~\cite{bazerman2021events}, or a \textit{dependency structure with choice} isomorphic to an antimatroid~\cite{bazerman2024mathematical} -- characterising the space of dependency-closed states rather than the resolution problem we address.

\paragraph{Build Systems.}
\label{sec.related.build-systems}
Package management is closely tied to build systems, which are also surprisingly poorly specified~\cite{mokhov2018build}.
They are sometimes unified in newer toolchains (Table~\ref{tbl.comparison}), but doing so makes it difficult to compose projects that span ecosystems (consider the difficulty of depending on a Rust library from OCaml code with just two package managers).
The Package Calculus specifies dependency resolution as a formalism independent of the underlying build systems, preserving across ecosystems the separation that unified toolchains lose.
In the future, creating a composable formal theory of build systems and package managers~\cite{agnarsson1985} would let us combine cross-ecosystem codebases to conduct correctness testing across the vast amount of source code published~\cite{macho2024dvalidator}.

\paragraph{Software Supply Chain Security.}
\label{sec.related.supply-chain-security}
The increasing complexity of cross-ecosystem dependency chains puts pressure on the security of software supply chains, both in the choice of dependees selected for a software project~\cite{alfadel2023python,alfadel2023npm,zhang2023maven} and in vulnerabilities in the many package managers available~\cite{cappos2008attacks,bos2023attacks}.
Software Bills of Materials (SBOMs) are being rapidly adopted~\cite{cofano2024sbom} to specify the full set of versioned dependees that go into a given application.
However, the same problem of uneven quality across ecosystems applies here, with the JavaScript ecosystem being particularly unpredictable~\cite{rabbi2024sbom}.
The Package Calculus could be used to generate SBOMs systematically, reduce transitive trust dependencies across language and OS ecosystems~\cite{schwaighofer2024extending}, drop resource usage by debloating unnecessary packages~\cite{pashakhanloo2022pacjam}, and reduce the latency of applying security updates in third-party dependees~\cite{rahkema2022dependencies,stringer2020lag}.

\section{Conclusion}
\label{sec.conclusion}

We surveyed package managers, identifying a shared core and the axes along which they diverge~(\S\ref{sec.background}).
We formalised this core as the Package Calculus~(\S\ref{sec.calculus}).
We modelled each axis of divergence as an extension with a reduction back to the core~(\S\ref{sec.mise-en-place}).
We showed how extensions compose to model package managers that combine several of them, providing the foundation for polyglot resolution across ecosystems, and discussed how the calculus might further serve as an intermediate representation for translation between ecosystems, reducing $n^2$ translators to $2n$~(\S\ref{sec.a-la-carte}).
Building on this foundation, future work can develop principled cross-ecosystem tooling~(\S\ref{sec.introduction}) via translators between ecosystem DSLs, polyglot resolvers, and ecosystem-specific resolvers verified against the calculus's specifications.

\begin{acks}
  We thank Raito Bezarius, Michael Winston Dales, Kate Deplaix, Roberto Di Cosmo, Mark Elvers, Arjun Guha, Sadiq Jaffer, Ranjit Jhala, Artsiom Karakin, Shriram Krishnamurthi, Thomas Leonard, Jon Ludlam, Andrew Nesbitt, Simon Peyton Jones, Swapnil Raj, James Rowbottom, Jacob Trevor, and Garrett Wollman for their contributions and comments.
  Simon Peyton Jones in particular spent much time giving us layout and notation suggestions that greatly improved the presentation.
  We also thank the anonymous ICFP reviewers for their feedback.
\end{acks}

\section*{Funding}

This research was supported by unrestricted donations from the Huawei HiSilicon Scholarship, Tarides, and Jane Street.

\section*{Data-Availability Statement}

The Lean~4 mechanisation of the theorems in this paper is available as an archive on Zenodo~\cite{artifact}.
The latest version is maintained at \url{https://github.com/RyanGibb/package-calculus/}.

\appendix

\section{A Short History of Package Managers}
\label{appendix.history}

The original Unix and pre-Unix operating systems had no concept of automated package management; the operating system was distributed with libraries and utilities, and third-party software had to be built and installed manually.
System V Release 4.0, announced in 1988, included the `packaging tool' \texttt{pkgadd} and related commands to automate the process of installing packages onto the system from a source such as a tape, CD-ROM, or networked filesystem, as well as updating and removing them~\cite{svr4guide}.
It verified that dependees were installed, but did not automatically select and install compatible versions.
Other commercial Unix vendors created similar systems in the following years, as did early Linux (1991) distributions: Slackware's \texttt{pkgtool} (1993), Debian's \texttt{dpkg} (1994), and Red Hat's \texttt{rpm} (1995).
FreeBSD's Ports (1994) introduced automatic dependee installation, relying on regular expressions on package names to ensure application binary interface (ABI) compatibility.

Early programming-language package managers emerged in the form of archive networks for distributing software over the Internet with the Comprehensive \TeX\ Archive Network (CTAN) in 1992~\cite{ctan}, the Comprehensive Perl Archive Network (CPAN) in 1995~\cite{cpan}, and the Comprehensive R Archive Network (CRAN) in 1997~\cite{cran}.
The Perl module \texttt{CPAN.pm} and CRAN's \texttt{install.packages()} allowed for the automatic installation of dependees, but always selected the latest indexed version of each, performing no version solving.
CTAN had no native automated installer; package management was later provided by \TeX\ Live's \texttt{tlmgr} (2008), which likewise installed only the latest versions, though \TeX\ Live mitigated incoherence by distributing packages as a curated annual snapshot rather than a continuously updated live index.

Debian's Advanced Packaging Tool (APT) (1998), building atop the existing \texttt{dpkg}, resolved compatible versions of dependencies based on version constraints and automatically installed them.
The Yellowdog Updater, Modified (YUM) (2002), building on \texttt{rpm}, brought similar capabilities to Red Hat and other RPM-based distributions, and was later superseded by Dandified YUM (DNF) (2013).
Dependency resolution in these systems has been shown to be NP-complete~\cite{dicosmo2005edos}, including for upgrades~\cite{abate2012}.
Slackware is a notable exception to the norm in dependency resolution; official mechanisms do not provide any support for dependency tracking, instead relying on the user to manually install dependees of a package.
At the other extreme, Nix~(2003)~\cite{dolstra2004nix} opted out of version solving by requiring exact versions of dependees, deploying packages at paths containing cryptographic hashes~(Appendix~\ref{appendix.singular-dependencies}).

By the mid-2000s, automatic dependency resolution with version constraints had become standard practice.
Package managers began to be built as monolithic systems expected of any new operating system or programming language.
This evolutionary pattern extends beyond operating systems and programming languages: plugin ecosystems for web browsers, text editors, and applications have similarly progressed through centralised archives, automatic installation, and dependency resolution.
More recently, tools originally designed for CI, chart templating, and infrastructure-as-code -- GitHub Actions, Helm, and Terraform -- have grown transitive dependency trees and inherited the problems of package management whether or not they were designed for it~\cite{nesbitt2026quacks}.
As these package managers have developed, the semantics of their dependencies have diverged~(\S\ref{sec.background.survey}), each introducing functionality to address its ecosystem's specific needs.

\section{Thm.~\ref{thm.resolution-complexity}: \textsc{DependencyResolution} Complexity}
\label{appendix.resolution-complexity}

\begin{proof}
  We establish both membership and hardness:

  \textbf{Membership in NP}: a candidate resolution $S$ can be verified in polynomial time by checking
  root inclusion (Def.~\ref{def.calculus.resolution.root-inclusion}) in $O(|S|)$ time,
  dependency closure (Def.~\ref{def.calculus.resolution.dependency-closure}) in $O(|S|\cdot|\Delta|)$ time,
  and version uniqueness (Def.~\ref{def.calculus.resolution.version-uniqueness}) in $O(|S|^{2})$ time.

  \textbf{NP-Hardness}:
  we give a polynomial-time reduction from \textsc{3-SAT}, which is well-known to be NP-complete~\cite{cook1971sat}.
  Given a 3-CNF formula with variables $x_1, \dots, x_n$ and clauses $c_1, \dots, c_m$ (each $c_j = l_1 \lor l_2 \lor l_3$), we build a resolution instance with root $r = (q, \epsilon)$ and packages
  \[
    R = \{(q,\epsilon)\}
      \cup \bigcup_i \{(x_i,\top),(x_i,\bot)\}
      \cup \bigcup_j \{(c_j,l_1),(c_j,l_2),(c_j,l_3)\},
  \]
  whose dependencies are, for every clause $c_j$,
  \[
    (q,\epsilon)\,\Delta\,(c_j,\{l_1,l_2,l_3\})
    \qquad\text{and, for each } k \in \{1,2,3\},\qquad
    (c_j,l_k)\,\Delta\,(\text{var}(l_k),\{\text{pol}(l_k)\}),
  \]
  where $\text{var}(l)$ is the variable of literal $l$, and $\text{pol}(l)$ is $\top$ for positive literals and $\bot$ for negative.

  The construction runs in $O(n+m)$ time.
  We prove the equivalence:

  \noindent$(\Rightarrow)$ Given a satisfying assignment $\sigma$, for each clause $c_j = l_1 \lor l_2 \lor l_3$ let $k_j = \min\{k \in \{1,2,3\} \mid \sigma(l_k) = \top\}$.
  Then $S = \{r\} \cup \{(c_j, l_{k_j}) \mid 1 \leq j \leq m\} \cup \{(x_i, \sigma(x_i)) \mid 1 \leq i \leq n\}$ satisfies all dependencies by construction and maintains version uniqueness.

  \noindent$(\Leftarrow)$ Any resolution $S \in \mathcal{S}(\Delta, r)$ induces a satisfying assignment $\sigma(x_i) = v$ where $(x_i, v) \in S$ (arbitrary otherwise), since each clause package $c_j$ requires at least one of its literals to be satisfied.

  As \textsc{DependencyResolution} is in NP and is NP-hard, we conclude that it is NP-complete.
\end{proof}

\section{SAT-Based Dependency Resolution}
\label{appendix.sat-resolution}

Modern package managers~\cite{cudf} often employ SAT-based resolution to solve dependency problems.
We encode \textsc{DependencyResolution} as \textsc{SAT} as follows.

\begin{definition}[Package Calculus SAT Encoding]
  \label{def.package-calculus-sat-encoding}
  Given real packages $R$, dependency relation $\Delta$, and root $r$, where $X_p$ denotes inclusion of package $p$, we define an encoding to the SAT instance $\Sigma \coloneqq (X_r) \land \Sigma_\Delta \land \Sigma_{\neq}$ where:
  \[\Sigma_\Delta \coloneqq \bigwedge_{p \Delta (n, vs)} \left(\neg X_p \lor \bigvee_{\substack{v \in vs \\ (n, v) \in R}} X_{(n, v)}\right)
  \qquad\qquad
  \Sigma_\neq \coloneqq \bigwedge_{\substack{(n, v), (n, v') \in R \\ v <_v v'}} (\neg X_{(n, v)} \lor \neg X_{(n, v')})\]
  Note that for a package name $n \in N$ with $k = |\{v \mid (n, v) \in R\}|$ versions, this adds $O(k^2)$ clauses, but a SAT solver with an at-most-one primitive can reduce this to $O(k)$~\cite{nguyen2021amo}.
\end{definition}

\begin{theorem}[Soundness]
  \label{thm.package-calculus-sat-encoding-soundness}
  For any assignment $\sigma : \{ X_p \mid p \in N \times V \} \to \{\top, \bot\}$ that satisfies $\Sigma$, the set $S$ defined by
  $\{p \in R \mid \sigma(X_p) = \top\}$
  is a valid resolution in $\mathcal{S}(\Delta, r)$.
\end{theorem}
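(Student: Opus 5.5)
The argument unfolds the definition of $S$ and checks the three conditions of Def.~\ref{def.calculus.resolution} one at a time, each read off from a single conjunct of $\Sigma$. Fix an assignment $\sigma$ satisfying $\Sigma = (X_r) \land \Sigma_\Delta \land \Sigma_{\neq}$, and let $S = \{p \in R \mid \sigma(X_p) = \top\}$. By construction $S \subseteq R$, so $S$ is a candidate resolution.

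\textbf{Root inclusion} (Def.~\ref{def.calculus.resolution.root-inclusion}). The unit clause $(X_r)$ forces $\sigma(X_r) = \top$. Since $r \in R$ by hypothesis, we get $r \in S$.

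\textbf{Dependency closure} (Def.~\ref{def.calculus.resolution.dependency-closure}). Take $p \in S$ with $p \Delta (n, vs)$. The conjunct $\neg X_p \lor \bigvee_{v \in vs,\, (n, v) \in R} X_{(n, v)}$ occurs in $\Sigma_\Delta$, and $\sigma(X_p) = \top$, so some disjunct $X_{(n, v)}$ is true for a $v \in vs$ with $(n, v) \in R$. That $(n, v)$ therefore lies in $S$, as required. The disjunction is restricted to real packages, and this restriction is exactly what makes the witness land in $S \subseteq R$ rather than merely in $N \times V$. If there are no such real $v$, the clause reduces to $\neg X_p$, which contradicts $p \in S$, so the case cannot occur.

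\textbf{Version uniqueness} (Def.~\ref{def.calculus.resolution.version-uniqueness}). Suppose $(n, v), (n, v') \in S$ with $v \neq v'$. By totality of $\leq_v$ (Def.~\ref{def.version-ordering}), either $v <_v v'$ or $v' <_v v$. In either case $\Sigma_{\neq}$ contains the clause $\neg X_{(n, v)} \lor \neg X_{(n, v')}$ (in the appropriate order), and both literals are false under $\sigma$. This contradicts the assumption that $\sigma$ satisfies $\Sigma$. The only subtle point is this appeal to a total version ordering: $\Sigma_{\neq}$ enumerates only pairs with $v <_v v'$, so without totality some distinct pair could escape the enumeration.

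All three conditions hold, so $S \in \mathcal{S}(\Delta, r)$.
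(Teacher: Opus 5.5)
Your proposal is correct and follows essentially the same approach as the paper, which states the theorem without a written proof and mechanises it in Lean. Each condition of Def.~\ref{def.calculus.resolution} is read off from the matching conjunct $(X_r)$, $\Sigma_\Delta$ or $\Sigma_{\neq}$ of $\Sigma$, and you rightly note that version uniqueness depends on $\leq_v$ being total, so that $\Sigma_{\neq}$ covers every distinct same-name pair.
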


\begin{theorem}[Completeness]
  \label{thm.package-calculus-sat-encoding-completeness}
  For any $S \in \mathcal{S}(\Delta, r)$, the assignment
  $\sigma(X_p) = \top \iff p \in S$
  satisfies $\Sigma$.
\end{theorem}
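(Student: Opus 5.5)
We take $S \in \mathcal{S}(\Delta, r)$ and the induced assignment $\sigma(X_p) = \top \iff p \in S$, and check the three conjuncts of $\Sigma = (X_r) \land \Sigma_\Delta \land \Sigma_{\neq}$ (Def.~\ref{def.package-calculus-sat-encoding}) separately. Each one corresponds to one condition of Def.~\ref{def.calculus.resolution}. Throughout we use $S \subseteq R$, which holds by the definition of a resolution.

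\textbf{Root unit clause.} The clause $(X_r)$ is satisfied because root inclusion (Def.~\ref{def.calculus.resolution.root-inclusion}) gives $r \in S$, so $\sigma(X_r) = \top$.

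\textbf{Dependency clauses.} Fix a clause of $\Sigma_\Delta$ arising from $p \Delta (n, vs)$, and split on whether $p \in S$.
\begin{itemize}
  \item If $p \notin S$, then $\sigma(X_p) = \bot$ and the literal $\neg X_p$ satisfies the clause.
  \item If $p \in S$, dependency closure (Def.~\ref{def.calculus.resolution.dependency-closure}) yields some $v \in vs$ with $(n, v) \in S$. Since $S \subseteq R$, we also have $(n, v) \in R$, so $X_{(n,v)}$ is one of the disjuncts, indexed by $v \in vs$ with $(n,v) \in R$. That disjunct is true under $\sigma$.
\end{itemize}
This is the only step needing care: the disjunction ranges only over real versions, so we must invoke $S \subseteq R$ to know the witness version actually occurs in the clause. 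One edge case deserves a remark. If no real version of $n$ lies in $vs$, the clause reduces to $\neg X_p$. Dependency closure then forces $p \notin S$, because a satisfying version would have to be in $S \subseteq R$. The case split above already covers this.

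\textbf{Uniqueness clauses.} Take a clause $\neg X_{(n,v)} \lor \neg X_{(n,v')}$ of $\Sigma_{\neq}$ with $v <_v v'$, so $v \neq v'$ because the ordering is strict. If both $(n,v)$ and $(n,v')$ were in $S$, version uniqueness (Def.~\ref{def.calculus.resolution.version-uniqueness}) would force $v = v'$, a contradiction. Hence at least one of them lies outside $S$, and its negated literal is true.

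All three conjuncts hold, so $\sigma$ satisfies $\Sigma$. The argument is a direct clause-by-clause unfolding, and we expect no real obstacle beyond the $S \subseteq R$ bookkeeping in the dependency case.
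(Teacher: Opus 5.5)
Your proof is correct and matches the direct argument the paper relies on; the paper prints no written proof of this theorem and defers to its Lean~4 mechanisation. You discharge $(X_r)$ by root inclusion, each clause of $\Sigma_\Delta$ by dependency closure together with $S \subseteq R$, and each clause of $\Sigma_{\neq}$ by version uniqueness together with the strictness of $<_v$.
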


\begin{example}
  \label{ex.sat-encoding}
  Consider the dependencies from Fig.~\ref{fig.calculus} with query $Q = \{(A, \{1\})\}$.
  The SAT encoding produces $\Sigma = X_r \land \Sigma_\Delta \land \Sigma_\neq$ where:
  \small
  \begin{align*}
      \Sigma_\Delta & \coloneqq (\neg X_r \lor X_{A1}) \land (\neg X_{A1} \lor X_{B1}) \land (\neg X_{A1} \lor X_{C1}) \land (\neg X_{B1} \lor X_{D1} \lor X_{D2}) \land (\neg X_{C1} \lor X_{D2} \lor X_{D3}) \\
      \Sigma_\neq & \coloneqq (\neg X_{D1} \lor \neg X_{D2}) \land (\neg X_{D1} \lor \neg X_{D3}) \land (\neg X_{D2} \lor \neg X_{D3})
  \end{align*}
\end{example}

\begin{definition}[Resolution Ordering]
  \label{def.resolution-ordering}
  A \textit{resolution ordering} is a partial order $\leq_{\mathcal{S}}$ on $\mathcal{S}(\Delta, r)$ where for $S_1, S_2 \in \mathcal{S}(\Delta, r)$, we say
  $S_1 \leq_{\mathcal{S}} S_2 \iff \forall\, (n, v_2) \in S_2.\, \exists\, v_1 \in V.\, (n, v_1) \in S_1 \land v_1 \leq_v v_2$.
  Every package name in $S_2$ has a version at least as great as that in $S_1$ according to $\leq_v$ (Def.~\ref{def.version-ordering}).
\end{definition}
Maximal elements of $\leq_{\mathcal{S}}$ represent the `freshest' possible resolutions, but a maximum resolution may not exist.
For example, given $(A, 1) \Delta (B, \{1, 2\})$, $(A, 1) \Delta (C, \{1, 2\})$, and $(B, 2) \Delta (C, \{1\})$, the query $Q = \{(A, \{1\})\}$ has maximal resolutions $\{r, (A, 1), (B, 1), (C, 2)\}$ and $\{r, (A, 1), (B, 2), (C, 1)\}$ but no maximum resolution.
Finding a valid resolution (Def.~\ref{def.calculus.resolution}) does not depend on version ordering, but an algorithm can prioritise higher versions (Def.~\ref{def.ordered-sat-encoding}).

\begin{definition}[Ordered SAT Encoding]
  \label{def.ordered-sat-encoding}
  Given the Package Calculus SAT encoding (Def.~\ref{def.package-calculus-sat-encoding}), we refine the dependency implication clauses using version ordering $\leq_v$,
  \[\Sigma_\Delta \coloneqq \bigwedge_{p \Delta (n, vs)} (\neg X_p \lor X_{(n, v_1)} \lor \cdots \lor X_{(n, v_k)}) \text{ where } \{v_1, \ldots, v_k\} = \{v \in vs \mid (n, v) \in R\},\ v_1 \geq_v \cdots \geq_v v_k\]
  This encoding ensures that in each dependency clause, higher versions appear first.
  SAT solvers prioritising leftmost literals will preferentially assign true to earlier (higher-version) literals.
  The ordering transforms version preference into a search heuristic without altering satisfiability.
\end{definition}

\section{Singular Dependencies}
\label{appendix.singular-dependencies}

Singular dependencies only restrict the core calculus, rather than extending it.
Nix, Guix~(Various)~\cite{courtes2013functional}, Unison, and Lake~(Lean~4) support only a single, exact version of a package name as a dependee, delegating the responsibility of version solving from the package manager to the packager.
Where the core calculus expresses a dependency on a \textit{set} of compatible versions and leaves it to the resolver to select among them, singular dependencies require the packager to commit to a specific version at packaging time.
The reduction to the core is trivial, but because singular dependencies are \textit{less} expressive than the core, we cannot reduce an instance of the core calculus to singular dependencies.

\begin{definition}[Singular Dependency]
  \label{def.singular-dependency}
  We define singular dependencies $\beta \subseteq (N \times V) \times (N \times V)$ as a relation where an element $p \beta d$ denotes a package $p$ expressing a dependency on a package $d$.
\end{definition}

\begin{definition}[Singular Dependency Resolution]
  \label{def.singular-dependency-resolution}
  Given singular dependencies $\beta$ and root $r_\beta$, a resolution $S_\beta \subseteq R_\beta$ is valid if:
  \begin{subdefinition}
    \item\label{def.singular-dependency-resolution.root-inclusion}
    \textbf{Root Inclusion}:
    $r_\beta \in S_\beta$, as in Def.~\ref{def.calculus.resolution.root-inclusion}.
    \item\label{def.singular-dependency-resolution.dependency-closure}
    \textbf{Dependency Closure}:
    $\forall\, p \in S_\beta.\, p \beta d \implies d \in S_\beta$
    \item\label{def.singular-dependency-resolution.version-uniqueness}
    \textbf{Version Uniqueness}:
    $\forall\, (n, v),\, (n, v') \in S_\beta.\, v = v'$, as in Def.~\ref{def.calculus.resolution.version-uniqueness}.
  \end{subdefinition}
\end{definition}

Nix DSL expressions contain conditional logic and feature-like parameters as function arguments when creating derivations.
Nix could add support for resolving version constraints as the Package Calculus does, either in the DSL or in the store derivation format, but the semantics of this would need to be defined.

For example, if we can say our Nix derivation depends on a package name with version constraints, can we parameterise this by a feature?
Since this feature selection can pull in different dependencies depending on the version of the package selected, this has to be part of the dependency resolution rather than simply a function parameter in the Nix DSL.
Even with version-constraint resolution added, such a Nix would need a mechanism to restrict this for ecosystems where it is not possible, for example, when libraries need to link together in a binary.

This explains the proliferation of tools such as \texttt{cargo2nix}, \texttt{opam-nix}, and \texttt{poetry2nix}, doing dependency resolution in ecosystem-specific tooling and locking versions in Nix derivations without any way to express version constraints across ecosystems.
In Table~\ref{tbl.comparison}, Nix is not listed as supporting most extensions, as it does not support them as part of version-constrained dependency resolution.

\section{Optional Dependencies in Build Graphs}
\label{appendix.optional-dependencies}

We promised in~\S\ref{sec.background.survey.optional-dependencies} that optional dependencies are a build-ordering concern, not a resolution concern, which we now make precise.
A valid resolution (Def.~\ref{def.calculus.resolution}) determines \textit{which} packages to install, but not the \textit{order} in which to install them.
Package managers like opam, as part of their deployment~(\S\ref{sec.background.common-core.deployment}), must linearise the resolution into a sequence of build actions such that all the dependees of a package are built before the depender itself.
This requires a topological sort of the dependency graph induced by the resolution -- which is only possible when that graph is acyclic.

In practice, dependency graphs can contain cycles.
In Debian, for example, \texttt{libgcc-s1} depends on \texttt{libc6} (it requires the C~library at run time), while \texttt{libc6} depends on \texttt{libgcc-s1} (it requires the GCC run-time support library).
Both packages are valid in a resolution, but no linear installation order can satisfy both ordering constraints simultaneously.
APT handles this by breaking cycles at an arbitrary point and relying on \texttt{dpkg}'s ability to partially unpack packages before their dependees are fully configured~\cite{debian}.
In opam, packages can be marked with the \texttt{post} variable, which indicates that the dependee is not required at build time, allowing the cycle to be broken explicitly.
Other package managers avoid the problem entirely: npm lays out files in nested directories and relies on Node.js's module loader to handle cycles between modules.

\begin{definition}[Build Graph]
  \label{def.build-graph}
  We formalise this installation ordering as the \textit{build graph}.
  Given dependencies $\Delta$ (Def.~\ref{def.calculus.dependency}) and a resolution $S$ (Def.~\ref{def.calculus.resolution}), the build graph is a directed graph with vertices $S$ and edges $E = \{ (p, (n, v)) \mid p \in S,\ p \Delta (n, vs),\ (n, v) \in S \}$.
\end{definition}

\begin{definition}[Optional Dependency]
  \label{def.optional-dependency}
  We define optional dependencies as a relation $O \subseteq (N \times V) \times (N \times \mathcal{P}(V))$.
  We denote an element of $O$ as $p O (n, vs)$ where package $p$ will use package name $n$ with version $v \in vs$ if $(n, v)$ is in the resolution.
\end{definition}

Optional dependees are required before their depender in the build graph (Def.~\ref{def.build-graph}).
\[E = \{ (p, (n, v)) \mid p \in S,\ p \Delta (n, vs),\ (n, v) \in S \} \cup \{ (p, (n, v)) \mid p \in S,\ p O (n, vs),\ v \in vs,\ (n, v) \in S \}\]

\begin{example}
  \label{ex.optional-dependency}
  Consider,
  \[(A, 1) \Delta (B, \{1\}) \qquad (A, 1) \Delta (D, \{1\}) \qquad (B, 1) \Delta (C, \{1\}) \qquad (B, 1) O (D, \{1\})\]
  If $Q = \{(B, \{1\})\}$, then $\{r, (B, 1), (C, 1)\} \in \mathcal{S}(\Delta, r)$, and its build graph is $\{(r, (B, 1)), ((B, 1), (C, 1))\}$.
  If $Q = \{(A, \{1\})\}$, then $\{r, (A, 1), (B, 1), (C, 1), (D, 1)\} \in \mathcal{S}(\Delta, r)$ and $E \ni ((B, 1), (D, 1))$.
\end{example}

We can trivially reduce optional dependencies to the core calculus by omitting them; they have no effect on the set of resolutions $\mathcal{S}(\Delta, r)$.
Package $B$ can be resolved regardless of $D$'s presence.
When assembling the build graph, if $p O (n, vs)$ and $(n, v) \in S$ with $v \in vs$, then $(p, (n, v)) \in E$ \textit{always} and there is no mechanism to disable this edge.
This model captures opam's behaviour where optional dependencies are purely a build-ordering mechanism with no resolution-time effects.

\section{Concurrent Feature Package Calculus Reduction}
\label{appendix.concurrent-feature-reduction}

\begin{figure}[ht]
  \captionsetup{justification=centering}
  \begin{subfigure}[b]{\textwidth}
    \hfill
    \begin{minipage}[t]{0.40\linewidth}
      \footnotesize
      \begin{align*}
        (\langle A, 1 \rangle, 1) &\Delta (\langle A, 1, B \rangle, \{1\}) \\
        (\langle A, 1, B \rangle, 1) &\Delta (\langle B, 1 \rangle, \{1\}) \\
        (\langle A, 1 \rangle, 1) &\Delta (\langle A, 1, C \rangle, \{1\}) \\
        (\langle A, 1, C \rangle, 1) &\Delta (\langle C, 1 \rangle, \{1\}) \\
        (\langle B, 1 \rangle, 1) &\Delta (\langle B, 1, D \rangle, \{1, 2\}) \\
        (\langle B, 1 \rangle, 1) &\Delta (\langle B, 1, D, \alpha \rangle, \{1, 2\}) \\
        (\langle C, 1 \rangle, 1) &\Delta (\langle C, 1, D \rangle, \{2, 3\}) \\
        (\langle C, 1 \rangle, 1) &\Delta (\langle C, 1, D, \beta \rangle, \{2, 3\}) \\
        (\langle B, 1, D \rangle, 1) &\Delta (\langle D, 1 \rangle, \{1\}) \\
        (\langle B, 1, D \rangle, 2) &\Delta (\langle D, 2 \rangle, \{2\}) \\
        (\langle B, 1, D, \alpha \rangle, 1) &\Delta (\langle \langle D, \alpha \rangle, 1 \rangle, \{1\}) \\
        (\langle B, 1, D, \alpha \rangle, 2) &\Delta (\langle \langle D, \alpha \rangle, 2 \rangle, \{2\}) \\
        (\langle B, 1, D, \alpha \rangle, 1) &\Delta (\langle B, 1, D \rangle, \{1\}) \\
        (\langle B, 1, D, \alpha \rangle, 2) &\Delta (\langle B, 1, D \rangle, \{2\}) \\
        (\langle C, 1, D \rangle, 2) &\Delta (\langle D, 2 \rangle, \{2\}) \\
        (\langle C, 1, D \rangle, 3) &\Delta (\langle D, 3 \rangle, \{3\}) \\
        (\langle C, 1, D, \beta \rangle, 2) &\Delta (\langle \langle D, \beta \rangle, 2 \rangle, \{2\}) \\
        (\langle C, 1, D, \beta \rangle, 3) &\Delta (\langle \langle D, \beta \rangle, 3 \rangle, \{3\})
      \end{align*}
    \end{minipage}%
    \hfill
    \begin{minipage}[t]{0.40\linewidth}
      \footnotesize
      \begin{align*}
        (\langle C, 1, D, \beta \rangle, 2) &\Delta (\langle C, 1, D \rangle, \{2\}) \\
        (\langle C, 1, D, \beta \rangle, 3) &\Delta (\langle C, 1, D \rangle, \{3\}) \\
        (\langle \langle D, \alpha \rangle, 1 \rangle, 1) &\Delta (\langle D, 1 \rangle, \{1\}) \\
        (\langle \langle D, \alpha \rangle, 2 \rangle, 2) &\Delta (\langle D, 2 \rangle, \{2\}) \\
        (\langle \langle D, \beta \rangle, 1 \rangle, 1) &\Delta (\langle D, 1 \rangle, \{1\}) \\
        (\langle \langle D, \beta \rangle, 2 \rangle, 2) &\Delta (\langle D, 2 \rangle, \{2\}) \\
        (\langle \langle D, \beta \rangle, 3 \rangle, 3) &\Delta (\langle D, 3 \rangle, \{3\}) \\
        (\langle \langle D, \alpha \rangle, 1 \rangle, 1) &\Delta (\langle D, 1, F \rangle, \{1\}) \\
        (\langle \langle D, \beta \rangle, 1 \rangle, 1) &\Delta (\langle D, 1, F \rangle, \{1\}) \\
        (\langle D, 1, F \rangle, 1) &\Delta (\langle F, 1 \rangle, \{1\}) \\
        (\langle \langle D, \alpha \rangle, 1 \rangle, 1) &\Delta (\langle D, 1, \alpha, F, \gamma \rangle, \{1\}) \\
        (\langle D, 1, \alpha, F, \gamma \rangle, 1) &\Delta (\langle \langle F, \gamma \rangle, 1 \rangle, \{1\}) \\
        (\langle D, 1, \alpha, F, \gamma \rangle, 1) &\Delta (\langle D, 1, F \rangle, \{1\}) \\
        (\langle \langle D, \beta \rangle, 1 \rangle, 1) &\Delta (\langle D, 1, \beta, F, \delta \rangle, \{1\}) \\
        (\langle D, 1, \beta, F, \delta \rangle, 1) &\Delta (\langle \langle F, \delta \rangle, 1 \rangle, \{1\}) \\
        (\langle D, 1, \beta, F, \delta \rangle, 1) &\Delta (\langle D, 1, F \rangle, \{1\}) \\
        (\langle \langle F, \gamma \rangle, 1 \rangle, 1) &\Delta (\langle F, 1 \rangle, \{1\}) \\
        (\langle \langle F, \delta \rangle, 1 \rangle, 1) &\Delta (\langle F, 1 \rangle, \{1\})
      \end{align*}
    \end{minipage}
    \hfill\null
    \caption{Reduction to the core calculus.}
    \label{fig.concurrent-feature-reduction.reduction}
  \end{subfigure}%
  \vspace{1em}
  \begin{subfigure}[b]{\textwidth}
    \centering
    \includegraphics[scale=0.8]{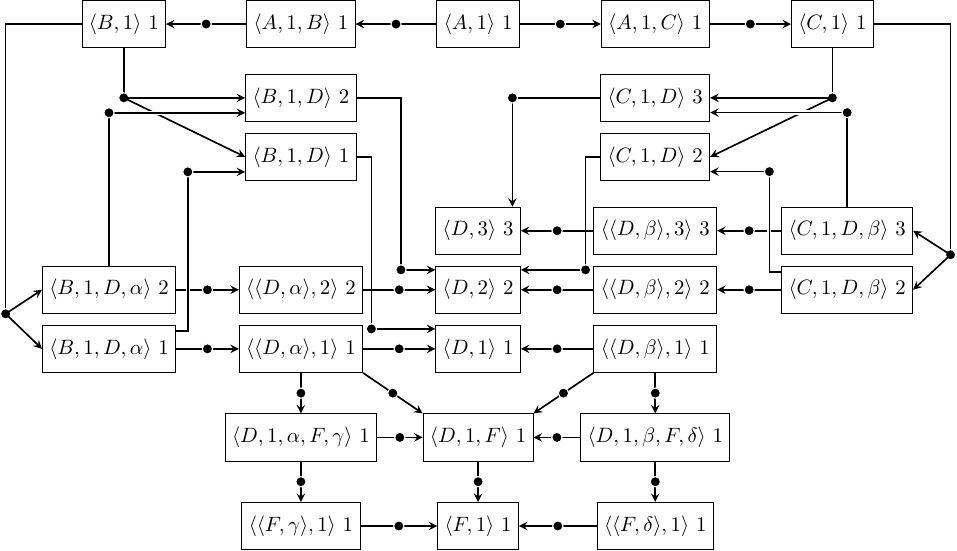}
    \Description{Hypergraph rendering of the reduction formula given alongside.}
    \caption{Hypergraph of reduction.}
    \label{fig.concurrent-feature-reduction.hypergraph}
  \end{subfigure}%
  \caption{Reduction of Fig.~\ref{fig.concurrent-feature} to the core calculus.}
  \label{fig.concurrent-feature-reduction}
\end{figure}

\bibliography{reference}

%% file: table.tex
\newcommand\chk{\color{black}{\ding{51}}}
\newcommand\crs{}%
\newcommand\naa{-}
\newcounter{tabfn}
\newcommand{\tfntext}[2]{%
  \refstepcounter{tabfn}%
  \noindent\textsuperscript{\arabic{tabfn}}%
  \raisebox{7pt}[0pt][0pt]{\phantomsection\label{tabfn-#1}}%
  #2\par
}
\newcommand{\tfnref}[1]{\hyperref[tabfn-#1]{\textsuperscript{\ref{tabfn-#1}}}}
\newcommand{\chkf}[1]{\chk\rlap{\tfnref{#1}}}

\afterpage{

\let\oldfootnotesize\footnotesize
\renewcommand{\footnotesize}{\fontsize{7pt}{0pt}\selectfont}

\thispagestyle{empty}
\setlength{\abovecaptionskip}{4pt}
\setlength{\belowcaptionskip}{0pt}
\setlength{\multicolsep}{0pt}
\begin{landscape}
\begin{table}
\fontsize{6.25pt}{9pt}\selectfont
\setlength{\tabcolsep}{4.8pt}
\captionsetup{justification=centering}
\caption{A comparison of representative package managers.}
\label{tbl.comparison}
\rowcolors{3}{gray!10}{white}
\begin{tabular}{lclll|ccccccccc}
\multicolumn{5}{c|}{Description} & \multicolumn{9}{c}{Dependency Expression}\\
\bf \makecell[b]{Package\\Manager}
& \bf Year
& \bf \makecell[b]{Ecosystem\\\S\ref{sec.background.survey.ecosystem}}
& \bf \makecell[b]{Toolchain\tfnref{toolchain-legend}\\\S\ref{sec.background.survey.toolchain}}
& \bf \makecell[b]{Packaging\\Language\\\S\ref{sec.background.survey.language}}
& \bf \makecell[b]{Version\\Constraints\\\S\ref{sec.background.survey.version-constraints}}
& \bf \makecell[b]{Conflicts\\\S\ref{sec.background.survey.conflicts}}
& \bf \makecell[b]{Concurrent\\Versions\\\S\ref{sec.background.survey.concurrent}}
& \bf \makecell[b]{Peer\\Deps\\\S\ref{sec.background.survey.peer}}
& \bf \makecell[b]{Features\\\S\ref{sec.background.survey.features}}
& \bf \makecell[b]{Package\\Formulae\\\S\ref{sec.background.survey.package-formula}}
& \bf \makecell[b]{Variable\\Formulae\\\S\ref{sec.background.survey.variable-formula}}
& \bf \makecell[b]{Virtual\\Packages\\\S\ref{sec.background.survey.virtual-packages}}
& \bf \makecell[b]{Optional\\Deps\\\S\ref{sec.background.survey.optional-dependencies}}
\\
\hline
CPAN                                & 1995 & Perl                & \bcircle{P}                               & \texttt{cpanfile}               & \chkf{cpan-greedy}
                                                                                                                                                      & \crs & \crs & \crs & \crs & \crs & \crs & \crs & \crs \\
\texttt{pkg\_add}                   & 1996 & OpenBSD             & \bcircle{P}                               & BSD Make                        & \chk & \chk & \crs & \crs & \crs & \chk & \crs & \crs & \crs \\
CRAN                                & 1997 & R                   & \bcircle{P}                               & \texttt{DESCRIPTION}            & \chk & \crs & \crs & \crs & \crs & \crs & \crs & \crs & \crs \\
APT                                 & 1998 & Debian Linux        & \bcircle{P}                               & \texttt{.deb} control           & \chk & \chk & \crs & \crs & \crs & \chk & \chk & \chk & \crs \\
Portage                             & 2000 & Gentoo Linux        & \bcircle{P}                               & ebuild (Bash)                   & \chk & \chk & \chkf{portage-slotting}
                                                                                                                                                                    & \crs & \chk & \chk & \chk & \chk & \crs \\
pacman                              & 2002 & Arch Linux          & \bcircle{P}                               & \texttt{PKGBUILD} (Bash)        & \chk & \chk & \crs & \crs & \crs & \crs & \crs & \chk & \crs \\
Nix~\cite{dolstra2004nix}\tfnref{nix-singular}
                                    & 2003 & Various             & \bcircle{P} \bcircle{B}                   & Nix expressions / \texttt{.drv} & \crs & \crs & \chk & \crs & \crs & \crs & \crs & \crs & \crs \\
Zero Install                        & 2003 & Various             & \bcircle{P}                               & \texttt{feed.xml}               & \chk & \crs & \chk & \crs & \crs & \crs & \chk & \crs & \crs \\
Maven
                                    & 2004 & Java                & \bcircle{P} \bcircle{B}                   & \texttt{pom.xml}                & \chkf{maven-nearest-wins}
                                                                                                                                                      & \crs & \crs & \crs & \crs & \crs & \chk & \crs & \crs \\
Gem                                 & 2004 & Ruby                & \bcircle{P}                               & Ruby                            & \chk & \crs & \crs & \crs & \crs & \crs & \crs & \crs & \crs \\
Cabal~\cite{haskell}                & 2005 & Haskell             & \bcircle{P} \bcircle{B}                   & \texttt{.cabal}                 & \chk & \crs & \chk & \crs & \chk & \crs & \chk & \crs & \crs \\
APK                                 & 2005 & Alpine Linux        & \bcircle{P}                               & \texttt{APKBUILD} (shell)       & \chk & \chk & \crs & \crs & \crs & \crs & \crs & \chk & \crs \\
tlmgr                               & 2008 & \TeX\ Live          & \bcircle{P}                               & \texttt{.tlpsrc}                & \crs & \crs & \crs & \crs & \crs & \crs & \crs & \crs & \crs \\
pip\tfnref{pip-poetry-uv}
                                    & 2008 & Python              & \bcircle{P}                               & \texttt{pyproject.toml}         & \chk & \crs & \crs & \crs & \chk & \crs & \chk & \crs & \crs \\
Homebrew                            & 2009 & macOS               & \bcircle{P}                               & Ruby                            & \crs & \chk & \chkf{homebrew-manual}
                                                                                                                                                                    & \crs & \crs & \crs & \chk & \crs & \crs \\
npm                                 & 2010 & JavaScript          & \bcircle{P}                               & \texttt{package.json}           & \chk & \crs & \chk & \chk & \crs & \crs & \chk & \crs & \crs \\
NuGet                               & 2010 & .NET                & \bcircle{P} \bcircle{B}                   & \texttt{.csproj}                & \chkf{nuget-lowest}
                                                                                                                                                      & \crs & \crs & \crs & \crs & \crs & \chk & \crs & \crs \\
Chocolatey                          & 2011 & Windows             & \bcircle{P}                               & \texttt{.nuspec}                & \chkf{chocolatey-nuget}
                                                                                                                                                      & \crs & \crs & \crs & \crs & \crs & \crs & \crs & \crs \\
Composer                            & 2012 & PHP                 & \bcircle{P}                               & \texttt{composer.json}          & \chk & \chk & \crs & \crs & \crs & \crs & \crs & \chk & \crs \\
Conda                               & 2012 & Python              & \bcircle{P}                               & \texttt{meta.yaml}              & \chk & \chk & \crs & \crs & \crs & \crs & \chk & \crs & \crs \\
\texttt{pkg}\tfnref{freebsd-pkg-ng} & 2012 & FreeBSD             & \bcircle{P}                               & BSD Make                        & \crs & \chk & \crs & \crs & \crs & \crs & \crs & \chk & \crs \\
Pub                                 & 2012 & Dart                & \bcircle{P}                               & \texttt{pubspec.yaml}           & \chk & \crs & \crs & \crs & \crs & \crs & \crs & \crs & \crs \\
DNF\tfnref{dnf-yum}
                                    & 2013 & Red Hat-based Linux & \bcircle{P}                              & \texttt{.spec}                  & \chk & \chk & \crs & \crs & \crs & \chk & \crs & \chk & \crs \\
opam~\cite{opam}                    & 2013 & OCaml               & \bcircle{P}\tfnref{opam-langagnostic}     & \texttt{.opam}                  & \chk & \chk & \crs & \crs & \crs & \chk & \chk & \crs & \chk \\
pkgman                              & 2013 & Haiku OS            & \bcircle{P}                               & \texttt{.PackageInfo}           & \chk & \chk & \crs & \crs & \crs & \crs & \crs & \chk & \crs \\
Spack~\cite{spack}                  & 2014 & HPC\tfnref{spack-hpc}
                                                                 & \bcircle{P}                               & Python                          & \chk & \chk & \chk & \crs & \chk & \crs & \chk & \chk & \crs \\
Cargo~\cite{cargo}                  & 2014 & Rust                & \bcircle{P} \bcircle{B}                   & \texttt{Cargo.toml}             & \chk & \crs & \chk & \crs & \chk & \crs & \chk & \crs & \crs \\
FuseSoC                             & 2014 & HDL\tfnref{fusesoc-hdl}
                                                                 & \bcircle{P} \bcircle{B}                   & \texttt{.core} (YAML)           & \chk & \crs & \crs & \crs & \crs & \crs & \chk & \chk & \crs \\
Bazel                               & 2015 & Multi-language      & \bcircle{P} \bcircle{B}\tfnref{bazel-build-system} & Starlark               & MVS  & \crs & \chk & \crs & \crs & \crs & \crs & \crs & \crs \\
Go modules~\cite{gomodules}         & 2018 & Go                  & \bcircle{P} \bcircle{B} \bcircle{C}       & \texttt{go.mod}                 & MVS  & \crs & \chk & \crs & \crs & \crs & \crs & \crs & \crs \\
Unison~\cite{unison}                & 2019 & Unison              & \bcircle{P} \bcircle{B} \bcircle{C}       & Unison                          & \crs & \crs & \chk & \crs & \crs & \crs & \crs & \crs & \crs \\
pack                                & 2022 & Idris 2             & \bcircle{P} \bcircle{B}                   & TOML + \texttt{.ipkg}           & \crs & \crs & \crs & \crs & \crs & \crs & \crs & \crs & \crs \\
\end{tabular}
\par\vspace{12pt}
\begin{multicols}{3}
\footnotesize\raggedright
\tfntext{toolchain-legend}{\bcircle{P}ackage manager, \bcircle{B}uild system, and/or \bcircle{C}ompiler.}
\tfntext{cpan-greedy}{CPAN installs the latest version greedily without backtracking.}
\tfntext{portage-slotting}{Limited support with `Slotting'.}
\tfntext{nix-singular}{Nix uses singular dependencies~(Appendix~\ref{appendix.singular-dependencies}) and resolves no version constraints, so most columns do not apply.}
\tfntext{maven-nearest-wins}{Maven uses a `nearest wins' heuristic over constraint solving.}
\tfntext{pip-poetry-uv}{Poetry and uv use \texttt{pyproject.toml}'s format and dependencies.}
\tfntext{homebrew-manual}{Emulated by hand with per-version formulae such as \texttt{python@3.11} -- the name mangling of \S\ref{sec.mise-en-place.concurrent} -- for ${\sim}2\%$ of packages.}
\tfntext{nuget-lowest}{NuGet defaults to lowest version without backtracking.}
\tfntext{chocolatey-nuget}{Chocolatey uses NuGet's resolution infrastructure.}
\tfntext{freebsd-pkg-ng}{\texttt{pkg} (pkgng) replaced the earlier FreeBSD \texttt{pkg\_*} tools.}
\tfntext{dnf-yum}{DNF replaced YUM as the standard package manager frontend for RPM-based distributions including Fedora.}
\tfntext{opam-langagnostic}{opam supports language-agnostic build scripts.}
\tfntext{spack-hpc}{High-performance computing.}
\tfntext{fusesoc-hdl}{Hardware description languages, such as Verilog and VHDL.}
\tfntext{bazel-build-system}{Bazel is a build system with package management functionality.}
\end{multicols}
\end{table}
\end{landscape}

\renewcommand{\footnotesize}{\oldfootnotesize}

}